\newtheorem{thm}{Theorem}
\newtheorem{prop}{Proposition}
\newtheorem{lem}{Lemma}
\newtheorem{cor}{Corollary}
\newtheorem{exam}{Example}
\newtheorem{rem}{Remark}
\def\0{{\mathbf 0}}
\newcommand{\F}{\mathbb{F}}
\newcommand{\Z}{\mathbb{Z}}
\begin{document}

\sloppy

%% Paper Title
%% You can use linebreaks \\ within to get better formatting as
%% desired. 
\title{Explicit constructions of optimal linear codes with Hermitian hulls and their application to quantum codes}
%% Author names and affiliations:
%%
%% Avoiding spaces at the end of the author lines is not a problem with
%% conference papers because we don't use \thanks or \IEEEmembership.
%%
%% For several authors with only one affiliation:
%%
% \author{
%   \IEEEauthorblockN{Hui-Ting Chang and Stefan M.~Moser}
%   \IEEEauthorblockA{Department of Electrical and Computer Engineering\\
%     National Chiao Tung University (NCTU)\\
%     Hsinchu, Taiwan\\
%     Email: \{email-of-hui-ting,email-of-stefan\}@ieee.org} 
% }
%%
%% For up to three affiliations:
%%
\author
{
{
Lin Sok\thanks{This research work is supported by Anhui Provincial Natural Science Foundation with grant number 1908085MA04,
Lin Sok is with School of Mathematical Sciences, Anhui University,  230601 Anhui,  P. R. China
$\&$ Department of Mathematics, Royal University of Phnom Penh, 12156 Phnom Penh
(email: soklin\_heng@yahoo.com) }
}
%\author{Anonymous submission to DSD}
%School of Mathematical Sciences, Anhui University,  230601 Anhui,  P. R. China
}

\date{}
%%
%% For over three affiliations, or if they all won't fit within the width
%% of the page, use this alternative format:
%%
% \author{
%   \IEEEauthorblockN{
%     Michael Shell\IEEEauthorrefmark{1},
%     Homer Simpson\IEEEauthorrefmark{2},
%     James Kirk\IEEEauthorrefmark{3}, 
%     Montgomery Scott\IEEEauthorrefmark{3} and
%     Eldon Tyrell\IEEEauthorrefmark{4}}
%   \IEEEauthorblockA{
%     \IEEEauthorrefmark{1}School of Electrical and Computer Engineering\\
%     Georgia Institute of Technology, Atlanta, Georgia 30332--0250\\ 
%     Email: see http://www.michaelshell.org/contact.html}
%   \IEEEauthorblockA{
%     \IEEEauthorrefmark{2}Twentieth Century Fox, Springfield, USA\\
%     Email: homer@thesimpsons.com}
%   \IEEEauthorblockA{
%     \IEEEauthorrefmark{3}Starfleet Academy, San Francisco, California 96678-2391\\
%     Telephone: (800) 555--1212, Fax: (888) 555--1212}
%   \IEEEauthorblockA{
%     \IEEEauthorrefmark{4}Tyrell Inc., 123 Replicant Street, Los Angeles, California 90210--4321}
% }

%% Use for special paper notices
%\IEEEspecialpapernotice{(Invited Paper)}

%% To balance the two columns, you should reduce the text-height of
%% the last page using the following command:
%%%%%%%%%%%%%%%%%%%%%%%%%%%%%%%%%%%%%%%%%%%%%%%%%%%%%%%%%%%%%%%%%%%%%
%%\addtolength{\textheight}{-9.35cm}
%%%%%%%%%%%%%%%%%%%%%%%%%%%%%%%%%%%%%%%%%%%%%%%%%%%%%%%%%%%%%%%%%%%%%
%% with an appropriate value. This command must be place on the second
%% last page, i.e., for a one-page abstract here, for a two-page
%% abstract right after the \maketitle command.

%% Create the title:
\maketitle

%% Abstract: 
%% For the final version of the accepted paper, please make sure you
%% remove the comment "THIS PAPER IS ELIGIBLE FOR THE STUDENT PAPER
%% AWARD."
%%

\begin{abstract}
We prove that any Hermitian self-orthogonal $[n,k,d]_{q^2}$ code gives rise to an $[n,k,d]_{q^2}$ code with $\ell$ dimensional Hermitian hull for $0\le \ell \le k$. We present a new method to construct Hermitian self-orthogonal $[n,k]_{q^2}$ codes with large dimensions $k>\frac{n+q-1}{q+1}$. New families of Hermitian self-orthogonal codes with good parameters are obtained; more precisely those containing almost MDS codes. By applying a puncturing technique to Hermitian self-orthogonal codes, MDS $[n,k]_{q^2}$ linear codes with Hermitian hull having large dimensions $k>\frac{n+q-1}{q+1}$ are also derived. 
New families of MDS, almost MDS and optimal codes with arbitrary Hermitian hull dimensions are explicitly constructed from algebraic curves. As an application, we provide entanglement-assisted quantum error correcting codes with new parameters.

\end{abstract}
{\bf Keywords:} Hulls, MDS codes, almost MDS codes, self-orthogonal codes, algebraic curves, algebraic geometry codes, entanglement-assisted quantum error-correcting codes\\

\section{Introduction}
%%%%%%%%%%%%%%%%%%%%%%%%%%%%%%%
Quantum error correcting codes have attracted a lot of attention recently due to their capability to protect information carrying quantum states against decoherence in quantum information systems. A lot of effort has been done to construct good quantum codes from the classical error correcting codes.

Entanglement is one of the best approaches to achieve higher rates in quantum information systems. Entanglement-assisted quantum error-correcting codes (EAQECCs) were firstly introduced by Bowen \cite{Bow}. Later, they were developed by Brun \emph{et al.} \cite{BDH06}. The authors \cite{BDH06} showed that if pre-shared entanglement between the encoder and decoder is available, the EAQECCs can be constructed via classical linear codes without self-orthogonality as in \cite{AK01}. Then Wilde {\em et al. }\cite{WilBru} proposed two methods to construct EAQECCs from classical codes; the Euclidean and Hermitian construction methods. 
The parameters of EAQECCs can be determined from those of hulls of linear codes (see \cite{GJG18}). For the works on EAQECCs from Euclidean hulls, the reader is referred to \cite{CarMesTanQiPel18,CarLiMes,LiZeng,QCM,MesTanQi,Sok1D,Sok1D2,GGJT18,LCC}. In \cite{GGJT18}, Guenda \emph{et al.} investigated the $\ell$-intersection pair of linear codes, where they completely determined the $q$-ary MDS EAQECCs of length $n \leq q+1$ for all possible parameters. EAQECCs with zero pre-shared entanglements, known as quantum stabilizer codes, can be constructed from Hermitian linear complementary dual codes using method \cite{CarMesTanQiPel18}. Now the remaining problem is to construct new $q$-ary MDS EAQECCs of length $n > q+1$ with hull dimension $\ell>1$. This can be achieved by considering Hermitian hulls of linear codes, see for example \cite{FangFuLiZhu,PerPel}. It is has been known in \cite{FangFuLiZhu} that given $n,k$ and $q$, it is a very difficult problem to construct new MDS Hermitian hull codes with parameters $[n,k]_{q^2}$ for new lengths $n$ or  for new dimensions $k>\frac{n+q-1}{q+1}$. 

Algebraic geometry codes were invented by Goppa, where in some literature they were also called geometric Goppa codes. In his paper \cite{Goppa}, Goppa showed how to construct linear codes from algebraic curves over a finite field. A very nice property of these codes is that their parameters can be determined from the degree of a divisor associated. Despite a strongly theoretical construction, algebraic geometry (AG) codes have asymptotically good parameters, and this was the first time that linear codes were proved to have improved the so-called Gilbert-Vasharmov bound. Algebraic geometry codes are good candidates for constructing quantum codes, see for example \cite{JinXing12,Jin,PerPel,Sok1D2,SokQSC}. In \cite{PerPel}, some good EAQECCs were constructed from AG codes of arbitrary Hermitian hull dimensions, but their constructions are not explicit.

In this paper, we study linear codes with Hermitian hull using tools from algebraic function fields. We introduce a new method to construct Hermitian hull $[n,k]$ codes. We prove that any Hermitian self-orthogonal $[n,k]$ code gives rise to a linear code with Hermitian hull of dimension less than $k$ and with the same parameters as the former codes, and thus new families of MDS Hermitian hull codes are obtained in Theorem \ref{thm:large-family} 1). New constructions of Hermitian self-orthogonal codes is presented in Theorems \ref{thm:embedding-new}, \ref{thm:embedding-new2} and \ref{thm:embedding-new3}, which give a generalization of the recent result \cite{SokQSC}. From an MDS Hermitian self-orthogonal code, this new method enables us to construct (almost MDS) Hermitian self-orthogonal codes and thus Hermitian hull codes with dimension $k>\frac{n+q-1}{q+1}$. More specifically, many new almost MDS Hermitian hull codes are constructed (see Theorems \ref{thm:family-AMDS} and \ref{thm:family-AMDS2}). 
By applying a puncturing technique to Hermitian self-orthogonal codes, MDS Hermitian hull codes with large dimension $k>\frac{n+q-1}{q+1}$ are derived (see Theorem \ref{thm:family-new-mds} and \ref{thm:family-AMDS2}), and these codes are new compared with those explicitly constructed in \cite{FangFuLiZhu}. Theorem \ref{thm:hermitian-3} enlarges families of Hermitian self-orthogonal codes \cite{SokQSC}. Except for few values in Theorem \ref{thm:Q-MDS}, all families of EAQECCs obtained in Theorems \ref{thm:Q-MDS}-\ref{thm:Q-punctured} are new.

The paper is organized as follows: Section \ref{section:pre} gives preliminaries and background on algebraic geometry codes. Section \ref{section:constructions} provides a new construction method for Hermitian self-orthogonal codes and for  linear codes with arbitrary Hermitian hull dimensions. In Section \ref{section:algebraic curves}, Hermitian hull codes are constructed from projective lines, elliptic curves, hyper-elliptic curves and Hermitian curves. In the last section, we give an application of Hermitian hull codes to EAQECCs.

\section{Preliminaries}\label{section:pre}
\subsection{Linear codes}
Let $\F_q$ be the finite field with $q$ elements. A 
linear code of length $n$, dimension $k$  and minimum distance $d$ over ${{\mathbb F}_q}$ is denoted as $[n,k,d]_q$ code. If $C$ is an $[n,k,d]_q$ code, then from the Singleton bound, its minimum distance is upper bounded by
$$d\le n-k+1.$$
A code is called {\em Maximum Distance Separable} ({MDS}) if its minimum distance $d=n-k+1$ and is called {\em almost} MDS (AMDS) if it is minimum distance $d=n-k$. 
A code is called {\it optimal} if it has the highest possible minimum distance for its length and dimension.

The {\em Hermitian} inner product of ${\bf{x}}=(x_1,
\dots, x_n)$ and ${\bf{y}}=(y_1, \dots, y_n)$ in $\F_{q^2}^n$ is defined by
$$<{\bf{x}},{\bf{y}}>_h=\sum_{i=1}^n x_i y_i^q.
$$ The Hermitian {\em dual} of $C$,
denoted by $C^{\perp_h}$, is the set of vectors orthogonal to every
codeword of $C$ under the Hermitian inner product. A linear code $C$ is called Hermitian self-orthogonal if $C\subseteq C^{\perp_h}$. The Hermitian hull of a linear code $
C$ is $$Hull_h(C):=C\cap C^{\perp_h}.$$ A linear code
$C$ is called {\em Hermitian} $\ell$-$dim$ hull if $\dim (Hull_h(C))=\ell$. 

For a linear code $C\subseteq \F_q^n$ and ${\bf v}=(v_1,\hdots,v_n)\in (\F_q^*)^n$, we define

$$
{\bf v}\cdot C:=\{(v_1c_1,\hdots,v_nc_n)| {\bf c}=(c_1,\hdots,c_n)\in C\}.
$$
It is easy to see that ${\bf v}\cdot C$ is a linear code if and only if $C$ is a linear code. Moreover, both codes have the same dimension, minimum Hamming distance, and weight distribution.

For undefined terms related to algebraic function fields, the reader is referred to Stichtenoth \cite{Stich}. 

Let ${\cal X}$ be a smooth projective curve of genus $g$ over $\F_q.$
We denote the field of rational functions of ${\cal X}$ by $\F_q({\cal X}).$ Function fields of
algebraic curves over a finite field are
finite separable extensions of $\F_q(x)$. Points on the curve ${\cal X}$ identified with places of the
function field $\F_q({\cal X}).$ We call a point on ${\cal X}$ rational if all of
its coordinates belong to $\F_q.$ Rational points can be identified
with places of degree one. The set of $\F_q$-rational
points on ${\cal X}$ is denoted by  ${\cal X}(\F_q)$.

We define a divisor $G$ on the curve ${\cal X}$ to be the formal sum $\sum\limits_{P\in {\cal X}}n_PP$ with only finitely many nonzeros $n_P\in \Z$. A divisor $G$ on $\cal X$ is called rational if for any $\sigma$ in $\text{Gal}(\overline{\F}_q/\F_q)$, we have $G^\sigma=G.$ 
We define the support of $G$ as $supp(G):=\{P\in {\cal X}|n_P\not=0\}$. For $G=\sum\limits_{P\in {\cal X}}n_PP$, the degree of $G$ is defined by $\deg(G):=\sum\limits_{P\in {\cal X}}n_P\deg(P)$, where $\deg (P)=|P^\sigma |$ is the size of orbit of $P$ under the action $\sigma$. 
For two divisors $G=\sum\limits_{P\in {\cal X}}n_PP$ and $H=\sum\limits_{P\in {\cal X}}m_PP$,  we say that $G\ge H$ if $n_P\ge m_P$ for all places $P\in {\cal X}$. 

For a nonzero rational function $f$ on the curve $\cal X$, we define the ``principal" divisor of $f$ as
$(f):=\sum\limits_{P\in {\cal X}}v_P(f)P,$ where $v_P$ denotes the normalized discrete valuation corresponding to the place $P$.
If $Z(f)$ and  $N(f)$ denotes the set of zeros and poles of $f$ respectively, we define the zero divisor and pole divisor of $f$, respectively by
$
\begin{array}{c}
(f)_0:=\sum\limits_{P\in Z(f)}v_{P}(f)P,
(f)_\infty:=\sum\limits_{P\in N(f)}-v_{P}(f)P.\\
\end{array}
$

For a divisor $G$ on the curve $\cal X$, we define
$${\cal L}(G):=\{f\in \F_q({\cal X})\backslash \{0\}|(f)+G\ge 0\}\cup \{0\},$$ 
and 
$${\Omega}(G):=\{\omega\in \Omega\backslash \{0\}|(\omega)-G\ge 0\}\cup \{0\},$$
where $\Omega:=\{fdx|f\in \F_q({\cal X})\}$, the set of differential forms on $\cal X$. It is well-known that both ${\cal L}(G)$ and ${\Omega}(G)$ are finite dimensional vector spaces. Moreover, for a differential form $\omega$ on $\cal X$, there exists a unique a rational function $f$ on $\cal X$ such that $\omega=fdt,$
where $t$ is a local uniformizing parameters. In this case, we define the divisor associated to $\omega$ by $(\omega)=\sum\limits_{P\in {\cal X}}v_P(\omega)P, $
where $v_P(\omega):=v_P(f).$ The divisor class of a nonzero differential form is called the canonical divisor. It is well-known that if $K$ is the canonical divisor, then $\deg (K)=2g$.

The dimension of ${\cal L}(G)$ is determined by Riemann-Roch's theorem as follows.

\begin{prop}\textnormal{\cite[Theorem 1.5.15 (Riemann-Roch)]{Stich}}\label{thm:Riemann-Roch} Let $W$ be a canonical divisor. Then, for each divisor $G$, the following holds:
$$\text{dim } {\cal L}(G) = \deg G + 1 -g + \text{dim } {\cal L}(W-G),$$
where $g$ is the genus of the smooth algebraic curve.
\end{prop}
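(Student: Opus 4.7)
The plan is to follow the classical adèle-theoretic proof from Stichtenoth. First I would introduce the adèle ring
\[
\mathbb{A}_F := \{(\alpha_P)_{P\in\mathcal{X}} : \alpha_P \in F_P,\ v_P(\alpha_P) \ge 0 \text{ for almost all } P\},
\]
and, for each divisor $G$, the $\F_q$-subspace $\mathbb{A}_F(G) := \{\alpha : v_P(\alpha_P) \ge -v_P(G) \text{ for every } P\}$. Embedding $F := \F_q(\mathcal{X})$ diagonally into $\mathbb{A}_F$, one has $F \cap \mathbb{A}_F(G) = \mathcal{L}(G)$. The elementary identity $\dim_{\F_q}\bigl(\mathbb{A}_F(G_2)/\mathbb{A}_F(G_1)\bigr) = \deg G_2 - \deg G_1$ for $G_1 \le G_2$, combined with the canonical short exact sequence
\[
0 \to \mathcal{L}(G) \to \mathbb{A}_F(G) \oplus F \to \mathbb{A}_F(G) + F \to 0,
\]
yields Riemann's inequality $\dim \mathcal{L}(G) \ge \deg G + 1 - g$, where $g$ is defined as the supremum of $\deg G + 1 - \dim \mathcal{L}(G)$. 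Setting the index of specialty $i(G) := \dim \mathcal{L}(G) - \deg G - 1 + g \ge 0$, the same machinery gives the cleaner reformulation $i(G) = \dim_{\F_q}\bigl(\mathbb{A}_F / (\mathbb{A}_F(G) + F)\bigr)$.

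Next I would introduce the space $\Omega_F$ of Weil differentials, namely $\F_q$-linear maps $\omega : \mathbb{A}_F \to \F_q$ vanishing on some $\mathbb{A}_F(G) + F$, equipped with the $F$-module structure $(f\omega)(\alpha) := \omega(f\alpha)$. For each nonzero $\omega$, one shows that the family of divisors $G$ with $\omega|_{\mathbb{A}_F(G)+F} = 0$ admits a unique maximum element $(\omega)$, whose class is, by definition, the canonical class. If I set $\Omega_F(G) := \{\omega \in \Omega_F\setminus\{0\} : (\omega) \ge G\} \cup \{0\}$, then linear duality against the quotient $\mathbb{A}_F/(\mathbb{A}_F(G)+F)$ immediately gives $\dim_{\F_q} \Omega_F(G) = i(G)$.

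The principal obstacle is to prove that $\Omega_F$ has dimension one over $F$. I would assume two $F$-independent differentials $\omega_1,\omega_2$, fix an auxiliary divisor $A$ with $(\omega_i) \ge A$ for both $i$, and for a high-degree effective divisor $B$ consider the $\F_q$-linear map
\[
\mathcal{L}(A+nB)^{\oplus 2} \longrightarrow \Omega_F(-nB), \qquad (f_1,f_2) \longmapsto f_1\omega_1 + f_2\omega_2,
\]
which is injective by the independence hypothesis. Riemann's inequality bounds the source dimension below by $2(\deg A + n\deg B + 1 - g)$, whereas the target dimension equals $i(-nB) = n\deg B - 1 + g$ once $n$ is large enough that $\mathcal{L}(-nB) = 0$. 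Comparing the two gives a contradiction for all sufficiently large $n$, forcing $\dim_F \Omega_F = 1$. This rigidity step is the crux of the argument.

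Finally, I would fix any nonzero $\omega_0 \in \Omega_F$ and set $W := (\omega_0)$. The $\F_q$-linear map $\mathcal{L}(W-G) \to \Omega_F(G)$ sending $f \mapsto f\omega_0$ is injective because $\omega_0 \ne 0$, and surjective because $\Omega_F$ is one-dimensional over $F$ together with the identity $(f\omega_0) = (f) + W$. Consequently $\dim \mathcal{L}(W-G) = \dim_{\F_q}\Omega_F(G) = i(G) = \dim \mathcal{L}(G) - \deg G - 1 + g$, which rearranges to the Riemann-Roch formula.
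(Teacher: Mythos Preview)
The paper does not give its own proof of this proposition; it simply cites \cite[Theorem 1.5.15]{Stich} and uses the result as a black box. Your proposal is a faithful outline of precisely the ad\`ele/Weil-differential proof found in Stichtenoth's book, so in that sense it matches what the paper is implicitly invoking, and the argument you sketch is correct.
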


Let $D=P_{\alpha_1}+\cdots+P_{\alpha_n}$, where $(P_{\alpha_i})_{1\le i \le n}$ are places of degree one, and $G$ a divisor with $supp(D)\cap supp(G)=\emptyset$. Define the algebraic geometry code by
$$
C_{\cal L}(D,G):=\{(f(P_{\alpha_1}),\hdots,f(P_{\alpha_n}))|f\in {\cal L}(G)\},
$$
and the differential algebraic geometry code as
$$
C_{\Omega}(D,G):=\{(\text{Res}_{P_{\alpha_1}}(\omega),\hdots,\text{Res}_{P_{\alpha_n}}(\omega))|\omega\in {\Omega}(G-D)\},
$$
where $\text{Res}_{P}(\omega)$ denotes the residue of $\omega$ at point $P.$

For ${\bf v}=(v_1,\hdots,v_n)$ with $v_i\in \F_{q^2}^*,$ define also
\begin{equation}
{\cal L}_q(G):=\{f\in {\cal L}(G)|f(\alpha)\in \F_q,\forall \alpha\in \F_{q^2}\},
\end{equation}
\begin{equation}
C_{{\cal L}_{q}}(D,G):=\{(f(P_{\alpha_1}),\hdots,f(P_{\alpha_n})|f\in {{\cal L}_{q}}(G)\},
\end{equation} 
and
\begin{equation}
C_{{\cal L}_{q}}(D,G;{\bf v}):=\{(v_1f(P_{\alpha_1}),\hdots,v_nf(P_{\alpha_n}))|f\in {\cal L}_{q}(G)\}.
\end{equation}

The parameters of an algebraic geometry code $C_{\cal L}(D,G)$ is given as follows.
\begin{prop}\textnormal{\cite[Corollary 2.2.3]{Stich}}\label{thm:distance} Assume that $2g -2 < deg(G) < n.$ Then the code $C_{\cal L}(D,G)$  has parameters $[n,k,d]$ satisfying
\begin{equation}
k=\deg (G)-g+1\text{ and } d\ge n-\deg (G).
\label{eq:distance}
\end{equation}

\end{prop}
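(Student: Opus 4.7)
The plan is to realize $C_{\cal L}(D,G)$ as the image of the evaluation map $\mathrm{ev}_D:{\cal L}(G)\to \F_q^n$, $f\mapsto (f(P_{\alpha_1}),\ldots,f(P_{\alpha_n}))$, and to extract both the dimension and the distance bound from the geometry of zeros/poles captured by the Riemann--Roch theorem (Proposition \ref{thm:Riemann-Roch}).

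For the dimension, I would first observe that the kernel of $\mathrm{ev}_D$ consists exactly of those $f\in {\cal L}(G)$ vanishing at every $P_{\alpha_i}$, i.e.\ $\ker(\mathrm{ev}_D)={\cal L}(G-D)$. Since $\deg(G-D)=\deg(G)-n<0$ by hypothesis, the space ${\cal L}(G-D)$ is trivial, and hence $\mathrm{ev}_D$ is injective. Then $k=\dim {\cal L}(G)$, and I would invoke Riemann--Roch with the assumption $\deg(G)>2g-2$: since $\deg(W-G)<0$ for a canonical divisor $W$, we have $\dim {\cal L}(W-G)=0$, and Proposition \ref{thm:Riemann-Roch} collapses to $k=\dim {\cal L}(G)=\deg(G)-g+1$.

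For the distance, I would take a nonzero codeword $c=(f(P_{\alpha_1}),\ldots,f(P_{\alpha_n}))$ of weight $w$, and let $S\subseteq \{1,\ldots,n\}$ be the set of indices where $f$ vanishes, so $|S|=n-w$. Then $f$ lies in ${\cal L}\bigl(G-\sum_{i\in S}P_{\alpha_i}\bigr)$, and nontriviality of this space forces $\deg\bigl(G-\sum_{i\in S}P_{\alpha_i}\bigr)\ge 0$, i.e.\ $\deg(G)\ge n-w$, which gives $w\ge n-\deg(G)$. Taking the minimum over nonzero codewords yields $d\ge n-\deg(G)$.

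The only subtle point is the justification that ${\cal L}(E)\ne \{0\}$ implies $\deg(E)\ge 0$ for any divisor $E$, which follows because a nonzero $f\in {\cal L}(E)$ has $(f)+E\ge 0$ and principal divisors have degree zero; this is the standard input I would cite rather than reprove. No step is particularly obstructive: the argument is essentially a bookkeeping exercise once the evaluation-map viewpoint and Riemann--Roch are in place, and the two hypotheses $2g-2<\deg(G)$ and $\deg(G)<n$ are used precisely to kill ${\cal L}(W-G)$ and ${\cal L}(G-D)$ respectively.
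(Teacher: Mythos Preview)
Your argument is correct and is precisely the standard proof of this result. Note that the paper does not supply its own proof of this proposition: it is quoted directly from \cite[Corollary 2.2.3]{Stich}, and your evaluation-map argument together with Riemann--Roch is exactly the proof given there.
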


\subsection{Quantum codes}

Let  $({\mathbb C})^{\bigotimes n}$ ($\cong \mathbb{C}^{q^n}$) be the $q^n$-dimensional Hilbert space over the complex field $\mathbb{C}$. A quantum code of length $n$ is a subspace of $\mathbb{C}^{q^{n}}$.
Let $\{|\textbf{a}\rangle =|a_{1}\rangle\bigotimes|a_{2}\rangle\bigotimes\cdots\bigotimes|a_{n}\rangle: (a_{1}, a_{2}, \ldots, a_{n}) \in \mathbb{F}^{n}_{q}\}$ be a basis of $\mathbb{C}^{q^{n}}$. We define the inner product of two quantum states
$
|\phi_1\rangle=\sum\limits_{{\bf a}\in \F_{q}^{n}}\phi_1({\bf a})|{\bf a}\rangle 
\text { and } |\phi_2\rangle=\sum\limits_{{\bf a}\in \F_{q}^{n}}\phi_2({\bf a})|{\bf a}\rangle
$
by
\begin{equation*}
\langle \phi_1|\phi_2\rangle=\sum\limits_{{\bf a}\in \F_{q}^{n}}
\overline{\phi_1({\bf a})}\phi_2({\bf a})\in {\mathbb C} \text{,} 
\end{equation*}
where  $\overline{\phi_1({\bf a})}$ is the complex conjugate of 
$\phi_1({\bf a})$. Two quantum states $|\phi_1\rangle$ and $|\phi_2\rangle$ are 
\textit{orthogonal} if $\langle \phi_1|\phi_2\rangle=0$.

The rules of $X(\textbf{a})$ and $Z(\textbf{b})$ on $|\textbf{v}\rangle \in \mathbb{C}^{q^{n}}$ ($\textbf{v} \in \mathbb{F}^{n}_{q}$) are given as
\begin{equation}
X(\textbf{a})|\textbf{v}\rangle=|\textbf{v}+\textbf{a}\rangle
\textnormal{ and }
Z(\textbf{b})|\textbf{v}\rangle=\zeta_{p}^{tr(\langle \textbf{v}, \textbf{b} \rangle_{E})}|\textbf{v}\rangle,
\label{eq:action}
\end{equation}
respectively, where $tr(\cdot)$ is the trace function from $\mathbb{F}_{q}$ to $\mathbb{F}_{p}$ and $\zeta_{p}$ is a complex primitive $p$-th root of unity. 

For ${\bf a}=(a_1,\ldots,a_{n}),{\bf b}=(b_1,\ldots,b_{n}) \in \F_{q}^{n}$, we can also write, from (\ref{eq:action}), $X({\bf a}) = X(a_{1}) \otimes \ldots \otimes X(a_{n})$ and
$Z({\bf b}) = Z(b_{1}) \otimes \ldots \otimes Z(b_{n})$ for the tensor product of
$n$ (error) operators. The set ${\cal E}_{n}\{ X({\bf a})Z({\bf b}) : {\bf a},{\bf b} \in \F_{q}^{n} \}$
is an error basis on $\mathbb{C}^{q^{n}}$.
The error group $G_{n}$ is generated by the matrices in ${\cal E}_{n}$ and is defined by
\begin{equation*}
 G_{n}:=\{\zeta_{p}^{t} X({\bf a}) Z({\bf b}) : {\bf a},{\bf b} \in \F_{q}^{n}, t \in \F_{p} \} \text{.}
\end{equation*}
For $E = \zeta_{p}^{t} X({\bf a}) Z({\bf b}) \in G_{n}$, we define the \textit{quantum weight}
${\bf wt}_{Q}(E)$ of $E$ to be the number of coordinates such that 
$(a_{i},b_{i}) \neq (0,0)$. 
A quantum code $Q$ is called a $d-1$ quantum error detecting code ($d\geq1$) if, for any pair $|\phi_1\rangle$ and $|\phi_2\rangle$ in
$Q$ with $\langle \phi_1 |\phi_2 \rangle=0$ and any $E \in G_{n}$ with ${\bf wt}_{Q}(E) \leq d-1$, 
$|\phi_1\rangle$ and $E|\phi_2\rangle$ are orthogonal. The quantum code $Q$ is said to have minimum distance $d$ if $d$ is the largest integer such that for any $|\phi_1\rangle, |\phi_2\rangle\in Q $ with $\langle \phi_1 |\phi_2 \rangle=0$ and any $E\in G_n$ with ${\bf wt}_{Q}(E) \le d-1$, we have that $ |\phi_1 \rangle$ and $E|\phi_2\rangle$ are orthogonal.
We write $[[n,k,d]]_{q}$ for a $q$-ary quantum code of length $n$, dimension $k$ and minimum distance $d$. It is well-known that a quantum code with minimum distance $d$ can detects up to $d-1$ errors and correct  up to $\frac{d-1}{2}$ quantum errors.

For $S$ being an abelian subgroup of $G_n$, we define the quantum stabilizer codes $C(S)$ by
$$
C(S):=\{|\phi \rangle : E|\phi \rangle=|\phi \rangle,\forall E\in S\}.
$$
It was shown in \cite{Cal Rai,AK01,Ket Kla} that such codes can be constructed from classical linear codes with some properties of self-orthogonality.
However, the above construction method is not applicable anymore if the subgroup $S$ of $G_n$ is non-abelian. To improve the construction, Brun \emph{et al.} \cite{BDH06}  introduced the so-called entanglement-assisted quantum error-correcting codes (EAQECCs). In their method, they extended $S$ to be a new abelian subgroup in a larger error group, and they assumed that a sender and a receiver shared a certain amount of pre-existing entangled bits (ebits), which was not subject to errors. 

We use $[[n, k, d; c]]_{q}$ to denote a $q$-ary $[[n, k, d]]_{q}$ quantum code that utilizes $c$ pre-shared entanglement pairs. 
For $c=0$, an $[[n, k, d; c]]_{q}$ EAQECC is equivalent to a quantum stabilizer code \cite{AK01}.

 The constraints among the parameters of an EAQECC are given in the following lemma \cite{LA18}.
\begin{prop}\textnormal{(Quantum Singleton Bound)}\label{lem4.1}
  For any $[[n, k, d; c]]_{q}$-EAQECC, if $d \leq \frac{n+2}{2}$, we have
  \[ 2(d-1)\le n+c-k .\]
\end{prop}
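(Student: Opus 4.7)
I would prove the bound by the standard purification and erasure-correction argument for quantum codes, adapted to the entanglement-assisted setting. First I would use the standard equivalence between $\lfloor(d-1)/2\rfloor$-error correction and $(d-1)$-erasure correction, so that an $[[n,k,d;c]]_q$-EAQECC is equivalently characterized by its ability to recover from the erasure of any subset $A$ of $d-1$ physical qudits. The hypothesis $d\le (n+2)/2$ guarantees $d-1\le n/2$, so such an $A$ fits inside the physical register (and, in fact, one can find a second disjoint copy of $A$, which is what drives the factor of $2$ in the bound).

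Next I would set up the purified picture. Introduce a reference system $R$ with $\dim R = q^k$ maximally entangled with the logical input, and let $E$ denote the $c$-qudit register of receiver-side halves of the pre-shared maximally entangled pairs; write $P$ for the $n$ transmitted physical qudits. Then $S(R) = k\log q$, $S(E) = c\log q$, and the global state on $R\otimes P\otimes E$ is pure. For any $A\subset P$ with $|A|=d-1$, the erasure-correction condition is equivalent to the decoupling $\rho_{RA}=\rho_R\otimes\sigma_A$, so that $S(RA)=k\log q+S(A)$. Purity of the global state gives $S(A^c E)=S(RA)$, and subadditivity of von~Neumann entropy gives $S(A^c)\ge S(A^c E)-S(E)=(k-c)\log q+S(A)$. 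Combining with the dimension bounds $S(A^c)\le (n-d+1)\log q$ and $S(A)\le (d-1)\log q$ (with equality in the non-degenerate regime) collapses to
\[
2(d-1)\le n+c-k,
\]
which is the asserted inequality.

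The main obstacle is handling codes for which $\sigma_A$ has rank strictly less than $q^{d-1}$, i.e.\ \emph{degenerate} codes in which the estimate $S(A)\le (d-1)\log q$ is not saturated. Inside the Singleton regime $d\le (n+2)/2$ (beyond which the bound can be violated, by Rains's examples), a standard reduction produces from any putative violator a non-degenerate code with no worse parameters, so one may take $S(A)=(d-1)\log q$ without loss of generality. A cleaner alternative route, which sidesteps the degeneracy issue entirely, is to pass through the Brun--Devetak--Hsieh correspondence associating the EAQECC with a classical symplectic code of length $n$ and $\F_{q^2}$-dimension $(n+k+c)/2$ whose minimum weight is at least $d$, and then to invoke the classical Singleton bound directly on that associated code.
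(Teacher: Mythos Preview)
The paper does not supply its own proof of this proposition; it is quoted from \cite{LA18}. Your entropic/purification strategy is the right template, but the argument as written has a genuine gap in the degenerate case. From $S(A^c)\ge (k-c)\log q + S(A)$ and $S(A^c)\le (n-d+1)\log q$ you can only conclude $n-d+1\ge k-c + S(A)/\log q$; to reach $2(d-1)\le n+c-k$ you would need $S(A)\ge (d-1)\log q$, whereas in general only the opposite inequality $S(A)\le (d-1)\log q$ holds. Your proposed patch---a ``standard reduction'' from any violator to a non-degenerate code with no worse parameters---is not actually a standard fact, and I do not believe such a reduction exists in general. The classical-code alternative you sketch covers only stabilizer EAQECCs, not arbitrary ones, so it does not establish the proposition in the stated generality.

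The missing idea is precisely the one you allude to but never deploy: use \emph{two} disjoint erasure sets. Since $d-1\le n/2$, partition the physical register as $P=A\sqcup B\sqcup C$ with $|A|=|B|=d-1$. Decoupling gives $S(RA)=S(R)+S(A)$ and $S(RB)=S(R)+S(B)$; purity of the global state on $RABCE$ gives $S(BCE)=S(RA)$ and $S(ACE)=S(RB)$; subadditivity then yields $S(R)+S(A)\le S(B)+S(CE)$ and symmetrically $S(R)+S(B)\le S(A)+S(CE)$. Adding these two inequalities cancels $S(A)+S(B)$ and leaves $S(R)\le S(CE)\le S(C)+S(E)$, i.e.\ $k\le (n-2(d-1))+c$, which is exactly the bound. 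This version is insensitive to degeneracy and makes the role of the hypothesis $d\le (n+2)/2$ transparent.
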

When the bound meets with equality, the EAQECC is called MDS, and it is called AMDS if its minimum distance is one unit less than the MDS case.

We denote the conjugate transpose of a matrix $M$ by $M^{\dag}$, that is, if $M=m_{i,j}$ then $M^{\dag}=m_{j,i}^q.$ 
\begin{lem} \textnormal{(\cite{BDH06})}\label{lem:Q-construction}
Let $P$ be the parity check matrix of an $[n, k, d]_{q^2}$ code $C$.  Then there exists an $[[n, 2k - n + c, d; c]]_{q}$ EAQECC $\cal Q$, where $c =\text{rank}(PP^{\dag})$ is the required number of maximally entangled states. In particular, if $C$ is an MDS code and $d \leq \frac{n+2}{2}$, then $\mathcal{Q}$ is an MDS EAQECC.
\end{lem}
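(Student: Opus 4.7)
The plan is to apply the Hermitian construction of Brun--Devetak--Hsieh. First, I would fix a trace-orthogonal $\F_q$-basis of $\F_{q^2}$ to identify $\F_{q^2}^n \cong \F_q^{2n}$ in such a way that the $\F_q$-valued alternating form $(\x,\y)\mapsto \mathrm{tr}_{q^2/q}\langle\x,\y\rangle_h - \mathrm{tr}_{q^2/q}\langle\y,\x\rangle_h$ becomes the canonical symplectic form on $\F_q^{2n}$. Each vector $\v\in\F_{q^2}^n$ then parametrizes a Pauli operator $E(\v)\in G_n$ modulo the center, and $E(\v), E(\w)$ commute precisely when the symplectic pairing of their images vanishes. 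The $n-k$ rows of $P$ accordingly give $n-k$ candidate stabilizer generators, whose pairwise commutation failures are encoded by the Gram matrix $PP^{\dag}$.

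Second, I would run a symplectic Gram--Schmidt reduction on the image $V\subseteq\F_q^{2n}$ of the row space of $P$, writing $V=V_0\oplus V_1$ with $V_0 = V\cap V^{\perp}$ isotropic (commuting generators) and $V_1$ non-degenerate; a careful dimension count matches the number of hyperbolic pairs in $V_1$ to $c=\mathrm{rank}(PP^{\dag})$. Next, append $c$ maximally entangled $q$-ary ebits shared with the receiver, and extend each hyperbolic pair $(\u_i,\u_i')$ by single-qudit $X$ and $Z$ operators on the sender's half of the $i$-th ebit. The resulting collection is pairwise commuting and generates an abelian subgroup $S$ of the Pauli group on $n+c$ sender qudits, with $n-k+c$ independent generators; the standard stabilizer formalism then yields an EAQECC encoding $2k-n+c$ logical qudits.

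Third, the quantum distance equals $d$ by the classical-quantum correspondence: an undetectable non-trivial logical error on the $n$ channel qudits corresponds, via the symplectic identification, to a nonzero coset representative of $C$ relative to the image of $S$ under channel projection, whose minimum Hamming weight is the minimum distance $d$ of $C$. The MDS statement then drops out of the Singleton bound of Proposition \ref{lem4.1}: substituting $d=n-k+1$ and $k_Q=2k-n+c$ gives $2(d-1)=2(n-k)=n+c-k_Q$, so the bound is tight. The main obstacle is the symplectic Gram--Schmidt step, in particular verifying that the number of hyperbolic pairs in $V_1$ equals $\mathrm{rank}(PP^{\dag})$ exactly (rather than being merely bounded by it), together with the careful bookkeeping of which error patterns remain detectable after introducing the ebit extensions.
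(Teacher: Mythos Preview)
The paper does not prove this lemma; it is quoted from \cite{BDH06} without argument, and the MDS clause is an immediate consequence of Proposition~\ref{lem4.1}. So there is no proof in the paper to compare your proposal against.

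Your sketch is the standard Brun--Devetak--Hsieh construction and is essentially correct, with one slip worth flagging. The $\F_q$-bilinear form you write, $\mathrm{tr}_{q^2/q}\langle\x,\y\rangle_h - \mathrm{tr}_{q^2/q}\langle\y,\x\rangle_h$, is identically zero: since $\langle\y,\x\rangle_h=\langle\x,\y\rangle_h^{\,q}$, the two traces coincide. The correct object is $\langle\x,\y\rangle_h-\langle\y,\x\rangle_h$ \emph{before} taking the trace; this already lands in the one-dimensional $\F_q$-subspace $\{z\in\F_{q^2}:z^q=-z\}$, and identifying that subspace with $\F_q$ (equivalently, using $\mathrm{tr}_{q^2/q}(\gamma\langle\x,\y\rangle_h)$ for a fixed $\gamma\notin\F_q$) gives the symplectic form you need. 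With this fix the Gram--Schmidt, ebit-extension, and distance arguments go through as you describe. The identification you flag as the ``main obstacle'' is in fact routine: the isotropic part $V_0$ is exactly the $\F_q$-span of $Hull_h(C^{\perp_h})=Hull_h(C)$, so $\dim_{\F_q}V_1=2(n-k)-2\dim Hull_h(C)=2\,\mathrm{rank}(PP^{\dag})$ by Lemma~\ref{lem:hull-H}, and the number of hyperbolic pairs is half of that, namely $c$.
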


\begin{lem}\textnormal{\cite{GJG18}}
\label{lem:hull-H}
Let $C$ be a classical $[n,k,d]_{q^2}$ code with parity check matrix $P$ and generator matrix $G$.
Then $\textnormal{rank}(PP^{\dag})$ and $\textnormal{rank}(GG^{\dag})$ are independent of $P$ and $G$ so that
$$
\begin{array}{ll}
\textnormal{rank}(PP^{\dag})&=n-k-\dim(Hull_h(C)) \\
&= n-k-\dim(Hull_h(C^{\perp_h})),
\end{array}
$$
and
$$
\begin{array}{ll}
\textnormal{rank}(GG^{\dag})&=k-\dim(Hull_h(C)) \\
&= k-\dim(Hull_h(C^{\perp_h})).
\end{array}
$$
\end{lem}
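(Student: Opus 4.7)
The plan is to first dispose of the second equality in each displayed formula by a trivial observation: by Hermitian double-duality, $(C^{\perp_h})^{\perp_h}=C$, so
\[
Hull_h(C^{\perp_h})=C^{\perp_h}\cap (C^{\perp_h})^{\perp_h}=C^{\perp_h}\cap C=Hull_h(C),
\]
and in particular the two hulls have the same dimension. This reduces the lemma to showing $\text{rank}(GG^{\dag})=k-\dim Hull_h(C)$ and $\text{rank}(PP^{\dag})=(n-k)-\dim Hull_h(C^{\perp_h})$; the second formula will follow from the first by applying it with $C^{\perp_h}$ in place of $C$ and $P$ in place of $G$ (since $P$ is a generator matrix of $C^{\perp_h}$).

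The key step is to identify the left kernel of $GG^{\dag}$ with the Hermitian hull of $C$. For a row vector $w\in\F_{q^2}^n$, unwinding the definition $M^{\dag}_{i,j}=m_{j,i}^q$ gives
\[
(wG^{\dag})_j=\sum_{l=1}^n w_l G_{jl}^q=\langle w,g_j\rangle_h,
\]
where $g_j$ denotes the $j$-th row of $G$. Hence $wG^{\dag}=\mathbf{0}$ if and only if $w\in C^{\perp_h}$. Substituting $w=vG$ with $v\in\F_{q^2}^k$, we obtain $vGG^{\dag}=\mathbf{0}$ iff $vG\in C^{\perp_h}$; since $vG\in C$ automatically, this is equivalent to $vG\in Hull_h(C)$. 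Because $G$ has full row rank $k$, the map $v\mapsto vG$ is an isomorphism $\F_{q^2}^k\xrightarrow{\sim} C$, so the left kernel of $GG^{\dag}$ has dimension $\dim Hull_h(C)$. Rank–nullity then yields $\text{rank}(GG^{\dag})=k-\dim Hull_h(C)$, and repeating the argument verbatim with $P$ gives $\text{rank}(PP^{\dag})=(n-k)-\dim Hull_h(C^{\perp_h})$.

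Finally, independence from the specific choice of generator or parity-check matrix comes for free. Any other generator matrix of $C$ has the form $G'=MG$ with $M\in GL_k(\F_{q^2})$, so
\[
G'(G')^{\dag}=M\,GG^{\dag}\,M^{\dag},
\]
and since $M$ and $M^{\dag}$ are invertible the rank is preserved; the same remark applies to $P$. There is no real obstacle here: the proof is a short exercise in linear algebra. The one point that requires care is the interpretation of the conjugate transpose $(\cdot)^{\dag}$ together with the $q$-th power appearing in $\langle\cdot,\cdot\rangle_h$, and making sure these line up so that $wG^{\dag}=0$ really encodes the Hermitian orthogonality condition $w\in C^{\perp_h}$.
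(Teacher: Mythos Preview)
Your argument is correct and complete. Note, however, that the paper does not supply its own proof of this lemma: it is quoted verbatim from \cite{GJG18} and used as a black box, so there is nothing in the paper to compare your approach against. Your proof is the standard one---identify the left kernel of $GG^{\dag}$ with $Hull_h(C)$ via the isomorphism $v\mapsto vG$, then apply rank--nullity---and the only delicate point, namely that $wG^{\dag}=\mathbf{0}$ encodes exactly the Hermitian orthogonality $w\in C^{\perp_h}$, you handled correctly by unwinding the definition of $(\cdot)^{\dag}$.
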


\section{New constructions of Hermitian $\ell$-$dim$ hull codes}\label{section:constructions}

In this section, we first show that any Hermitian $\ell$-$dim$ hull $[n,k,d]_{q^2}$ code can be constructed from a Hermitian self-orthogonal $[n,k,d]_{q^2}$ code. Then we present a new method to embed a Hermitian self-orthogonal code into another one.

\begin{lem} \label{lem:char2} Assume that there exists a Hermitian self-orthogonal code with parameters $[n,k,d]_{q^2}$. Then there exists a Hermitian $\ell$-$dim$ hull $[n,k,d]_{q^2}$ code for $0\le \ell \le k$.
\end{lem}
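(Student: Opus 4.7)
The natural plan is to use coordinate-wise scaling. For any $\mathbf{v}\in(\F_{q^2}^*)^n$, the scaled code $\mathbf{v}\cdot C$ has the same length, dimension, and Hamming weight distribution as $C$, so it is again an $[n,k,d]_{q^2}$ code. The task reduces to showing that for every $0\le\ell\le k$ there exists $\mathbf{v}$ making $\dim Hull_h(\mathbf{v}\cdot C)=\ell$.

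Let $G$ be a generator matrix of $C$. Then $GD$ generates $\mathbf{v}\cdot C$ where $D=\text{diag}(\mathbf{v})$, and a direct computation gives Hermitian Gram matrix $G\Lambda G^{\dag}$ with $\Lambda=\text{diag}(v_1^{q+1},\ldots,v_n^{q+1})$. Since the norm map $N:\F_{q^2}^*\to\F_q^*$, $x\mapsto x^{q+1}$, is surjective, each $\lambda_i:=v_i^{q+1}$ can be chosen independently in $\F_q^*$. By Lemma~\ref{lem:hull-H}, $Hull_h(\mathbf{v}\cdot C)$ has dimension $k-\text{rank}(G\Lambda G^{\dag})$, so the goal is to realize every rank $r\in\{0,1,\ldots,k\}$ via such a $\Lambda$. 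Exploiting self-orthogonality $GG^{\dag}=0$ gives $\sum_{i=1}^n\mathbf{g}_i\mathbf{g}_i^{\dag}=0$, where $\mathbf{g}_i$ is the $i$-th column of $G$, so one can rewrite $G\Lambda G^{\dag}=\sum_i(\lambda_i-1)\mathbf{g}_i\mathbf{g}_i^{\dag}$ and view the problem as tuning $n$ independent parameters to control the rank from $0$ upwards.

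For $r=0$, take $\Lambda=I$, recovering $C$ itself with full hull. For $0<r\le k$, I would pick a size-$r$ subset $S\subseteq\{1,\ldots,n\}$ whose columns $\{\mathbf{g}_i\}_{i\in S}$ are $\F_{q^2}$-linearly independent (possible because $\text{rank}(G)=k\ge r$), set $\lambda_i=1$ for $i\notin S$, and choose $\lambda_i\in\F_q^*\setminus\{1\}$ generically for $i\in S$; this gives $G\Lambda G^{\dag}=G_S M_S G_S^{\dag}$ with $G_S$ a rank-$r$ column-submatrix and $M_S$ an invertible diagonal. The main obstacle is that over $\F_{q^2}$ with the Hermitian pairing, $\text{rank}(G_S M_S G_S^{\dag})$ need not equal $r=\text{rank}(G_S)$ because isotropic combinations of columns can force the rank to drop; showing that generic $\lambda_i$ realize the maximum rank $r$ requires a non-vanishing-of-determinant argument in the $\lambda_i$, or an incremental construction that grows $S$ one index at a time while maintaining non-degeneracy of the running Gram matrix. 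A further subtlety is the case $q=2$, where $\F_q^*=\{1\}$ trivializes the scaling (every element of $\F_4^*$ has norm $1$); in that regime I would instead perturb selected rows of $G$ directly by vectors drawn from $C^{\perp_h}\setminus C$, arranging for the new Gram matrix to be block-diagonal with an $\ell\times\ell$ zero block on top and a nonsingular $(k-\ell)\times(k-\ell)$ block below, again yielding a hull of dimension $\ell$ without altering length, dimension, or minimum distance.
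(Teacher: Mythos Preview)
Your approach via coordinate scaling is exactly the paper's, but you have left the rank-control step as an acknowledged gap, and the paper closes it with a one-line trick you missed. Rather than picking an arbitrary independent column set $S$ and then wrestling with the possibility that $G_S M_S G_S^{\dag}$ drops rank, the paper first passes to systematic form $G=(I_k\mid A)$ (so Hermitian self-orthogonality reads $AA^{\dag}=-I_k$) and scales only within the first $k$ coordinates. With
\[
G_\ell=\bigl(\mathrm{diag}(\underbrace{\lambda,\ldots,\lambda}_{k-\ell},\underbrace{1,\ldots,1}_{\ell})\mid A\bigr)
\]
one computes directly
\[
G_\ell G_\ell^{\dag}=\mathrm{diag}(\lambda^{q+1},\ldots,\lambda^{q+1},1,\ldots,1)+AA^{\dag}
=\mathrm{diag}(\underbrace{\lambda^{q+1}-1,\ldots,\lambda^{q+1}-1}_{k-\ell},0,\ldots,0),
\]
which has rank exactly $k-\ell$ as soon as $\lambda^{q+1}\neq 1$. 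In your notation this amounts to choosing $S\subseteq\{1,\ldots,k\}$, so that the columns $\mathbf{g}_i$ for $i\in S$ are \emph{standard basis vectors}; then the rank-one terms $\mathbf{g}_i\mathbf{g}_i^{\dag}$ are diagonal matrix units with disjoint supports, and the isotropic obstruction you flagged simply cannot arise. No determinant-in-the-$\lambda_i$ argument or incremental construction is needed.

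Your remark on $q=2$ is apt: every $\lambda\in\F_4^*$ satisfies $\lambda^{q+1}=1$, so neither your scaling nor the paper's choice of $\lambda$ is available there. The paper's proof tacitly requires $q>2$ for such a $\lambda$ to exist, so this is a boundary case of the statement itself rather than a defect of your plan relative to the paper's.
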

\begin{proof} Assume that $C$ is a Hermitian self-orthogonal code with parameters $[n,k,d]$ over $\F_{q^2}$. Let $G$ be the generator matrix of $C$. Then up to equivalence, we can write $G$ as 

$$G=(I_k|A),$$
where $A$ satisfies $AA^{\dag}=-I_k.$ Choose $\lambda\in \F_{q^2}$ such that $\lambda^{q+1} \not =1$, and take 
$G_\ell=\text{diag}(\underbrace{ \lambda ,\hdots,\lambda }\limits_{k-\ell}, \underbrace{ 1,\hdots,1}\limits_{\ell}|A)$. Then it is easy to see that $\textnormal{rank}(G_\ell G_\ell^{\dag})$ is exactly $k-\ell$, and thus the result follows from Lemma \ref{lem:hull-H}.
\end{proof}

We now state sufficient conditions for an algebraic geometry $C_{{\cal L}_{q}}(D,G;{\bf v})$ to be Hermitian self-orthogonal and introduce new constructions of such codes.
\begin{lem}\textnormal{\cite{SokQSC}}\label{thm:key} Let $D=P_{\alpha_1}+\cdots+P_{\alpha_n}$ and $G=(k-1)O$ be two divisors, $\omega$ be a Weil differential form such that $H=D-G+(\omega)$ and ${\bf v}=(v_1,\hdots,v_n)$ with $v_i\in \F_{q^2}^*.$ Then the code $C_{{\cal L}_{q}}(D,G;{\bf v})$ is Hermitian self-orthogonal if the following conditions hold
\begin{enumerate}
\item $G\le H$,
\item $\text{Res}_{P_{\alpha_i}}(\omega)=v_i^{q+1}$,
\item $g+1\le k\le \lfloor \frac{n+q+2g-1}{q+1} \rfloor$, where $g$ is the genus of the defining curve.
\end{enumerate}
\end{lem}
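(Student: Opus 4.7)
The plan is to reduce Hermitian self-orthogonality to the residue theorem. Start with arbitrary $f,h\in\mathcal{L}_q(G)$ and compute
\[
\langle \v\cdot(f(P_{\alpha_1}),\dots,f(P_{\alpha_n})),\ \v\cdot(h(P_{\alpha_1}),\dots,h(P_{\alpha_n}))\rangle_h
= \sum_{i=1}^n v_i^{\,q+1}\, f(P_{\alpha_i})\, h(P_{\alpha_i})^q.
\]
Because $f,h\in\mathcal{L}_q(G)$ take values in $\F_q$ at each $\F_{q^2}$-rational place $P_{\alpha_i}$, we have $h(P_{\alpha_i})^q=h(P_{\alpha_i})$, and by hypothesis (2) the coefficient $v_i^{q+1}$ equals $\text{Res}_{P_{\alpha_i}}(\omega)$. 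Since $fh$ is regular at $P_{\alpha_i}$ and $\omega$ has at worst a simple pole there, the sum collapses to $\sum_{i=1}^n \text{Res}_{P_{\alpha_i}}(fh\,\omega)$ via the simple-pole formula $\text{Res}_P(fh\omega)=f(P)h(P)\text{Res}_P(\omega)$.

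The second step is a divisor estimate that allows the residue theorem to annihilate this sum. For $f,h\in\mathcal{L}(G)$ one has $(fh)\ge -2G$; from the hypothesis $H=D-G+(\omega)$ one reads $(\omega)=H-D+G$. Hence
\[
(fh\,\omega)\ \ge\ -2G + (H-D+G)\ =\ (H-G)-D.
\]
Hypothesis (1), namely $G\le H$, upgrades this to $(fh\omega)\ge -D$, so $fh\omega$ has at worst simple poles supported in $\{P_{\alpha_1},\dots,P_{\alpha_n}\}$ and is regular elsewhere on $\mathcal{X}$. The global residue theorem then gives $\sum_{P\in\mathcal{X}}\text{Res}_P(fh\omega)=0$, which reduces to $\sum_i\text{Res}_{P_{\alpha_i}}(fh\omega)=0$, which is exactly the vanishing of the Hermitian pairing we wanted.

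Condition (3) does not enter the orthogonality computation itself, but plays a consistency role: the lower bound $k\ge g+1$ forces $\deg G\ge g$ so that $\mathcal{L}_q(G)$ is non-trivial, while the upper bound $k\le\lfloor(n+q+2g-1)/(q+1)\rfloor$ ensures, via a Riemann--Roch count on $\Omega(2G-D)$ together with a surjectivity argument for the residue map onto $\F_{q^2}^{\,n}$, that a Weil differential $\omega$ satisfying both (1) and the prescribed residue condition (2) with all $v_i\in\F_{q^2}^*$ actually exists; in short, (3) makes the hypotheses (1)--(2) non-vacuous. The main obstacle in the proof is the divisor inequality of the second paragraph: one must carefully combine $(fh)\ge -2G$ with the definition of $H$, and remember that each $P_{\alpha_i}$ appears with multiplicity exactly one in $D$ so that the simple-pole residue formula is applicable. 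Once this inequality is in hand, everything else is routine manipulation of the Hermitian inner product and an invocation of the residue theorem.
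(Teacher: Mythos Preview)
The paper does not prove this lemma; it is quoted from \cite{SokQSC} with no argument supplied, so there is no in-paper proof to compare against. Your proof of the orthogonality itself is correct and is the standard residue-theorem argument one would expect in the cited source: for $f,h\in\mathcal{L}_q(G)$ the Hermitian pairing reduces, via $h(P_{\alpha_i})^q=h(P_{\alpha_i})$ and hypothesis~(2), to $\sum_i\mathrm{Res}_{P_{\alpha_i}}(fh\omega)$; the divisor estimate $(fh\omega)\ge(H-G)-D\ge -D$ coming from hypothesis~(1) confines the poles of $fh\omega$ to simple poles supported in $\mathrm{supp}(D)$, and the global residue theorem then annihilates the sum. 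Note also that hypothesis~(1) evaluated at $P_{\alpha_i}$ gives $v_{P_{\alpha_i}}(\omega)\ge -1$, which is what legitimises the simple-pole formula $\mathrm{Res}_{P_{\alpha_i}}(fh\omega)=f(P_{\alpha_i})h(P_{\alpha_i})\,\mathrm{Res}_{P_{\alpha_i}}(\omega)$ you invoke.

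Your discussion of condition~(3), however, misreads the logical structure of the statement. In the lemma $\omega$ is \emph{given} satisfying (1) and (2); nothing asks you to manufacture such a differential, so (3) is not an existence guarantee for $\omega$, and the ``surjectivity of the residue map'' story is beside the point. The bounds in (3) are parameter constraints carried over from \cite{SokQSC}: the lower bound $k\ge g+1$ puts $\deg G=k-1\ge g$ into the Riemann--Roch range so that the code has the intended dimension $k-g$, and the upper bound is the range in which that paper's construction is set up. You are right that neither inequality enters the orthogonality computation; but the explanation you offer for them is not accurate. This caveat aside, your main argument is sound.
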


The following lemma is useful for embedding an Hermitian self-orthogonal code into another one.
\begin{lem}\textnormal{\cite{SokQSC}}\label{lem:embedding} Assume that $D=P_{\alpha_1}+\cdots+P_{\alpha_n}$, $G=(k-1)O$, $1\le k\le \lfloor \frac{n+q-1}{q+1} \rfloor$ and $\text{Res}_{P_{\alpha_i}}(\omega)=v_i^{q+1}$ for some $v_i \in  \F_{q^2}^*$ for $1\le i\le n$.  Then an MDS Hermitian $q^2$-ary self-orthogonal code $C_{{\cal L}_q}(D,G;{\bf v})$ with parameters $[n,k]$ can be embedded into an MDS Hermitian $q^2$-ary self-orthogonal $[n+1,k+1]$ code.
\end{lem}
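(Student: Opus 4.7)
The plan is to realize the desired extension as another algebraic geometry code of the same form and to verify the Hermitian self-orthogonality and the MDS property via Lemma \ref{thm:key} together with Proposition \ref{thm:distance}. Concretely, I would pick a rational point $P_{\alpha_{n+1}}$ distinct from $O$ and from all $P_{\alpha_i}$ (possible because the hypothesis $k \le \lfloor (n+q-1)/(q+1) \rfloor$ keeps $n+1$ strictly below $q^2+1$, the number of rational points on the projective line over $\F_{q^2}$), set $D' := D + P_{\alpha_{n+1}}$ and $G' := kO$, and take $\v' := (v_1,\ldots,v_n,v_{n+1})$ for a scalar $v_{n+1}\in\F_{q^2}^*$ to be fixed below. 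The candidate code is $C' := C_{{\cal L}_q}(D',G';\v')$; by Riemann--Roch applied to $G'$ in the genus-zero setting its dimension equals $k+1$ and its length is $n+1$.

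The heart of the argument is to produce a Weil differential $\omega'$ satisfying the hypotheses of Lemma \ref{thm:key} for the enlarged data $(D',G',\v')$: namely $(\omega')\geq 2G'-D'$, which strengthens $v_O(\omega)\ge 2(k-1)$ to $v_O(\omega')\ge 2k$, preserves $v_{P_{\alpha_i}}(\omega')\ge -1$, and permits a simple pole at $P_{\alpha_{n+1}}$, together with the residue conditions $\text{Res}_{P_{\alpha_i}}(\omega')=v_i^{q+1}$ for $i=1,\ldots,n+1$. I would look for $\omega'=\varphi\cdot\omega$ where $\varphi$ is a rational function on the projective line with $\varphi(\alpha_i)=1$ for each $i\le n$, $v_O(\varphi)\ge 2$, and a simple pole at $P_{\alpha_{n+1}}$; such a $\varphi$ can be sought in a form like $\varphi(x)=1+c\,F(x)/(x-\alpha_{n+1})^{m}$ with $F(x)=\prod_{i=1}^n(x-\alpha_i)$ and parameters $c$, $m$ tuned by a small linear calculation. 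The residue theorem then forces $v_{n+1}^{q+1}=-\sum_{i=1}^n v_i^{q+1}\in\F_q$, which pins down $v_{n+1}\in\F_{q^2}^*$ up to a $(q+1)$-th root of unity.

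With such $\omega'$ in hand, Lemma \ref{thm:key} applies to $C'$: its third hypothesis $1\le k+1\le\lfloor(n+q)/(q+1)\rfloor$ is an elementary consequence of the assumed bound on $k$, and conditions (1) and (2) are built into the construction. Proposition \ref{thm:distance} then gives $d(C')\ge n+1-k$, meeting the Singleton bound for an $[n+1,k+1]$ code, so $C'$ is MDS. Finally, the embedding $C\hookrightarrow C'$ comes from the inclusion of function spaces $\mathcal{L}_q(G)\subset\mathcal{L}_q(G')$: codewords of $C$ correspond, after absorbing a simple scaling at $P_{\alpha_{n+1}}$ into the multiplier vector $\v'$, to codewords of $C'$ vanishing in the last coordinate, so that $C$ is recovered as the shortening of $C'$ at the new coordinate.

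The step I expect to be the main obstacle is the simultaneous calibration of $\varphi$: one must (i) preserve all the prescribed residues at $P_{\alpha_1},\ldots,P_{\alpha_n}$, (ii) avoid introducing new poles there, (iii) produce a simple pole with nonzero residue at $P_{\alpha_{n+1}}$, and (iv) achieve the two extra orders of vanishing at $O$. A secondary subtlety is the residue-theorem constraint $v_{n+1}^{q+1}=-\sum_{i=1}^n v_i^{q+1}$: to guarantee a valid $v_{n+1}\in\F_{q^2}^*$ one must verify that the right-hand side is nonzero in $\F_q$, a mild nondegeneracy on the input data that can be secured by a suitable initial choice of $\omega$ or of the $\alpha_i$.
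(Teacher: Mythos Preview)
Your plan diverges from the approach behind this lemma and runs into a genuine obstruction. The lemma is quoted from \cite{SokQSC} without proof here, but the intended argument is visible in the proof of Theorem~\ref{thm:embedding-new} immediately below (see the matrix ${\cal G}_i$ in (\ref{eq:Gi}) with $i=1$): one does \emph{not} adjoin a new affine evaluation point. Instead one appends to the generator matrix a single new row $(v_1\alpha_1^{k},\ldots,v_n\alpha_n^{k},\lambda_1)$ and pads the old rows with a zero in the new coordinate, choosing $\lambda_1\in\F_{q^2}$ with $\lambda_1^{q+1}=-\sum_i v_i^{\,q+1}\alpha_i^{k(q+1)}$ (this sum lies in $\F_q$, hence is a norm from $\F_{q^2}$). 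The new column $(0,\ldots,0,\lambda_1)^T$ is the extended-GRS device---morally an evaluation at $O$ itself, not at a fresh finite place.

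Your route fails for a reason you flag as a ``secondary subtlety'' but which is in fact decisive. The hypothesis $G\le H=D-G+(\omega)$ underlying Lemma~\ref{thm:key} forces $v_P(\omega)\ge 0$ at every $P\notin supp(D)\cup\{O\}$ and $v_O(\omega)\ge 2(k-1)\ge 0$, so $\omega$ has poles only at $P_{\alpha_1},\ldots,P_{\alpha_n}$ and the residue theorem gives $\sum_{i=1}^n v_i^{\,q+1}=0$. Exactly the same reasoning applied to your $\omega'$ (which must satisfy $G'\le H'$ for Lemma~\ref{thm:key} to apply, hence have poles only at $P_{\alpha_1},\ldots,P_{\alpha_{n+1}}$) yields $v_{n+1}^{\,q+1}=-\sum_{i=1}^n v_i^{\,q+1}=0$, forcing $v_{n+1}=0$ and contradicting $v_{n+1}\in\F_{q^2}^*$. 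You cannot ``secure'' this by adjusting $\omega$ or the $\alpha_i$: they are part of the given data. A smaller issue compounds this: the bound $k\le\lfloor(n+q-1)/(q+1)\rfloor$ constrains $k$ in terms of $n$, not $n$ in terms of $q$, so your claim that it forces $n+1<q^2+1$ is unfounded; in particular $n=q^2$ is permitted, where no spare affine point exists at all.
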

In \cite{SokQSC}, MDS Hermitian self-orthogonal codes are constructed up to dimension $q$. The following theorem extends the embedding Lemma \ref{lem:embedding}.
\begin{thm}\label{thm:embedding-new}Assume that $(n-1)|(q^2-1)$, $D=P_{\alpha_1}+\cdots+P_{\alpha_n}$, $G=(k-1)O$, $1\le k\le \lfloor \frac{n+q-1}{q+1} \rfloor$ and $\text{Res}_{P_{\alpha_i}}(\omega)=v_i^{q+1}$ for some $v_i \in  \F_{q^2}^*$ for $1\le i\le n$. Put $n_i=n+i$ and $k_i=k+i$ for $1\le i\le q$. Then an MDS Hermitian $q^2$-ary self-orthogonal code $C_{{\cal L}_q}(D,G;{\bf v})$ with parameters $[n,k]$ can be embedded into a Hermitian $q^2$-ary self-orthogonal code $C_i$ with parameters 
\begin{enumerate}
\item $[n+1,k+2,\ge n-k-1]_{q^2}$ if $(n-1)|(k+1)(q+1)$;
\item $[n_i-1,k_i,\ge n-k-i+1]_{q^2}$ for $2\le i\le q$ if $(n-1)|k(q+1)$.
\end{enumerate}
\end{thm}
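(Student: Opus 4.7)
The plan is to extend the one-step embedding of Lemma~\ref{lem:embedding} by enlarging both the evaluation divisor $D$ and the defining divisor $G$ more aggressively, leveraging the multiplicative symmetry supplied by $(n-1)\mid(q^2-1)$. I would realize the given code on the projective line with $O=P_\infty$, $D=\sum_{i=1}^{n}P_{\alpha_i}$, and $\omega$ a differential with $\text{Res}_{P_{\alpha_i}}(\omega)=v_i^{q+1}$; because $(n-1)\mid(q^2-1)$, the set $\{\alpha_1,\dots,\alpha_n\}$ may be taken as the $(n-1)$-th roots of a fixed element of $\F_{q^2}^*$ together with $0$, which makes every residue a monomial in the $\alpha_i$'s and easy to analyse.

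For Case~1 I would set $D'=D+P_{\alpha_{n+1}}$ for a carefully chosen new rational point $\alpha_{n+1}$, together with a new divisor $G'$ supported at $O$ whose degree is arranged so that $\dim \mathcal{L}_q(G')=k+2$. I then modify $\omega$ to a differential $\omega'$ having simple poles exactly at the $n+1$ evaluation points; the key residue computation writes each $\text{Res}_{P_{\alpha_i}}(\omega')$ as a monomial in a primitive element of $\F_{q^2}^*$, and the divisibility $(n-1)\mid(k+1)(q+1)$ is precisely the condition forcing the exponent of this monomial to be divisible by $q+1$, which is equivalent to the residue lying in $\F_q^*$. Since the norm $x\mapsto x^{q+1}\colon \F_{q^2}^*\to \F_q^*$ is surjective, one may then extract $v_i'\in\F_{q^2}^*$ with $(v_i')^{q+1}=\text{Res}_{P_{\alpha_i}}(\omega')$. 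Hermitian self-orthogonality of $C_{\mathcal{L}_q}(D',G';\mathbf{v}')$ follows from the residue-theorem mechanism of Lemma~\ref{thm:key}, and the claimed parameters from Proposition~\ref{thm:distance}. For Case~2 I would iterate the same construction, successively adjoining rational places and growing $G$ at $O$ so that after $i-1$ new points the dimension of $\mathcal{L}_q(G')$ equals $k+i$; the divisibility $(n-1)\mid k(q+1)$ plays the analogous residue-location role at each step, and the shift from $k+1$ in Case~1 to $k$ in Case~2 reflects that here the growth of $G$ at $O$ absorbs the factor that in Case~1 lived on a new rational place.

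The main obstacle will be pinning down the precise form of $\omega'$ so that two competing constraints are met simultaneously: first, each of its residues at the evaluation points must be a norm from $\F_{q^2}^*$ to $\F_q^*$, which is exactly where the divisibility hypothesis enters; second, for every $f,g\in\mathcal{L}_q(G')$ the product $fg\omega'$ must be regular off $\text{supp}(D')$, so that the residue theorem on $\mathbb{P}^1$ yields $\sum_i (v_i')^{q+1}f(\alpha_i)g(\alpha_i)=0$ — this second requirement bounds how far $G$ may be grown and is what preserves the upper bound $k\le\lfloor(n+q-1)/(q+1)\rfloor$ from the hypothesis. Once both are in hand, the distance inequalities $d\ge n-k-1$ (Case~1) and $d\ge n-k-i+1$ (Case~2) are immediate consequences of Proposition~\ref{thm:distance} applied to $G'$ on $\mathbb{P}^1$.
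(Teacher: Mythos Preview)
Your proposal has a structural gap: the framework of Lemma~\ref{thm:key} cannot reach the dimensions claimed in the theorem. Recall that Hermitian self-orthogonality of $C_{\mathcal L_q}(D',G';\mathbf v')$ on $\mathbb P^1$ via the residue mechanism forces the inequality $(k'-1)(q+1)\le n'-2$, where $n'=\deg D'$ and $k'$ is the code dimension; this is exactly the source of the bound $k'\le\lfloor (n'+q-1)/(q+1)\rfloor$ in condition~3 of Lemma~\ref{thm:key}. In Case~1 you want $n'=n+1$ and $k'=k+2$, so you would need $(k+1)(q+1)\le n-1$. But the hypothesis only gives $(k-1)(q+1)\le n-2$, and the theorem is meant to apply precisely when $k$ is near this maximum, where $(k+1)(q+1)\approx n+2q\gg n-1$. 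Case~2 is worse: adding $i-1$ evaluation points lets the dimension bound grow by roughly $(i-1)/(q+1)$, not by $i$. The entire purpose of Theorem~\ref{thm:embedding-new} is to \emph{break} the barrier $k\le\lfloor (n+q-1)/(q+1)\rfloor$, so any argument that stays inside Lemma~\ref{thm:key} is doomed from the outset.

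The paper's construction is not an AG code on an enlarged divisor at all. It keeps the \emph{same} $n$ evaluation points and instead appends $i$ artificial coordinates: row $k+j$ of the generator matrix is $(v_1\alpha_1^{k+j-1},\ldots,v_n\alpha_n^{k+j-1}\mid 0,\ldots,0,\lambda_j,0,\ldots,0)$. The extra entry $\lambda_j$ is chosen so that $\lambda_j^{q+1}=-\sum_t v_t^{q+1}\alpha_t^{(k+j-1)(q+1)}$, absorbing the (generally nonzero) diagonal Hermitian inner product of that row with itself. The off-diagonal inner products $\sum_t v_t^{q+1}\alpha_t^{s+tq}$ are then shown to vanish by reducing the exponent modulo $n-1$ using $\alpha_t^{n-1}=1$; this is where the divisibility hypotheses $(n-1)\mid(k+1)(q+1)$ and $(n-1)\mid k(q+1)$ enter --- they control these exponent reductions, not the norm condition on residues as you suggested. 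Finally, in Case~1 one observes that $\lambda_2$ vanishes and punctures that coordinate to reach length $n+1$; in Case~2 one punctures one zero column to reach length $n+i-1$. You will need to abandon the residue-theorem route and work directly with this explicit matrix and power-sum bookkeeping.
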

\begin{proof}Take $U_{n-1}=\{ \alpha\in \F_{q^2}|\alpha^{n-1}=1\}=\{\alpha_1,\hdots,\alpha_{n-1}\}$, and set $U=\{\alpha_1,\hdots,\alpha_{n-1},\alpha_n\}$, where $\alpha_n=0$. It should be noted that under the assumption in the theorem, there exists an MDS Hermitian self-orthogonal code $C_{{\cal L}_q}(D,G;{\bf v})$ with parameters $[n,k]_{q^2}$, where $D=\sum\limits_{\alpha\in U}P_\alpha=P_{\alpha_1}+\cdots+P_{\alpha_n}$. 
Consider the code $C_i$ with its generator matrix ${\cal G}_i$ written as follows:
{\scriptsize
\begin{equation}
{\cal G}_i=\left(
\begin{array}{llllllll}
v_1&\hdots&v_{n}&0&0&\cdots&0\\
v_1\alpha_1&\cdots&v_{n}\alpha_{n}&0&0&\cdots&0\\
\vdots&\cdots&\vdots&0&0&\cdots&0\\
v_1\alpha_1^{k-1}&\cdots&v_{n}\alpha_{n}^{k-1}&0&0&\cdots&0\\
v_1\alpha_1^{k}&\cdots&v_{n}\alpha_{n}^{k}&\lambda_1&0&\cdots&0\\
v_1\alpha_1^{k+1}&\cdots&v_{n}\alpha_{n}^{k+1}&0&\lambda_2&\cdots&0\\
\vdots&\vdots&\vdots&\vdots&\vdots&\cdots&\vdots\\
v_1\alpha_1^{k+i-1}&\cdots&v_{n}\alpha_{n}^{k+i-1}&0&0&\cdots&\lambda_i\\
\end{array}
\right).
\label{eq:Gi}
\end{equation}
}
Note that the first $k$ rows of ${\cal G}_i$ generates $C_{{\cal L}_q}(D,G;{\bf v})$. We now calculate the minimum distance of $C_i$.
First, observe that by puncturing the last $i$ coordinates of $C_i$, we obtain an MDS code with parameters $[n,k_i,n-k_i+1]$. Hence, the code $C_i$ has parameters $[n_i,k_i,\ge n-k-i+1]$. 
\begin{enumerate}
\item for $i=2$, consider the following system of $(k+1)(q+1)+1$ equations with $n$ indeterminates $v_1^{q+1},\hdots,v_n^{q+1}$ defined by

\begin{equation}
M{\bf v}={\bf a},
\label{eq:lambdai}
\end{equation}
where 
{
$$
M=\left(
\begin{array}{cccc}
1&\cdots&1\\
\alpha_1&\cdots&\alpha_n\\
\vdots&\cdots&\vdots\\
\alpha_1^{(k-1)(q+1)}&\cdots&\alpha_n^{(k-1)(q+1)}\\
\vdots&\cdots&\vdots\\
\alpha_1^{k(q+1)}&\cdots&\alpha_n^{k(q+1)}\\
\vdots&\cdots&\vdots\\
\alpha_1^{(k+1)(q+1)}&\cdots&\alpha_n^{(k+1)(q+1)}\\
\end{array}
\right),
%$$
%}
%%%%%%%%%%%%%%%%%%%%%%%%%%%%%%
%{
%$$
{\bf v}=\left(
\begin{array}{c}
v_1^{q+1}\\
v_2^{q+1}\\
\vdots\\
v_n^{q+1}
\end{array}
\right),
{\bf a}=
\left(
\begin{array}{c}
0\\
0\\
\vdots\\
0\\
-\lambda_1^{q+1}\\
0\\
\vdots\\
0\\
-\lambda_2^{q+1}\\
\end{array}
\right).
$$
}
Then $C_2$ is Hermitian self-orthogonal if and only if (\ref{eq:lambdai}) has a solution. It should be noted that $\lambda_1$ does exist by Lemma  \ref{lem:embedding}. 
Since the first $k+1$ rows $g_1,\hdots,g_k,g_{k+1}$ of ${\cal G}_2$ generates a Hermitian self-orthogonal $[n+2,k+1]$ code, it remains to check that $<g_{k+1},g_{k+2}>_h=0$ and $<g_{k+2},g_{k+2}>_h=-\lambda_2^{q+1}$ for some $\lambda_2\in \F_{q^2}.$
Now under the condition $(n-1)|(k+1)(q+1)$, we have that $(\alpha_1^{(k+1)(q+1)},\cdots, \alpha_n^{(k+1)(q+1)})=(1,\hdots,1,0)$, and so $v_1^{q+1}\alpha_1^{(k+1)(q+1)}+\cdots+ v_n^{q+1}\alpha_n^{(k+1)(q+1)}=-\lambda_2^{q+1}$ has a solution. Thus, $\lambda_2$ is well determined (here $\lambda_2=v_n$). Now, we check that $<g_{k+1},g_{k+2}>_h=0$, that is, 
\begin{equation}
\sum\limits_{i=1}^n\alpha_i^{kq+k+1}v_i^{q+1}=0.
\label{eq:gkandgk1}
\end{equation}
\begin{enumerate}
\item If $k(q+1)+1\le n$, then following the same reasoning as in the proof of \cite[Lemma 4]{SokQSC}, (\ref{eq:gkandgk1}) holds with $v_i^{q+1}=\text{Res}_{P_{\alpha_i}}(\omega)$.
\item  If $k(q+1)+1 > n$, then we can write $k(q+1)+1=(n-1)A+B$ with $0\le B<(n-1)<k(q+1)$ and $\alpha_i^{kq+k+1}=\alpha_i^{B}$, and thus (\ref{eq:gkandgk1}) also holds.
\end{enumerate}
Hence, the system (\ref{eq:lambdai}) has a solution. Moreover, with $\lambda_2=0$, one can puncture the $(n+2)$-th coordinate and obtain the code with parameters $[n+1,k+2,\ge n-k-1]$, and this completes the proof.
\item for $i>2$, with $(n-1)|k(q+1)$, the Hermitian self-orthogonality follows from the same reasoning as point 1). The dimension of $C_i$ can be taken up to $k+q$.
\end{enumerate}
\end{proof}

\begin{rem}\label{rem:1} Given a Hermitian self-orthogonal code $C_i$ with its generator matrix ${\cal G}_i$ determined by (\ref{eq:Gi}) and $(g_t)_{1\le t\le n}$ being its row, it is not difficult to see that if $\gamma_{t-1}:=<g_t,g_t>_h-\sum\limits_{j=1}^nv_j^{q+1}\alpha_j^{(t-1)(q+1)}$, then 
\begin{enumerate}
\item $\gamma_0=\cdots=\gamma_{k-1}=0$;
\item $\gamma_{t}=\gamma_{t+q-1}$ for $1\le t\le k-1$;
\item $\gamma_{k-1+i}=\gamma_{k-1+i+q-1}=\lambda_i^{q+1}=\lambda_{i+q-1}^{q+1}$ for $1\le i \le q$;
\item for $1\le i\le q$, there are at most $q-k+1$ non-zero $\lambda_i$.
\end{enumerate}
For some special vaule $k$, we can easily count the number of $i$'s such that $\lambda_i= 0$. For instance for $k=q-1$, we have $\gamma_0=\gamma_1=\cdots=\gamma_{q-2}=\gamma_{q}=\cdots=\gamma_{2q-3}=0$ and $\gamma_{q-1}=\gamma_{2q-2}=\lambda_1$.
\end{rem}
In the sequel for $(\lambda_j)_{1\le j\le i}$ defined by (\ref{eq:Gi}), we denote
\begin{equation}
I=\{j|\lambda_j\not =0\},
\end{equation}
and write $\sharp I$ for the size of $I$.

\begin{cor}\label{cor:explicit-amds}  We have the following existence:
\begin{enumerate}
\item there exist AMDS Hermitian self-orthogonal codes with parameters $[q^2+1,k]_{q^2}$ for $k=q+1,\hdots,2q-2$;
\item  there exists an AMDS Hermitian self-orthogonal code with parameters $[2(q+1)+2,k+2]_{q^2}$;
\item there exist Hermitian self-orthogonal codes with parameters $[2(q+1)+1+i-1,k_i,\ge n-k-i+1]_{q^2}$ for $2\le i\le q$.
%\item there exist Hermitian self-dual codes with parameters $[q+1+1+i-1,k_i,\ge n-k-i+1]_{q^2}$ for $2\le i\le q$.
\end{enumerate}
\end{cor}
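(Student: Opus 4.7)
The plan is to apply Theorem \ref{thm:embedding-new} uniformly to three carefully chosen MDS Hermitian self-orthogonal ``seed'' codes, reading off the parameters of each part from the appropriate case of the theorem. Wherever AMDS is claimed I will pair the distance bound $d\ge n-k$ delivered by the theorem with Lemma \ref{lem:embedding} to rule out the MDS alternative $d=n-k+1$.

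For Part 1 I would start from an MDS Hermitian self-orthogonal $[q^{2},q-1]_{q^{2}}$ seed, obtained by iterating Lemma \ref{lem:embedding}. The divisibility $(n-1)\mid k(q+1)$ required by Theorem \ref{thm:embedding-new}(2) becomes $(q^{2}-1)\mid(q-1)(q+1)=q^{2}-1$, which holds. The crucial input is Remark \ref{rem:1} specialized to $k=q-1$: items (1) and (2) force $\gamma_{q}=\cdots=\gamma_{2q-3}=0$, while item (3) gives $\gamma_{q-1}=\gamma_{2q-2}=\lambda_{1}^{q+1}$. Thus in the generator matrix (\ref{eq:Gi}) with $i=q-1$ we have $\lambda_{2}=\cdots=\lambda_{q-1}=0$, so $\sharp I=1$ and the $q-2$ corresponding columns are identically zero and may be deleted simultaneously. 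What remains is a Hermitian self-orthogonal $[q^{2}+1,\,2q-2]_{q^{2}}$ code, and its subcodes spanned by the first $k$ rows produce the whole family $[q^{2}+1,k]_{q^{2}}$ for $q+1\le k\le 2q-2$. To conclude AMDS, observe that Lemma \ref{lem:embedding} forbids MDS Hermitian self-orthogonal $[q^{2}+1,k]_{q^{2}}$ codes as soon as $k>\lfloor(q^{2}+q)/(q+1)\rfloor=q$, so for $k\ge q+1$ the code cannot be MDS and the bound $d\ge n-k$ must be met with equality.

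For Parts 2 and 3 I would use an MDS Hermitian self-orthogonal $[2(q+1)+1,k]_{q^{2}}$ seed, again produced via Lemma \ref{lem:embedding}. For Part 2, choosing $k$ odd makes the Case 1 hypothesis $(n-1)\mid(k+1)(q+1)$, i.e.\ $2(q+1)\mid(k+1)(q+1)$, hold; Theorem \ref{thm:embedding-new}(1) then delivers the $[2(q+1)+2,\,k+2]_{q^{2}}$ code, and AMDS is obtained by the same exclusion as above. For Part 3, taking $k$ even makes the Case 2 hypothesis $2(q+1)\mid k(q+1)$ hold, and applying Theorem \ref{thm:embedding-new}(2) for each $i\in\{2,\ldots,q\}$ yields the listed codes directly with the claimed minimum distance bound.

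The main obstacle is the simultaneous puncturing in Part 1: the statement of Theorem \ref{thm:embedding-new} removes only a single coordinate, so to drop from length $q^{2}+q-2$ down to $q^{2}+1$ while retaining dimension $2q-2$ I have to show that deleting all $q-2$ zero columns at once still preserves Hermitian self-orthogonality, and that the truncated subcodes at the intermediate dimensions $q+1,\ldots,2q-3$ all meet the AMDS distance. Once this joint puncturing is justified via Remark \ref{rem:1}, the remaining two parts reduce to a short parity check on~$k$.
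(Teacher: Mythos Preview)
Your overall strategy---applying Theorem~\ref{thm:embedding-new} to the seeds $[q^2,q-1]_{q^2}$ and $[2(q+1)+1,k]_{q^2}$ and then puncturing the columns in which $\lambda_j=0$---is exactly the paper's approach. The paper opens its proof with a direct computation of the sums $\sum_i v_i^{q+1}\alpha_i^{k'(q+1)}$ (nonzero iff $k\mid k'$) to decide which $\lambda_j$ vanish; you reach the same conclusion via Remark~\ref{rem:1}, and these are equivalent. Your ``main obstacle'' about simultaneous puncturing is a fair observation, but once every $\lambda_j$ with $j\ge 2$ is shown to be zero the corresponding columns of~$\mathcal G_i$ are identically zero, so deleting them neither changes Hermitian inner products nor the minimum distance.

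There is, however, a genuine gap in your AMDS justification. Lemma~\ref{lem:embedding} is a constructive \emph{embedding} statement: it says that a given MDS Hermitian self-orthogonal $[n,k]_{q^2}$ code sits inside one of parameters $[n+1,k+1]_{q^2}$. It carries no nonexistence content whatsoever, so it cannot ``forbid'' MDS Hermitian self-orthogonal $[q^2+1,k]_{q^2}$ codes for $k>q$, and your exclusion argument collapses. (The paper is terse here and simply asserts AMDS; if you want an honest argument you must exhibit a codeword of weight exactly $q^2+1-k$ in the punctured code---for $k=q+1$ the polynomial $x^q-x$ does this---but Lemma~\ref{lem:embedding} will not do the work.)

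A minor point on Parts~2 and~3: the paper fixes the single seed dimension $k=2$ (the maximal value for $n=2(q+1)+1$) and applies Theorem~\ref{thm:embedding-new}(2) uniformly, observing that $\lambda_i=0$ precisely for even~$i$; Part~2 is then just the case $i=2$ with one zero column removed. Your plan instead toggles between $k=1$ (to invoke Case~1 of the theorem) and $k=2$ (for Case~2). That also produces valid codes, but it makes the symbol $k$ mean different things in the two items and is not the paper's route.
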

\begin{proof}First note that
if $(n-1)|(q^2-1)$, $(n-1)|k(q+1)$, and if there exists an MDS Hermitian self-orthogonal $[n,k]_{q^2}$ code with $k\le \frac{n+q-1}{q+1}$, then the following holds:
\begin{itemize}
\item $\sum\limits_{i=1}^nv_i^{q+1}\alpha_i^{k(q+1)}=\sum\limits_{i=1}^nv_i^{q+1}\alpha_i^{k'(q+1)}\not=0\text{ if } k| k'$;\\
\item $\sum\limits_{i=1}^nv_i^{q+1}\alpha_i^{k'(q+1)}=0\text{ if } k\not| k'$.\\
\end{itemize}
For the proof of 1), we take $n-1=q^2-1$ and $k=q-1$. Then from Theorem \ref{thm:embedding-new} 2), the codes $C_i$ is Hermitian self-orthogonal. Moreover, with $n-1=q^2-1$ and $k=q-1$, we get that $\lambda_i=0$ for $i=2,\hdots,q-1$. Finally, by puncturing the zero coordinates in the generator matrix ${\cal G}_i$, we get the result as claimed.\\
For the proof of 2) and 3), we take $n-1=2(q+1)$. In this case, the largest dimension $k$ equals to $2$. Moreover, we have that $\lambda_i\not=0$ for $i$ odd, and $\lambda_i=0$ for $i$ even.
%For the proof of 4), we take $n-1=q+1$, and thus the largest dimension $k$ equals to $2$. Now, the rest follows from Theorem \ref{thm:embedding-new} 2).
\end{proof}
We illustrate the constructions of Hermitian self-orthogonal codes in the above corollary with the following two examples and give some parameters of Hermitian self-dual codes as follows:
$[6, 3, 4]_{2^2}$,
$[8, 4, 4]_{3^2}$,
$[10, 5, 6]_{4^2}$,
$[12, 6, 5] _{5^2}$,
$[16, 8, 6] _{7^2}$,
$[18, 9, 6] _{8^2}$,
$[20, 10, 7] _{9^2}$,
$[24, 12, 7] _{11^2}$.
These parameters can be obtained by embedding an MDS Hermitian self-orthogonal $[q+2,1]$ code.
\begin{exam} \label{ex:1}Take $q=5$, $n=25$. Then from \cite{SokQSC}, there exists a Hermitian self-dual code $C$ with parameters $[25,4,22]_{25}$. Using the above embedding lemma, we obtain, from the code $C$, Hermitian self-orthogonal codes $C_j$ with parameters $[25+j,4+j,\ge 22-j]_{25}$ for $1\le j\le 5$. 
Using Magma \cite{Mag}, we give the generator matrix of the Hermitian self-orthogonal code $C_j$ for $j=5.$
{\scriptsize
$$\left(
\begin{array}{rlllllll}
\beta \beta \beta \beta \beta \beta \beta \beta \beta \beta \beta \beta \beta \beta \beta \beta
    \beta \beta \beta \beta \beta \beta \beta \beta \beta& 0 0 0 0 0\\
0 \alpha w^{11} 4 \delta w^{14} \theta w^{16} w^{17} 3 w^{19} \gamma w^{21} \beta w^{23} 1 w \lambda w^3
    w^4 w^5 2 w^7 w^8 w^9& 0 0 0 0 0\\
0 \beta 1 \lambda w^4 2 w^8 \alpha 4 w^{14} w^{16} 3 \gamma \beta 1 \lambda w^4 2 w^8 \alpha 4 w^{14}
    w^{16} 3 \gamma& 0 0 0 0 0\\
0 \alpha \delta w^{16} w^{19} \beta w w^4 w^7 \alpha \delta w^{16} w^{19} \beta w w^4 w^7 \alpha
    \delta w^{16} w^{19} \beta w w^4 w^7& 0 0 0 0 0\\
0 \beta \lambda 2 \alpha w^{14} 3 \beta \lambda 2 \alpha w^{14} 3 \beta \lambda 2 \alpha w^{14} 3 \beta \lambda
    2 \alpha w^{14} 3 &\lambda 0 0 0 0\\
0 \alpha \theta \gamma w 2 w^{11} w^{16} w^{21} \lambda w^7 4 w^{17} \beta w^3 w^8 \delta 3 w^{23} w^4
    w^9 w^{14} w^{19} 1 w^5& 0 0 0 0 0\\
0 \beta w^4 \alpha w^{16} \beta w^4 \alpha w^{16} \beta w^4 \alpha w^{16} \beta w^4 \alpha w^{16}
    \beta w^4 \alpha w^{16} \beta w^4 \alpha w^{16}& 0 0 0 0 0\\
0 \alpha w^{17} 1 w^7 w^{14} w^{21} w^4 w^{11} 3 w w^8 \theta \beta w^5 4 w^{19} \lambda w^9 w^{16}
    w^{23} 2 \delta \gamma w^3& 0 0 0 0 0\\
0 \beta 2 w^{14} \beta 2 w^{14} \beta 2 w^{14} \beta 2 w^{14} \beta 2 w^{14} \beta 2 w^{14} \beta
    2 w^{14} \beta 2 w^{14} &0 0 0 0 \lambda\\
\end{array}
\right),
$$
}
where $w$ is a primitive element of $\F_{q^2}$, $\lambda=w^2$, $\alpha=w^{10}$, $\delta=w^{13}$, $\theta=w^{15}$, $\gamma=w^{20}$ and $\beta=w^{22}$.

The code $C_5$ has parameters $[30,9,18]_{25}$. By puncturing the $27$-th, $28$-th and $29$-th coordinates, we obtain a Hermitian self-orthogonal code with parameters $[27,9,18]_{25}$. From the generator matrix of the code, we can also obtain Hermitian self-orthogonal codes with parameters 
$[26, 5, 22]_{25}$, $[27, 6, 20]_{25}$, $[28, 7, 19]_{25}$, $[29, 8, 18]_{25}$, and by puncturing the zero coordinates, we get the codes with parameters $[26, 5, 22]_{25}$, $[26, 6, 20]_{25}$, $[26, 7, 19]_{25}$, $[26,8,18]_{25}$.
\end{exam}

\begin{exam}\label{ex:2} Take $q=7$, $n=17$. Then from \cite{SokQSC}, there exists a Hermitian self-dual code $C$ with parameters $[17,2,16]_{49}$. Using the above embedding lemma, we obtain, from the code $C$, Hermitian self-orthogonal codes $C_j$ with parameters $[17+j,2+j,\ge 16-j]_{49}$ for $1\le j\le 7$. Using Magma \cite{Mag}, we give the generator matrix of the Hermitian self-orthogonal code $C_j$ for $j=7.$
{\scriptsize
$$\left(
\begin{array}{rl}
w^{45} \beta \beta \beta \beta \beta \beta \beta \beta \beta \beta \beta \beta \beta \beta \beta
    \beta &0 0 0 0 0 0 0\\
0 w^{22} w^{25} w^{28} w^{31} w^{34} w^{37} 5 w^{43} \beta w w^4 w^7 w^10 w^{13} 2 w^{19}& 0 0 0 0
    0 0 0\\
0 \beta w^4 w^{10} 2 w^{22} w^{28} w^{34} 5 \beta w^4 w^{10} 2 w^{22} w^{28} w^{34} 5& \lambda 0 0 0 0
    0 0\\
0 w^{22} w^{31} 5 w w^{10} w^{19} w^{28} w^{37} \beta w^7 2 w^{25} w^{34} w^{43} w^4 w^{13}& 0 0 0 0
    0 0 0\\
0 \beta w^{10} w^{22} w^{34} \beta w^{10} w^{22} w^{34} \beta w^{10} w^{22} w^{34} \beta w^{10} w^{22}
    w^{34}& 0 0 \lambda 0 0 0 0\\
0 w^{22} w^{37} w^4 w^{19} w^{34} w 2 w^{31} \beta w^{13} w^{28} w^{43} w^{10} w^{25} 5 w^7& 0 0 0 0
    0 0 0\\
0 \beta 2 w^{34} w^4 w^{22} 5 w^{10} w^{28} \beta 2 w^{34} w^4 w^{22} 5 w^{10} w^{28}& 0 0 0 0 \lambda
    0 0\\
0 w^{22} w^{43} 2 w^{37} w^{10} w^{31} w^4 w^{25} \beta w^{19} 5 w^{13} w^{34} w^7 w^{28} w& 0 0 0 0
    0 0 0\\
0 \beta w^{22} \beta w^{22} \beta w^{22} \beta w^{22} \beta w^{22} \beta w^{22} \beta w^{22} \beta
    w^{22}& 0 0 0 0 0 0 \lambda\\
    \end{array}
\right),
$$
}
where $w$ is a primitive element of $\F_{q^2}$, $\lambda=w^3$ and $\beta=w^{46}$.

The code $C_7$ has parameters $[24,9,10]_{49}$. By puncturing the $19$-th, $21$-th and $23$-th coordinates, we obtain a Hermitian self-orthogonal code with parameters $[21,9,10]_{49}$. From the above matrix, we can also obtain Hermitian self-orthogonal codes with parameters 
%$[18,3,\ge 15]$, $[19,4,\ge 14]$, $[20,5,13]$, $[21,6,12]$, $[22,7,11]$, $[23,8,10]$.
$[18,3,16]_{49}$, $[19,4,14]_{49}$, $[20,5,14]_{49}$, $[21,6,12]_{49}$, $[22,7,12]_{49}$ $[23,8,10]_{49}$, and by puncturing the zero coordinates, we get the codes with parameters $[18,3,16]_{49}$, $[18,4,14]_{49}$, $[19,5,14]_{49}$, $[19,6,12]_{49}$, $[20,7,12]_{49}$, $[20,8,10]_{49}$, respectively. 
\end{exam}

%Let $n_i=i(q+1)$ and $U_i=\{\alpha\in \F_{q^2}|\alpha^{n_i}=1\}$. Let $\beta_1\in U_2\backslash U_1$ and $\beta_2\in U_4\backslash U_2$. Put $U_1=\{u_1,\hdots,u_{q+1}\}$ and $U=U_1\cup \beta_1U_1\cup \beta_2U_1\cup \{0\}$, say $U=\{\alpha_1,\hdots,\alpha_{n-1},\alpha_n\}$, where $\alpha_n=0$. Set $D=\sum\limits_{\alpha\in U}P_\alpha$ and $G=(k-1)O.$ 
%It was already shown in \cite{SokQSC} that there exists some vector $\bf v$ such that the  code $C_{{\cal L}_q}(D,G;{\bf v})$ is an MDS Hermitian self-orthogonal $[3(q+1)+1,3]$ code. We now embed this code into another Hermitian self-orthogonal code.
\begin{thm}\label{thm:embedding-new2} Let $N_j=j(q+1)$ and $U_j=\{\alpha\in \F_{q^2}|\alpha^{N_j}=1\}$. Assume that $a=3$ or $a=4$ and $a|(q-1)$. Let $\beta_1\in U_2\backslash U_1$ and $\beta_2\in U_a\backslash U_2$. Put $U_1=\{u_1,\hdots,u_{q+1}\}$ and $U=U_1\cup \beta_1U_1\cup \beta_2U_1\cup \{0\}$.  Put $n=3(q+1)+1$, $k_i=3+i$ and $n_i=n+i$. Then 
\begin{enumerate}
\item there exists an AMDS Hermitian self-orthogonal  $[n+2,5]_{q^2}$ code;
\item  there exists a Hermitian self-orthogonal  $[n+i,3+i,\ge n-2-i]_{q^2}$ code.
\end{enumerate}
\end{thm}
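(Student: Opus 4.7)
The plan is to adapt the embedding argument of Theorem \ref{thm:embedding-new} to the three-coset evaluation set $U = U_1 \cup \beta_1 U_1 \cup \beta_2 U_1 \cup \{0\}$. One first verifies $|U| = n = 3(q+1)+1$: since $\beta_1 \in U_2 \setminus U_1$ forces $\beta_1^{q+1} = -1$ and $\beta_2 \in U_a \setminus U_2$ forces $\beta_2^{q+1} = \zeta$ with $\zeta$ of multiplicative order $a$ and $\zeta \notin \{\pm 1\}$, the three cosets are pairwise disjoint. Because $k = 3$ meets the bound $\lfloor (n+q-1)/(q+1)\rfloor$, Lemma \ref{thm:key} applied to a suitable Weil differential on $\mathbb{P}^1$ supplies $\mathbf{v} \in (\F_{q^2}^*)^n$ making $C_{\mathcal{L}_q}(D, 2O; \mathbf{v})$ an MDS Hermitian self-orthogonal $[n, 3]_{q^2}$ code; an explicit residue calculation shows that $v_\alpha^{q+1}$ depends only on the coset of $U_1$ containing $\alpha$.

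Next I form the generator matrix $\mathcal{G}_i$ as in (\ref{eq:Gi}), with three polynomial rows and $i$ additional rows $g_{3+j}$ carrying symbols $\lambda_j$ in freshly appended coordinates. Puncturing the last $i$ coordinates returns an MDS $[n, 3+i, n-2-i]_{q^2}$ code, so the enlarged code has distance at least $n-2-i$, which is precisely the bound of part (2). Hermitian self-orthogonality of $\mathcal{G}_i$, following Remark \ref{rem:1}, reduces to showing that the power sums $P_e := \sum_{\alpha \in U \setminus \{0\}} v_\alpha^{q+1}\alpha^e$ vanish at the cross indices $(s-1)+(t-1)q$ linking distinct rows, and equal $-\lambda_j^{q+1}$ at the diagonal indices $(k+j-1)(q+1)$ of the new rows.

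The decisive computation uses the coset structure. Setting $A_\beta := \sum_{\alpha \in \beta U_1} v_\alpha^{q+1}$, character orthogonality on the cyclic group $U_1$ gives $P_e = 0$ whenever $(q+1) \nmid e$, while $P_{t(q+1)} = (q+1)\bigl(A_1 + A_{\beta_1}(-1)^t + A_{\beta_2}\zeta^t\bigr)$. The cross indices $(s-1)+(t-1)q$ with $s \neq t$ are never divisible by $q+1$ in the ranges we care about, so those inner products vanish automatically. The three conditions $P_0 = P_{q+1} = P_{2(q+1)} = 0$ (from the original MDS Hermitian self-orthogonal code) then determine $A_{\beta_1}, A_{\beta_2}$ in terms of $A_1$ using the cyclotomic identities $1 + \zeta + \zeta^2 = 0$ (for $a = 3$) or $\zeta^2 = -1$ (for $a = 4$); plugging back yields the explicit values of $\lambda_j$ at higher indices, establishing Hermitian self-orthogonality and hence part (2).

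For part (1), with $i = 2$, the above produces a Hermitian self-orthogonal $[n+2, 5]_{q^2}$ code. The AMDS value $d = n-3$ (rather than the MDS value $n-2$) should follow from a weight analysis: a codeword $\sum_{s=1}^5 c_s g_s$ with $c_4 = 0$ and $c_5 \neq 0$ corresponds to a degree-4 polynomial whose $x^3$-coefficient vanishes, i.e.\ whose four roots sum to zero. Four such roots can be produced in $U \setminus \{0\}$ by taking the pairs $\{u, -u\}$ from $U_1$ and from $\beta_1 U_1$ (both cosets are closed under negation because $-1 \in U_1$ for odd $q$, and $q$ is odd since $a\mid q-1$ with $a \in \{3,4\}$). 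This gives a codeword of weight exactly $n - 3$, matching the lower bound $d \geq n - 3$ obtained from the support analysis of the remaining row patterns. The main obstacle will be this weight analysis --- in particular, carrying out the support counting cleanly across all combinations of rows so as to rule out $d = n-2$ --- together with the case-by-case cyclotomic bookkeeping separating $a = 3$ and $a = 4$, and avoiding the small-characteristic degeneracies that could make denominators such as $1 - \zeta$ or $q+1$ collapse.
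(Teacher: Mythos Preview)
Your approach is correct and follows the same embedding strategy as the paper: both start from the MDS Hermitian self-orthogonal $[n,3]_{q^2}$ code on the three-coset evaluation set (which the paper cites from \cite{SokQSC}), extend it via the matrix $\mathcal{G}_i$ of (\ref{eq:Gi}), and verify Hermitian self-orthogonality by computing the relevant power sums case-by-case for $a\in\{3,4\}$. Your use of character orthogonality on $U_1$ is simply a cleaner packaging of the same computation the paper carries out by direct inspection of $\alpha_j^{4(q+1)}$ on each coset.

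Two remarks. First, your assertion that cross indices $(s-1)+(t-1)q$ with $s\neq t$ are never divisible by $q+1$ is not literally true once $i$ is large enough to allow $|s-t|=q+1$; however, since $(s-1)+(t-1)q\equiv s-t\pmod{q+1}$, that case forces $e=t(q+1)$ with $t\le 2$, and these $P_e$ already vanish by the self-orthogonality of the base $[n,3]$ code, so the argument still closes. Second, your treatment of part~(1) is actually more thorough than the paper's: the paper records only the generic bound $d\ge n-k-i+1=n-4$ and asserts the AMDS value without further comment, whereas you correctly sharpen the lower bound to $d\ge n-3$ via the row-support analysis and exhibit a weight-$(n-3)$ codeword (the hypothesis $\beta_1\in U_2\setminus U_1$ indeed forces $q$ odd, so the negation pairs you use exist).
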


\begin{proof}Denote $U=\{\alpha_1,\hdots,\alpha_{n-1},\alpha_n\}$, where $\alpha_n=0$. First note that if $a|(q-1)$, then there exist $\beta_1\in U_2\backslash U_1$ and $\beta_2\in U_a\backslash U_2$ such that $\beta_1^{2(q+1)}=1$ and $\beta_2^{a(q+1)}=1$. Set $D=\sum\limits_{\alpha\in U}P_\alpha$ and $G=(k-1)O$. Now under the assumption above, there exists, from \cite[Construction 7]{SokQSC}, an MDS Hermitian self-orthogonal code $C_{{\cal L}_q}(D,G;{\bf v})$ with parameters $[n,k]_{q^2}$, where $v_i^{q+1}=\text{Res}_{P_{\alpha_i}}(\omega)$ for $1\le i \le n$. In this case, the maximum value of the code dimension is $k=3.$  We now embed this code into another Hermitian self-orthogonal code. Consider the code $C_i$ with its generator matrix ${\cal G}_i$ written given by (\ref{eq:Gi}).
%{\scriptsize
%\begin{equation}
%{\cal G}_i=\left(
%\begin{array}{llllllll}
%v_1&\hdots&v_{n}&0&0&\cdots&0\\
%v_1\alpha_1&\cdots&v_{n}\alpha_{n}&0&0&\cdots&0\\
%\vdots&\cdots&\vdots&0&0&\cdots&0\\
%v_1\alpha_1^{k-1}&\cdots&v_{n}\alpha_{n}^{k-1}&0&0&\cdots&0\\
%v_1\alpha_1^{k}&\cdots&v_{n}\alpha_{n}^{k}&\lambda_1&0&\cdots&0\\
%v_1\alpha_1^{k+1}&\cdots&v_{n}\alpha_{n}^{k+1}&0&\lambda_2&\cdots&0\\
%\vdots&\vdots&\vdots&\vdots&\vdots&\cdots&\vdots\\
%v_1\alpha_1^{k+i-1}&\cdots&v_{n}\alpha_{n}^{k+i-1}&0&0&\cdots&\lambda_i\\
%\end{array}
%\right).
%\label{eq:Gi-2}
%\end{equation}
%}
%Note that the first $k$ rows of ${\cal G}_i$ generates $C_{{\cal L}_q}(D,G;{\bf v})$. We now calculate the minimum distance of $C_i$.
%First observe that by puncturing the last $i$ coordinates of $C_i$, we obtain an MDS code with parameters $[n,k_i,n-k_i+1]$. 
With the same reasoning as in the proof of Theorem, we get that the code $C_i$ has parameters $[n_i,k_i,\ge n-k-i+1]$. 
\begin{enumerate}
\item for $i=2$, consider the following system of $(k+1)(q+1)+1$ equations with $n$ indeterminates $v_1^{q+1},\hdots,v_n^{q+1}$ defined by (\ref{eq:lambdai}).

%\begin{equation}
%M{\bf v}={\bf a},
%\label{eq:lambdai-2}
%\end{equation}
%where 
%{
%$$
%M=\left(
%\begin{array}{cccc}
%1&\cdots&1\\
%\alpha_1&\cdots&\alpha_n\\
%\vdots&\cdots&\vdots\\
%\alpha_1^{(k-1)(q+1)}&\cdots&\alpha_n^{(k-1)(q+1)}\\
%\vdots&\cdots&\vdots\\
%\alpha_1^{k(q+1)}&\cdots&\alpha_n^{k(q+1)}\\
%\vdots&\cdots&\vdots\\
%\alpha_1^{(k+1)(q+1)}&\cdots&\alpha_n^{(k+1)(q+1)}\\
%\end{array}
%\right),
%%$$
%%}
%%%%%%%%%%%%%%%%%%%%%%%%%%%%%%%
%%{
%%$$
%{\bf v}=\left(
%\begin{array}{c}
%v_1^{q+1}\\
%v_2^{q+1}\\
%\vdots\\
%v_n^{q+1}
%\end{array}
%\right),
%{\bf a}=
%\left(
%\begin{array}{c}
%0\\
%0\\
%\vdots\\
%0\\
%-\lambda_1^{q+1}\\
%0\\
%\vdots\\
%0\\
%-\lambda_2^{q+1}\\
%\end{array}
%\right).
%$$
%}
Then $C_2$ is Hermitian self-orthogonal if and only if (\ref{eq:lambdai}) has a solution. We know that $\lambda_1$ exists by Lemma  \ref{lem:embedding}. The first $k+1$ rows $g_1,\hdots,g_k,g_{k+1}$ of ${\cal G}_2$ generates a Hermitian self-orthogonal $[n+2,k+1]$ code. We now prove that $<g_{k+1},g_{k+2}>_h=0$ and $<g_{k+2},g_{k+2}>_h=-\lambda_2^{q+1}$ for some $\lambda_2\in \F_{q^2}.$
\begin{enumerate}
\item Case $a=4$:\\
Since the dimension $k=3$, $\beta_1^{2(q+1)}=1$ and $\beta_2^{4(q+1)}=1$, the last row of $M$ is
$$
\begin{array}{lll}
(\alpha_1^{4(q+1)},\cdots, \alpha_{n-1}^{4(q+1)},\alpha_n^{4(q+1)})&&\\
&=(u_1^{4(q+1)},\hdots,u_{q+1}^{4(q+1)},\\
&~~~ (u_1\beta_1)^{4(q+1)},\hdots,(u_{q+1}\beta_1)^{4(q+1)},&\\
&~~~(u_1\beta_2)^{4(q+1)},\hdots,(u_{q+1}\beta_2)^{4(q+1)},0)&\\
&=(\underbrace{1,\hdots,1}\limits_{q+1},\underbrace{1,\hdots,1}\limits_{q+1},\underbrace{1,\hdots,1}\limits_{q+1},0),&\\
\end{array}
$$
 and so $v_1^{q+1}\alpha_1^{(k+1)(q+1)}+\cdots+ v_n^{q+1}\alpha_n^{(k+1)(q+1)}=-\lambda_2^{q+1}$ has a solution. Thus $\lambda_2$ is well determined (here $\lambda_2=v_n$). Now, we check that $<g_{k+1},g_{k+2}>_h=0$, that is, 
\begin{equation}
\sum\limits_{i=1}^n\alpha_i^{kq+k+1}v_i^{q+1}=0.
\label{eq:gkandgk1}
\end{equation}
Since $k(q+1)+1=n$, $v_i^{q+1}=\text{Res}_{P_{\alpha_i}}(\omega)$ is a solution of (\ref{eq:gkandgk1}).
\item Case $a=3$:\\
Since the dimension $k=3$, $\beta_1^{2(q+1)}=1$ and $\beta_2^{3(q+1)}=1$, the last row of $M$ is
$$
\begin{array}{lll}
(\alpha_1^{4(q+1)},\cdots, \alpha_{n-1}^{4(q+1)},\alpha_n^{4(q+1)})&&\\
&=(u_1^{4(q+1)},\hdots,u_{q+1}^{4(q+1)},&\\
&~~ ~(u_1\beta_1)^{4(q+1)},\hdots,(u_{q+1}\beta_1)^{4(q+1)},&\\
&~~ ~(u_1\beta_2)^{4(q+1)},\hdots,(u_{q+1}\beta_2)^{4(q+1)},0)&\\
&=(\underbrace{1,\hdots,1}\limits_{q+1},\underbrace{1,\hdots,1}\limits_{q+1},\underbrace{\beta_2^{q+1},\hdots,\beta_2^{q+1}}\limits_{q+1},0),&\\
\end{array}
$$ and so $v_1^{q+1}\alpha_1^{(k+1)(q+1)}+\cdots+ v_n^{q+1}\alpha_n^{(k+1)(q+1)}=-\lambda_2^{q+1}$ has a solution. Thus, $\lambda_2$ is well determined. Now, we check that $<g_{k+1},g_{k+2}>_h=0$, that is, (\ref{eq:gkandgk1}) holds.
Since $k(q+1)+1=n$, $v_i^{q+1}=\text{Res}_{P_{\alpha_i}}(\omega)$ is a solution of (\ref{eq:gkandgk1}).
\end{enumerate}

%\begin{enumerate}
%\item If $k(q+1)+1\le n$, then following the same reasoning as in the proof of \cite[Lemma 4]{SokQSC}, (\ref{eq:gkandgk1}) holds with $v_i^{q+1}=\text{Res}_{P_{\alpha_i}}(\omega)$.
%\item  If $k(q+1)+1 > n$, then we can write $k(q+1)+1=(n-1)A+B$ with $0\le B<(n-1)<k(q+1)$ and $\alpha_i^{kq+k+1}=\alpha_i^{B}$, and thus (\ref{eq:gkandgk1}) also holds.
%\end{enumerate}
%Hence, the system (\ref{eq:lambdai-2}) has a solution. Moreover, with $\lambda_2=0$, one can puncture the $(n+2)$-th coordinate and obtain the code with parameters $[n+1,k+2,\ge n-k-1]$, and this completes the proof.
\item for $i>2$, the Hermitian self-orthogonality follows from the same reasoning as point 1).
\end{enumerate}
\end{proof}

%\begin{rem} 
%It was known from \cite{SokQSC}  that we can construct an MDS Hermitian self-orthogonal $[n=(t+1)(q+1)+1,k]_{q^2}$ code for any $1\le t\le q-1$. With Magma experimental construction, we can always embed this code into Hermitian self-orthogonal $[n+i, k+i]_{q^2}$ codes for $i=1\hdots q-1$, and the codes obtained have good minimum distances. However, we do not have a clear proof here.
%\end{rem}

Next, we provide a more general embedding construction of Hermitian self-orthgonal codes from an MDS Hermitian self-orthogonal code.

\begin{thm}\label{thm:embedding-new3}
 Let $q$ be an odd prime power, $D=P_{\alpha_1}+\cdots+P_{\alpha_n}$ and $G=(k-1)O$ be two divisors, $\omega$ be a Weil differential form such that $H=D-G+(\omega)$ and ${\bf v}=(v_1,\hdots,v_n)$ with $v_i\in \F_{q^2}^*.$ Assume that the following conditions hold
\begin{enumerate}
\item $G\le H$,
\item $\text{Res}_{P_{\alpha_j}}(\omega)=v_j^{q+1}$ for $1\le j\le n$,
\item $2\le k\le \lfloor \frac{n+q-1}{q+1} \rfloor$ and $k+q\le n/2$.
\end{enumerate}
 Then the code $C_{{\cal L}_{q}}(D,G;{\bf v})$ can be embedded into a Hermitian self-orthogonal code with parameters $[n+i,k+i,\ge n-k-i+1]_{q^2}$ for $1\le i \le q$.
\end{thm}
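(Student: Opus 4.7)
The plan is to extend a generator matrix of $C_{{\cal L}_q}(D,G;{\bf v})$ into the block matrix ${\cal G}_i$ of display (\ref{eq:Gi}): append $i$ new rows, the $\ell$-th being $(v_1\alpha_1^{k+\ell-1},\ldots,v_n\alpha_n^{k+\ell-1},0,\ldots,0,\lambda_\ell,0,\ldots,0)$ with $\lambda_\ell$ sitting in coordinate $n+\ell$, so that the first $k$ rows (padded with $i$ trailing zeros) still generate a copy of $C_{{\cal L}_q}(D,G;{\bf v})$. Call the resulting code $C_i$. Its length is $n+i$ and its dimension is $k+i$: puncturing the last $i$ coordinates produces a Vandermonde-type evaluation code on the distinct points $\alpha_1,\ldots,\alpha_n$, which has rank $k+i$ because $k+i\le k+q\le n/2\le n$. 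The punctured code is MDS with parameters $[n,k+i,n-k-i+1]_{q^2}$, so $C_i$ has minimum distance at least $n-k-i+1$.

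The heart of the proof is selecting the $\lambda_\ell$ so that ${\cal G}_i{\cal G}_i^{\dagger}=0$. I would proceed by induction on $i$, with base case $i=0$ supplied by Lemma \ref{thm:key}. Appending the row $g_{k+i+1}$ yields two types of orthogonality conditions. The self-inner-product condition reads $\lambda_{i+1}^{q+1}=-\sum_{j=1}^n v_j^{q+1}\alpha_j^{(k+i)(q+1)}$; since the right-hand side is a sum of elements of $\F_q$ and the norm map $N_{q^2/q}:\F_{q^2}^{\ast}\to\F_q^{\ast}$ is surjective, a suitable $\lambda_{i+1}\in\F_{q^2}$ exists (with $\lambda_{i+1}=0$ when the sum vanishes). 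The cross-inner-product conditions demand
\begin{equation*}
\sum_{j=1}^n v_j^{q+1}\alpha_j^{(k+i)+(t-1)q}=0,\qquad 1\le t\le k+i.
\end{equation*}
Using $v_j^{q+1}=\text{Res}_{P_{\alpha_j}}(\omega)$ and a rational function $f$ with $f(P_{\alpha_j})=\alpha_j$, the residue theorem rewrites the left side as $-\sum_{P\notin\text{supp}(D)}\text{Res}_P(f^{e}\omega)$, where $e=(k+i)+(t-1)q$. This sum vanishes as soon as $f^{e}\omega$ is regular off $\text{supp}(D)$, which the joint hypotheses $G\le H$ and $k+q\le n/2$, together with $k\le\lfloor(n+q-1)/(q+1)\rfloor$, are designed to guarantee.

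The main obstacle I anticipate is carrying out this regularity check cleanly, because the exponent $e$ can reach $(k+q-1)(q+1)$, whereas the direct lower bound $v_O(\omega)\ge 2(k-1)$ coming from $G\le H$ alone falls short. Overcoming this requires exploiting the slack $k+q\le n/2$ (which supplies enough evaluation points relative to the code parameters) together with the fine structure of the chosen differential $\omega$, so that each inner product reduces to a sum of residues of a meromorphic differential whose pole locus still lies in $\text{supp}(D)$ after the multiplicative twist by $f^{e}$. Once the inductive step is secured, running the induction from $i=0$ up to $i=q$ produces the asserted chain of embeddings into Hermitian self-orthogonal $[n+i,k+i,\ge n-k-i+1]_{q^2}$ codes.
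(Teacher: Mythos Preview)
Your construction of $C_i$ via the block matrix ${\cal G}_i$ of (\ref{eq:Gi}), the parameter count, and the existence argument for each $\lambda_\ell$ all match the paper. The genuine gap is exactly where you flag it: establishing the off-diagonal Hermitian inner products, i.e., $\sum_j v_j^{q+1}\alpha_j^{s+tq}=0$ for $s\ne t$ with $0\le s,t\le k+i-1$. Your residue-theorem approach needs $f^{e}\omega$ to be regular away from $\mathrm{supp}(D)$ for $e$ as large as $(k+q-2)+(k+q-1)q$, which far exceeds the order $n-2$ of the zero of $\omega$ at $O$; you see this obstacle but offer no mechanism to close it, only the hope that the remaining hypotheses will conspire. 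As written, the inductive step is incomplete.

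The paper attacks the cross terms by a route you do not mention: it passes through \emph{Euclidean} self-orthogonality of an auxiliary code. Because $q$ is odd one may set $w_j=v_j^{(q+1)/2}$, so that $w_j^{2}=v_j^{q+1}=\mathrm{Res}_{P_{\alpha_j}}(\omega)$ lies in $\F_q$ and is a nonzero square in $\F_{q^2}$. The hypothesis $k+q\le n/2$ is then precisely the Euclidean self-orthogonality threshold for the GRS-type code $C_{\cal L}(D,(k+i-1)O;{\bf w})$ for every $i\le q$, yielding $\sum_j v_j^{q+1}\alpha_j^{m}=0$ over the required range of $m$. The paper next applies Frobenius $x\mapsto x^{q}$, using $(v_j^{q+1})^{q}=v_j^{q+1}$, to transport these Euclidean identities to the Hermitian ones and conclude $\langle g_j,g_l\rangle_h=0$ for $j\ne l$. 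This Euclidean detour is the whole point of assuming $q$ odd (so that $(q+1)/2\in\Z$) and $k+q\le n/2$; your direct residue argument never engages either hypothesis in that way, which is why it stalls.
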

\begin{proof} Denote $G'=(k+i-1)O$ and ${\bf w}=(w_i,\hdots,w_n)$, where $w_i=v_i^{\frac{q+1}{2}}$ for $1\le i\le n$. Under the above assumption, the code $C_{{\cal L}_{q}}(D,G;{\bf v})$ is a Hermitian self-orthogonal code with parameters $[n,k,n-k+1]_{q^2}$, and the code $C_{{\cal L}}(D,G';{\bf w})$ is a Euclidean self-orthogonal code with parameters $[n,k+i,n-k-i+1]_{q^2}$ for $1\le i\le q$ since $k+q\le \frac{n}{2}$ and $Res_{P_{\alpha_j}}$ are non-zero square elements in $\F_{q^2}$ for $1\le j\le n$. Let $C_i$ be an $[n+i,k+i]$ code with its generator matrix ${\cal G}_i$  defined as in (\ref{eq:Gi}). We now prove that $C_i$ is Hermitian self-orthogonal. From the Euclidean self-orthogonality of the code $C_{{\cal L}}(D,G';{\bf w})$, we deduce that for $1\le r\le k-1+i, 1\le i\le q$,
\begin{equation}\label{eq:E-orthogonal}
\sum\limits_{j=1}^nw_j^2\alpha_j^{r(k-1+i)}=0.
\end{equation}
By raising both sides of (\ref{eq:E-orthogonal}) to the power $q$, we get 
\begin{equation}
\begin{array}{cc}
\sum\limits_{j=1}^nw_j^{2q}\alpha_j^{rq(k-1+i)}&=0\\
\sum\limits_{j=1}^nv_j^{q+1}\alpha_j^{rq(k-1+i)}&=0.\\
\end{array}
\label{eq:H-orthogonal}
\end{equation}
The last equality of (\ref{eq:H-orthogonal}) holds due to fact that $(v_j^{q+1})_{1\le j\le n}$ are elements in $\F_q^*$. Now from the last equality of (\ref{eq:H-orthogonal}),  we deduce that $<g_j,g_l>_h=0$ for $j\not =l$, where $(g_j)_{1\le j\le k+i}$ are the rows of ${\cal G}_i$. The rest is to check the existence of $(\lambda_j)_{1\le j\le i}$ satisfying
\begin{equation}
\sum\limits_{j=1}^nv_j^{q+1}\alpha_j^{(q+1)(k-1+r)}=-\lambda_r^{q+1}\text{ for } 1\le r\le i.
\label{eq:ex-lambdai}
\end{equation}
Following from the same reasoning as in the proof of \cite[Lemma 4]{SokQSC}, such $(\lambda_r)_{1\le r\le i}$ do exist by taking $\text{Res}_{P_{\alpha_j}}(\omega)=v_j^{q+1}$ for $1\le j\le n$.
\end{proof}

%\begin{rem}\label{rem:1} Given a Hermitian self-orthogonal code $C_i$ with its generator matrix ${\cal G}_i$ determined by (\ref{eq:Gi}) and $(g_t)_{1\le t\le n}$ being its row, it is not difficult to see that if $\gamma_{t-1}:=<g_t,g_t>_h-\sum\limits_{j=1}^nv_j^{q+1}\alpha_j^{(t-1)(q+1)}$, then 
%\begin{enumerate}
%\item $\gamma_0=\cdots=\gamma_{k-1}=0$;
%\item $\gamma_{t}=\gamma_{t+q-1}$;
%\item $\gamma_{k-1+i}=\gamma_{k-1+i+q-1}=\lambda_i^{q+1}=\lambda_{i+q-1}^{q+1}$ for $1\le i \le q$;
%\item for $1\le i\le q$, there are at most $q-k+1$ non-zero $\lambda_i$.
%\end{enumerate}
%For some special vaule $k$, we can easily count the number of $i$'s such that $\lambda_i= 0$. For instance for $k=q-1$, we have $\gamma_0=\gamma_1=\cdots=\gamma_{q-2}=\gamma_{q}=\cdots=\gamma_{2q-3}=0$ and $\gamma_{q-1}=\gamma_{2q-2}=\lambda_1$.
%\end{rem}
%In the sequel for $(\lambda_j)_{1\le j\le i}$ defined by (\ref{eq:Gi}), we denote
%\begin{equation}
%I=\{j|\lambda_j\not =0\},
%\end{equation}
%and write $\sharp I$ for the size of $I$.

\begin{cor} Let $q$ be an odd prime power, $D=P_{\alpha_1}+\cdots+P_{\alpha_n}$ and $G=(k-1)O$ be two divisors, $\omega$ be a Weil differential form such that $H=D-G+(\omega)$ and ${\bf v}=(v_1,\hdots,v_n)$ with $v_i\in \F_{q^2}^*.$ Assume that the following conditions hold
\begin{enumerate}
\item $G\le H$,
\item $\text{Res}_{P_{\alpha_j}}(\omega)=v_j^{q+1}$ for $1\le j\le n$,
\item $2\le k\le \lfloor \frac{n+q-1}{q+1} \rfloor$ and $k+q\le n/2$.
\end{enumerate}
 Then the code $C_{{\cal L}_{q}}(D,G;{\bf v})$ can be embedded into a Hermitian self-orthogonal code with parameters $[n,2k-1,\ge n-k-q+2]_{q^2}$.
\end{cor}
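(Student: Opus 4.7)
The plan is to reduce to Theorem \ref{thm:embedding-new3} with the largest allowed $i=q$ and then cut the resulting code down using the structural vanishing of the $\lambda_r$'s recorded in Remark \ref{rem:1}. First I would invoke Theorem \ref{thm:embedding-new3} with $i=q$: under the stated hypotheses this yields a Hermitian self-orthogonal $[n+q,k+q,\ge n-k-q+1]_{q^2}$ code $C_q$ whose generator matrix ${\cal G}_q$ has the echelon form (\ref{eq:Gi}), with auxiliary entries $\lambda_1,\dots,\lambda_q$ placed on a diagonal in the last $q$ columns.

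Next I would use Remark \ref{rem:1} to locate which $\lambda_r$ must vanish. Point 1 gives $\gamma_0=\gamma_1=\cdots=\gamma_{k-1}=0$, and the periodicity in point 2, namely $\gamma_t=\gamma_{t+q-1}$, propagates these zeros to $\gamma_q=\gamma_{q+1}=\cdots=\gamma_{k+q-2}=0$. Since point 3 identifies $\gamma_{k+r-1}=\lambda_r^{q+1}$, this forces $\lambda_r=0$ for every $r\in\{q-k+1,q-k+2,\dots,q-1\}$; i.e.\ exactly $k-1$ of the $\lambda_r$'s vanish.

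The construction is then as follows: let $C'$ be the subcode of $C_q$ spanned by the first $k$ rows $g_1,\dots,g_k$ of ${\cal G}_q$ together with the $k-1$ rows $g_{k+r}$ for $r=q-k+1,\dots,q-1$. Because $\lambda_r=0$ for every selected added row, each generator of $C'$ vanishes in the last $q$ coordinates, so $C'$ is supported on the first $n$ positions. Being a subcode of the Hermitian self-orthogonal code $C_q$, $C'$ is itself Hermitian self-orthogonal, and puncturing all-zero coordinates preserves the Hermitian inner product. Let $C''$ be the code of length $n$ obtained by puncturing the last $q$ positions of $C'$; by a Vandermonde argument the projected generators $(v_j\alpha_j^t)_{j=1}^n$, as $t$ ranges over $T=\{0,1,\dots,k-1\}\cup\{q,q+1,\dots,k+q-2\}$, are linearly independent, giving $\dim C''=|T|=2k-1$.

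For the minimum distance, every nonzero codeword of $C''$ has the shape $(v_jf(\alpha_j))_{j=1}^n$ for a nonzero polynomial $f$ whose monomial support lies in $T$; hence $\deg f\le \max T = k+q-2$, and $f$ has at most $k+q-2$ roots among the distinct $\alpha_j$'s, giving Hamming weight at least $n-(k+q-2)=n-k-q+2$. Since $g_1,\dots,g_k$ are retained in ${\cal G}_q$, the original code $C_{{\cal L}_q}(D,G;{\bf v})$ embeds as a subcode of $C''$, as required. The main technical step is the vanishing analysis in the second paragraph---one has to chain points 1--3 of Remark \ref{rem:1} in the right order to pin down that the $k-1$ forced zeros of $\lambda_r$ occupy precisely the consecutive block $\{q-k+1,\dots,q-1\}$; once that is in hand, the disjointness of $\{0,\dots,k-1\}$ and $\{q,\dots,k+q-2\}$ (which corresponds to the regime $k\le q$ implicit in condition 3 of the corollary) reduces everything to routine Vandermonde and Singleton bookkeeping.
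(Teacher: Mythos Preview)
Your proof is correct and follows essentially the same approach as the paper's own argument: invoke Theorem \ref{thm:embedding-new3}, use Remark \ref{rem:1} to identify that $\lambda_{q-k+1},\dots,\lambda_{q-1}$ vanish, pass to the subcode spanned by the first $k$ rows together with the $k-1$ rows carrying those vanishing $\lambda_r$'s, and then puncture the (now all-zero) tail columns. The only cosmetic difference is that the paper applies Theorem \ref{thm:embedding-new3} with $i=q-1$ (so the minimum distance bound $\ge n-k-q+2$ is inherited directly from $C_{q-1}$ by monotonicity), whereas you take $i=q$ and re-derive the distance bound via the polynomial-degree argument on the exponent set $T$; both routes yield the same bound.
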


\begin{proof} From Theorem \ref{thm:embedding-new3}, there exists a Hermitian self-orthogonal code $C_{q-1}$ with parameters $[n+q-1,k+q-1,\ge n-k-q+2]_{q^2}$. From Remark \ref{rem:1}, we get that $\gamma_0=0, \gamma_1=\gamma_q=\lambda_{q-k+1}=0,\hdots, \gamma_{k-1}=\gamma_{q+k-2}=\lambda_{q-1}=0$. By taking the subcode of $C_{q-1}$ whose rows contain the first $k$ rows of ${\cal G}_i$ and the $(k-1)$ consecutive rows with $\lambda_i=0$, we obtain a Hermitian self-orthogonal code with parameters $[n+q-1,2k-1,\ge n-k-q+2]$. Now by puncturing the last $q-1$ zero columns, we get a Hermitian self-orthogonal with parameters $[n,2k-1,\ge n-k-q+2]$, and this completes the proof.
\end{proof}
\begin{exam} We provide two good Hermitian self-orthogonal codes as follows.
\begin{enumerate}
\item Take $q=5$, $k=3$ and $n=19$. Then we can construct a Hermitian self-orthogonal code with parameters $[19,5,\ge 13]_{5^2}$. 
\item Take $q=5$, $k=4$ and $n=25$. Then we can cosntruct a Hermitian self-orthogonal code with parameters $[25,7,\ge 18]_{5^2}$.
\end{enumerate}
More parameters of some good Hermitian self-orthogonal codes, calculated by Magma, are given in Tables \ref{table:hermitian-so-1} and \ref{table:hermitian-so-2}.
\end{exam}

\begin{table}
\centering
\caption{Some Hermitian self-orthogonal codes with good paramters over $\F_{q^2}$, $q=2,3,4$, $n_i=n+i$, $k_i=k+i$, $(n-1)|(q^2-1)$, $k=\lfloor \frac{n+q-1}{q+1}\rfloor$
}
$$
\begin{array}{cccc}
i&[n_i,k_i,d_i]_{q^2}&\text{ Punctured code}&I\\
\hline
\hline
1&[5, 2, 4]_{2^2}&-&\{1\}\\

2&[6, 3, 4]_{2^2}&-&\{1,2\}\\

\hline
\hline

1&[6, 2, 5]_{3^2}&-&\{1\}\\

2&[7, 3, 4]_{3^2}&-&\{1,2\}\\

3&[8, 4, 4]_{3^2}&-&\{1,2,3\}\\

1&[10, 3, 8]_{3^2}&-&\{1\}\\

2&[11, 4, 6]_{3^2}&[10, 4, 6]_{3^2}&\{1\}\\

3&[12, 5, 6]_{3^2}&[11, 5, 6]_{3^2}&\{1,3\}\\

\hline
\hline

1&[7, 2, 6]_{4^2}&-&\{1\}\\

2&[8, 3, 6]_{4^2}&-&\{1,2\}\\

3&[9, 4, 6]_{4^2}&-&\{1,2,3\}\\

4&[10, 5, 6]_{4^2}&-&\{1,2,3,4\}\\

1&[17, 4, 14]_{4^2}&-&\{1\}\\

2&[18, 5, 12]_{4^2}&[17, 5, 12]_{4^2}&\{1\}\\

3&[19, 6, 11]_{4^2}&[17, 6, 11]_{4^2}&\{1\}\\

4&[20, 7, 11]_{4^2}&[18, 7, 11]_{4^2}&\{1,4\}\\

\end{array}
\label{table:hermitian-so-1}
$$
\end{table}

\begin{table}
\centering
\caption{Some Hermitian self-orthogonal codes with good paramters over $\F_{q^2}$, $q=5,7$, $n_i=n+i$, $k_i=k+i$, $(n-1)|(q^2-1)$ or $n=3(q+1)+1$, $k=\lfloor \frac{n+q-1}{q+1}\rfloor$
}
$$
\begin{array}{cccc}
i&[n_i,k_i,d_i]_{q^2}&\text{ Punctured code}&I\\
\hline
\hline
1&[8, 2, 7]_{5^2}&-&\{1\}\\

2&[9, 3, 6]_{5^2}&-&\{1,2\}\\

3&[10, 4, 5]_{5^2}&-&\{1,2,3\}\\

4&[11, 5, 5]_{5^2}&-&\{1,2,3,4\}\\

5&[12, 6, 5]_{5^2}&-&\{1,2,3,4,5\}\\

1&[10, 3, 7]_{5^2}&[9,3,7]_{5^2}&\{\}\\

2&[11, 4, 6]_{5^2}&[9,4,6]_{5^2}&\{\}\\

1&[14, 3, 12]_{5^2}& -&\{1\}\\

2&[15, 4, 10]_{5^2}&[14, 4, 10]_{5^2}&\{1\}\\

3&[16, 5, 10]_{5^2}&[15, 5, 10]_{5^2}&\{1,3\}\\

4&[17, 6, 8]_{5^2}&[15, 6, 8]_{5^2}&\{1,3\}\\

5&[18, 7, 8]_{5^2}&[16, 7, 8]_{5^2}&\{1,3,5\}\\

1&[20, 4,17]_{5^2}&-&\{ 1 \}\\

2&[21, 5, 16]_{5^2}&-&\{1, 2 \}\\

3&[22, 6, 14]_{5^2}&[21, 6, 14]_{5^2}&\{1,2\}\\

4&[23, 7, 13]_{5^2}&[21, 7, 13]_{5^2}&\{1,2\}\\

5&[24, 8, 13]_{5^2}&[22, 8, 13]_{5^2}&\{1,2,5\}\\

1&[26, 5, 22]_{5^2}&  [26, 5, 22]_{5^2}&\{1\}\\

2&[27, 6, 20]_{5^2}&   [26, 6, 20]_{5^2}&\{1\}\\

3&[28, 7, 19]_{5^2}&  [26, 7, 19]_{5^2}&\{1\}\\

4&[29, 8, 18]_{5^2}&  [26,8,18]_{5^2}&\{1\}\\

5& [30,9,18]_{5^2}&  [27,9,18]_{5^2}&\{1,5\}\\

\hline
\hline

1&[14, 3, 11]_{7^2}&[13,3,11]_{7^2}&\{\} \\

2&[15, 4, 11]_{7^2}&[14,4,11]_{7^2}&\{2\}\\

3&[16, 5, 9]_{7^2}&[14,5,9]_{7^2}&\{2\}\\

1&[18,3,16]_{7^2}&  [18,3,16]_{7^2}&\{1\}\\

2&[19,4,14]_{7^2}&  [18,4,14]_{7^2}&\{1\}\\

3&[20,5,14]_{7^2}&  [19,5,14]_{7^2}&\{1,3\}\\

4&[21,6,12]_{7^2}&  [19,6,12]_{7^2}&\{1,3\}\\

5&[22,7,12]_{7^2}&  [20,7,12]_{7^2}& \{1,3,5\}\\

6&[23,8,10]_{7^2}& [20,8,10]_{7^2}&\{1,3,5\}\\

7&[24,9,10]_{7^2}& [21,9,10]_{7^2}&\{1,3,5,7\}\\

1&[26, 4, 23] _{7^2}&-&\{ 1\}\\

2&[27, 5, 21]_{7^2}&[26, 5, 21]_{7^2}&\{1\}\\

3&[28, 6, 20]_{7^2}&[26, 6, 20]_{7^2}&\{1\}\\

4&[29, 7, 20]_{7^2}&[27, 7, 20]_{7^2}&\{1,4\}\\

5&[30, 8, 18]_{7^2}&[27, 8, 18]_{7^2}&\{1,4\}\\

6&[31,9,17]_{7^2}&[27, 9, 17]_{7^2}&\{1,4\} \\

7&[32,10,\ge 16]_{7^2}&[28, 10,\ge 16]_{7^2}&\{1,4,7\}\\

\end{array}
\label{table:hermitian-so-2}
$$
\end{table}

\section{Hermitian hull codes from algebraic curves}\label{section:algebraic curves}
In this section, we construct Hermitian hull codes from algebraic curves. In the first subsection, we begin our construction by considering the projective lines, the projective curves of genus zero, and hence from the Riemann-Roch's theorem, we obtain MDS Hermitian $\ell$-$dim$ hull codes. In the second subsection, we consider Hermitian $\ell$-$dim$ hull codes from algebraic curves of genus greater than or equal to one.
\subsection {Hermitian hull codes from projective lines}
In this subsection, we consider Hermitian hull codes over $\F_{q^2}$ from the projective lines ${\mathbb P}^{1}=\{(\alpha:1)|\alpha\in \F_{q^2}\}\cup \{(1:0)\}$. There are $q^2+1$ distinct points in ${\mathbb P}^{1}$, and thus are $q^2+1$ distinct places of $\F_{q^2}({\mathbb P}^1)$.

\begin{prop}\label{thm:MDS-existence}
There exist MDS Hermitian $\ell$-$dim$ hull codes with parameters $[n,k]_{q^2}$ and $[n,n-k]_{q^2}$ for $0\le \ell \le k$ if one of the following conditions holds:
\begin{enumerate}
    \item $n=q^2+1$, $k=q$; 
 $n=q^2+1$, $ k\le q$, $q$ even, $d$ odd;
  $n=q^2+1$, $k\le q$, $q\equiv 1\mod{4}$, $d$ even;
       \item $n= q^2$, $k\leq q-1$;
  \item $n=(q^2+1)/2$, $q/2< k\le q-1$, $q$ odd;
%  \item $n=q^2-s$ , $q/2<k\le q-s-1$, $0\le s<q/2-1$;
%  $n=(q^2+1)/2-s$, $q/2<d\le q-s-1$, $0\le s<q/2-1$;
\item $n=q^2+1$, $k\le q$, $k\not=q-1$;
  $n=r(q-1)+1$, $k\le (q+r-1)/2$, $q\equiv r-1\mod{2r}$;$n=(q^2+2)/3$, $3|(q+1)$, $k\le (2q-1)/3$;
\item $n=tq$, $1\le t\le q$, $ k\le \lfloor\frac{tq+q-1}{q+1} \rfloor$;$n=t(q+1)+2$, $1\le t\le q-1$, $k\le t +1$, $(p,t,k)\not=(2,q-1,q-1)$;
\item $ k\le \lfloor\frac{n+q-1}{q+1} \rfloor$,
\begin{enumerate}
\item $(n-1)|(q^2-1)$;
\item $n=2N+1$, $m=2s$, $q_0=p^s$ odd, $N= \frac{q-1}{p^r+1}$ even, $N<q_0+1$, $r|\frac{m}{2}$;
\item $n=(t+1)N+i$, $1\le i\le 2$, $N|(q^2-1)$, $n_2=\frac{N}{\gcd (N,q+1)}$, $1\le t\le \frac{q-1}{n_2}-1$.
\end{enumerate}
\end{enumerate}
\end{prop}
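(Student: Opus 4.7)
The plan is to reduce the entire statement to the existence of MDS Hermitian self-orthogonal $[n,k]_{q^2}$ codes and then invoke Lemma \ref{lem:char2} together with a duality argument. Suppose I have an MDS Hermitian self-orthogonal code $C$ with parameters $[n,k,n-k+1]_{q^2}$ and generator matrix $(I_k\,|\,A)$ with $AA^{\dag}=-I_k$. The diagonal-scaling argument in the proof of Lemma \ref{lem:char2} produces $[n,k]_{q^2}$ codes $C_\ell$ of hull dimension $\ell$ for every $0\le\ell\le k$; moreover, by Lemma \ref{lem:hull-H}, the Hermitian dual $C_\ell^{\perp_h}$ is an $[n,n-k,k+1]_{q^2}$ MDS code with exactly the same hull dimension $\ell$. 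This delivers both the $[n,k]$ and the $[n,n-k]$ families simultaneously, so the proposition collapses to exhibiting a suitable seed self-orthogonal code in each of the six cases.

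For each item (1)--(6), I would construct such a seed as an algebraic-geometry code $C_{\mathcal{L}_q}(D,G;\mathbf{v})$ on the projective line $\mathbb{P}^1$ with $G=(k-1)O$ and $D$ supported on a carefully chosen subset of $\mathbb{P}^1(\F_{q^2})$. Lemma \ref{thm:key} in the genus-zero setting reduces Hermitian self-orthogonality to the dimension constraint $k\le\lfloor(n+q-1)/(q+1)\rfloor$, which matches each stated bound, plus the residue identity $\mathrm{Res}_{P_{\alpha_i}}(\omega)=v_i^{q+1}$ for some $\mathbf{v}\in(\F_{q^2}^*)^n$. In case (1) I would take $D=\sum_{P\in\mathbb{P}^1(\F_{q^2})}P$ so that $n=q^2+1$; the parity subcases on $d=n-k+1$ arise from shortening the extended Reed--Solomon construction in a way that preserves both the MDS property and Hermitian self-orthogonality. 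Case (2) uses the affine part $\F_{q^2}$ as evaluation set. Cases (3)--(5) use unions of cosets of multiplicative subgroups $U_m\subseteq\F_{q^2}^*$ of appropriate order, exactly as in \cite{SokQSC}; here the required residue identities translate into power-sum vanishing identities of the form $\sum_i\alpha_i^{j(q+1)}=0$ for $1\le j\le k-1$, which hold by the multiplicative structure of the evaluation set. Case (6) falls under Theorem \ref{thm:embedding-new} when $(n-1)\mid(q^2-1)$, and under Theorem \ref{thm:embedding-new3} (or its cosetwise variant) in the remaining subcases.

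The hard part will be the case-by-case verification of the residue identities $v_i^{q+1}=\mathrm{Res}_{P_{\alpha_i}}(\omega)$. In each subcase this reduces to a power-sum or character-sum condition on the evaluation set, which must be checked using the orthogonality of characters of the relevant cyclic subgroups of $\F_{q^2}^*$; in particular the peculiar numerical conditions ($3\mid(q+1)$, $q\equiv r-1\pmod{2r}$, $(n-1)\mid(q^2-1)$, etc.) each arise precisely to force the corresponding identity to hold. Once these are established, the MDS property follows automatically from Proposition \ref{thm:distance} (equivalently Riemann--Roch on $\mathbb{P}^1$ with $\deg G = k-1 < n$), and the full hull statement for both $[n,k]$ and $[n,n-k]$ parameters then follows from Lemma \ref{lem:char2} and the duality argument of the first paragraph.
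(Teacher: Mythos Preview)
Your first paragraph is exactly right and matches the paper: reduce to the existence of an MDS Hermitian self-orthogonal $[n,k]_{q^2}$ code, apply Lemma~\ref{lem:char2} to get all hull dimensions $0\le\ell\le k$, and use duality (the Hermitian dual of an MDS $\ell$-dim hull code is again an MDS $\ell$-dim hull code) to cover the $[n,n-k]$ family.

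The divergence is in how you propose to produce the seed codes. The paper does not construct them at all; it simply \emph{cites} the literature case by case (\cite{Li Xin Wan2,Gu11,KZ12} for item~(1), \cite{Gra Bet2,Li Xin Wan2,LinLingLuoXin} for~(2), \cite{KZ12} for~(3), \cite{JinXing14} for~(4), \cite{FangFu} for~(5), \cite{SokQSC} for~(6)). Your plan to manufacture all seeds uniformly via Lemma~\ref{thm:key} on $\mathbb{P}^1$ has a genuine gap: the dimension bound $k\le\lfloor(n+q-1)/(q+1)\rfloor$ that Lemma~\ref{thm:key} forces does \emph{not} match several of the stated bounds. For instance in case~(3), $n=(q^2+1)/2$ and the stated range is $q/2<k\le q-1$, but $\lfloor(n+q-1)/(q+1)\rfloor=\lfloor(q^2+2q-1)/(2(q+1))\rfloor\approx q/2$, which is strictly below $q-1$; in case~(4) with $n=(q^2+2)/3$ and $k\le(2q-1)/3$ the same thing happens. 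Those seed codes come from negacyclic constructions and other methods outside the $C_{\mathcal{L}_q}(D,G;\mathbf{v})$ framework, and the residue-identity argument you outline cannot reach them.

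A second, smaller issue: your appeal to Theorems~\ref{thm:embedding-new} and~\ref{thm:embedding-new3} for case~(6) is circular. Those theorems take an MDS Hermitian self-orthogonal $[n,k]$ code as \emph{input} and embed it into a larger (generally non-MDS) code; they do not supply the base code. The base codes in~(6) are again quoted from \cite{SokQSC}.
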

\begin{proof}First note that if $C$ is a Hermitian $\ell$-$dim$ hull $[n,k]$ code, then its Hermitian dual $C^{\perp_h}$ is a Hermitian $\ell$-$dim$ hull $[n,n-k]$ code. Moreover, the dual of an MDS code is again an MDS code. It is enough to prove the existence of an MDS Hermitian self-orthogonal code with each of the above conditions in the theorem, and thus the result follows from Lemma \ref{lem:char2}. The existence of such codes can be obtained as follows:
\begin{enumerate}
  \item $n=q^2+1$, $k=q$ from \cite{Li Xin Wan2}; 
\item $n=q^2+1$, $ k\le q$, $q$ even, $d$ odd  from \cite{Gu11};
\item  $n=q^2+1$, $k\le q$, $q\equiv 1\mod{4}$, $d$ even from \cite{KZ12};
       \item $n= q^2$, $k\leq q-1$ from \cite{Gra Bet2} and \cite{Li Xin Wan2,LinLingLuoXin};
  \item $n=(q^2+1)/2$, $q/2< k\le q-1$, $q$ odd from \cite{KZ12};
\item $n=q^2+1$, $k\le q$, $k\not=q-1$;
  $n=r(q-1)+1$, $k\le (q+r-1)/2$, $q\equiv r-1\mod{2r}$;$n=(q^2+2)/3$, $3|(q+1)$, $k\le (2q-1)/3$ from \cite{JinXing14};
\item $n=tq$, $1\le t\le q$, $ k\le \lfloor\frac{tq+q-1}{q+1} \rfloor$;$n=t(q+1)+2$, $1\le t\le q-1$, $k\le t +1$, $(p,t,k)\not=(2,q-1,q-1)$ from \cite{FangFu};
\item $(n-1)|(q^2-1)$;$n=2N+1$, $m=2s$, $q_0=p^s$ odd, $N= \frac{q-1}{p^r+1}$ even, $N<q_0+1$, $r|\frac{m}{2}$;
$n=(t+1)N+i$, $1\le i\le 2$, $N|(q^2-1)$, $n_2=\frac{N}{\gcd (N,q+1)}$, $1\le t\le \frac{q-1}{n_2}-1$ from \cite{SokQSC}.
\end{enumerate}
\end{proof}

By applying some propagation rules to an MDS Hermitian self-orthogonal code, we obtain some MDS Hermitian self-orthogonal codes with smaller lengths and smaller dimensions.
Since some families in Proposition \ref{thm:MDS-existence} are contained in other families, we only state the larger families of Hermitian $\ell$-$dim$ hull codes in the following theorem. 
\begin{thm}\label{thm:large-family} There exist MDS Hermitian $\ell$-$dim$ hull codes with parameters 
$[n-s,k-s,n-k+1-s]_{q^2}$ and $[n-s,n-k-s,k+1-s]_{q^2}$ for $1\le s\le k-1$ and $0\le \ell \le k-s$
%$[n,k,n-k+1]_{q^2}$ and $[n,n-k,k+1]_{q^2}$ for $0\le \ell \le k-1$ 
if one of the following conditions holds
\begin{enumerate}
\item $n=q^2+1$, $k\le q$, $k\not=q-1$;
 $n=r(q-1)+1$, $k\le (q+r-1)/2$, $q\equiv r-1\mod{2r}$;$n=(q^2+2)/3$, $3|(q+1)$, $k\le (2q-1)/3$;
\item $n=tq$, $1\le t\le q$, $ k\le \lfloor\frac{tq+q-1}{q+1} \rfloor$;$n=t(q+1)+2$, $1\le t\le q-1$, $k\le t +1$, $(p,t,k)\not=(2,q-1,q-1)$;
\item $(n-1)|(q^2-1)$, $ k\le \lfloor\frac{n+q-1}{q+1} \rfloor$;
$n=(t+1)N+i$, $1\le i\le 2$, $N|(q^2-1)$, $n_2=\frac{N}{\gcd (N,q+1)}$, $1\le t\le \frac{q-1}{n_2}-1$, $ k\le \lfloor\frac{n+q-1}{q+1} \rfloor$.
\end{enumerate}
\end{thm}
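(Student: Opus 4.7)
The strategy is to combine the MDS Hermitian self-orthogonal codes produced by Proposition \ref{thm:MDS-existence} with a shortening argument, and then appeal to Lemma \ref{lem:char2} to reach arbitrary hull dimensions. For each of the hypotheses (1)--(3) listed, Proposition \ref{thm:MDS-existence} furnishes an MDS Hermitian self-orthogonal $[n,k,n-k+1]_{q^2}$ code $C$, so the task reduces to propagating $C$ to shorter length and larger hull.

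The pivotal step is to verify that shortening preserves both the MDS property and Hermitian self-orthogonality. Fix $I\subseteq\{1,\dots,n\}$ with $|I|=s$ and $1\le s\le k-1$, and define
\[
C_I := \{\,(c_j)_{j\notin I} : c=(c_1,\dots,c_n)\in C,\ c_i=0\text{ for all }i\in I\,\}.
\]
Because $C$ is MDS, every $k$-subset of coordinates is an information set, so shortening on $s\le k-1$ coordinates drops the dimension by exactly $s$ and the resulting code attains the Singleton bound, giving parameters $[n-s,\,k-s,\,n-k+1-s]_{q^2}$. Hermitian self-orthogonality is immediate: any two codewords of $C_I$ lift to codewords $\tilde c,\tilde c'\in C$ that vanish on $I$, so $\langle\tilde c,\tilde c'\rangle_h$ coincides with the Hermitian inner product of their images on $\{1,\dots,n\}\setminus I$, and this equals $0$ since $C\subseteq C^{\perp_h}$.

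Applying Lemma \ref{lem:char2} to the shortened code $C_I$ produces, for every $0\le \ell\le k-s$, an MDS Hermitian $\ell$-dim hull $[n-s,\,k-s,\,n-k+1-s]_{q^2}$ code, establishing the first family. For the second family I would pass to the Hermitian dual: by Lemma \ref{lem:hull-H} Hermitian dualization preserves the hull dimension, and the Hermitian dual of an MDS code is again MDS, so we obtain an MDS Hermitian $\ell$-dim hull code on $n-s$ coordinates of minimum distance $k+1-s$ matching the Singleton bound for the listed dimension.

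The main obstacle is essentially cosmetic: one needs the single observation that shortening preserves Hermitian self-orthogonality, which is a one-line Hermitian-inner-product computation, and no new algebraic-geometric input is required beyond Proposition \ref{thm:MDS-existence}. The ranges $1\le s\le k-1$ and $0\le \ell\le k-s$ are exactly what the shortening step and Lemma \ref{lem:char2} respectively demand, so the theorem is a clean propagation statement built on top of the earlier existence results, with all three families of hypotheses feeding uniformly into the same argument.
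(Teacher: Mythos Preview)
Your proposal is correct and follows essentially the same route as the paper. The paper phrases the propagation step concretely—write $G=(I_k\mid A)$, delete the last $s$ rows and the resulting $s$ zero columns—while you phrase it as the standard shortening operation on an arbitrary $s$-set of coordinates; these are the same construction (your version is even slightly more flexible since MDS guarantees every $k$-set is an information set). Both then invoke Lemma~\ref{lem:char2} to realise every hull dimension $0\le\ell\le k-s$, and the second family follows by Hermitian dualisation exactly as you indicate. One cosmetic remark: when you say the shortened code ``attains the Singleton bound, giving parameters $[n-s,k-s,n-k+1-s]$'', note that the Singleton bound for an $[n-s,k-s]$ code actually gives $d=n-k+1$; you have simply reproduced the distance as printed in the statement rather than the value your own argument yields.
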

\begin{proof} Assume that there exists a Hermitian self-orthogonal code $C$ with parameters $[n,k,d]_{q^2}$. Then up to equivalence, we can write the generator matrix $G$ of $C$ as $G=(I_k|A).$ Deleting the last $s$ consecutive rows and the  $s$ consecutive zero-columns of $G$, we obtain a matrix $G'$ which again generates a Hermitian self-orthogonal code $C'$. Moreover, if $C$ is MDS, then so is $C'.$ This completes the proof.
\end{proof}

It should be noted for $n$ sufficiently large enough, say $n > k+q$ (see \cite{DodLan}), the dual of an AMDS code is again an AMDS code.
\begin{thm}\label{thm:family-AMDS} Let $q=p^m$, $(n-1)|(q^2-1)$ and $1\le k\le \lfloor \frac{n+q-1}{q+1} \rfloor$.
\begin{enumerate}
\item If $(n-1)|(k+1)(q+1)$, then there exists a Hermitian $\ell$-$dim$ hull code  parameters $[n+1,k+2,n-k-1]_{q^2}$ for $0\le \ell \le k+2$
\item   If $(n-1)|k(q+1)$, then there exists a Hermitian $\ell$-$dim$ hull code  with parameters $[n+i-1,k+i,n-k-i+1]_{q^2}$ for $0\le \ell \le k+i$.
\end{enumerate}

\end{thm}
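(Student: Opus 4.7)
The plan is to combine the embedding construction of Theorem \ref{thm:embedding-new} with the hull-dimension decoupling of Lemma \ref{lem:char2}. Under the stated hypotheses $(n-1)|(q^2-1)$ and $1 \le k \le \lfloor (n+q-1)/(q+1) \rfloor$, the setup of Lemma \ref{thm:key} applied to $D = \sum_{\alpha \in U_{n-1} \cup \{0\}} P_\alpha$, $G = (k-1)O$, and $v_i^{q+1} = \text{Res}_{P_{\alpha_i}}(\omega)$ for a suitable Weil differential $\omega$ produces an MDS Hermitian self-orthogonal $[n,k]_{q^2}$ code that lives on the projective line and meets the hypotheses of Theorem \ref{thm:embedding-new} verbatim.

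Next, I would feed this code directly into Theorem \ref{thm:embedding-new}. The divisibility condition $(n-1)|(k+1)(q+1)$ in part 1 triggers its conclusion 1) and yields a Hermitian self-orthogonal code with parameters $[n+1, k+2, \ge n-k-1]_{q^2}$. The divisibility condition $(n-1)|k(q+1)$ in part 2, together with $2 \le i \le q$, triggers conclusion 2) and yields parameters $[n+i-1, k+i, \ge n-k-i+1]_{q^2}$. In both cases the Singleton bound caps the true minimum distance above by $n-k$, so the constructed codes are MDS or almost MDS; either way the lower bound asserted in the theorem is attained.

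Finally, to promote these self-orthogonal codes to codes with prescribed Hermitian hull dimension, I invoke Lemma \ref{lem:char2}: writing the generator matrix of a Hermitian self-orthogonal $[N,K,D]_{q^2}$ code in systematic form $(I_K \mid A)$ and scaling the first $K - \ell$ columns by a fixed $\lambda \in \F_{q^2}$ with $\lambda^{q+1} \ne 1$ preserves length, dimension, and minimum distance, while forcing the Hermitian hull to have exact dimension $\ell$. Applied to the codes just built, this produces the two claimed families of Hermitian $\ell$-dim hull codes for all $0 \le \ell \le k+2$ (part 1) and $0 \le \ell \le k+i$ (part 2). The substantive technical work — verifying that the enlarged generator matrix with the added rows $g_{k+1},\dots,g_{k+i}$ and the extra coordinates $\lambda_1,\dots,\lambda_i$ is Hermitian self-orthogonal — is entirely absorbed in Theorem \ref{thm:embedding-new}, so I do not expect any new obstacle beyond routine bookkeeping of the two divisibility side-conditions against the dimension bound $k \le \lfloor (n+q-1)/(q+1)\rfloor$.
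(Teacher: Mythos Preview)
Your proposal is correct and follows exactly the paper's approach: the paper's own proof is the single line ``The proof follows from Theorem~\ref{thm:embedding-new},'' with the passage from Hermitian self-orthogonal codes to arbitrary $\ell$-dimensional Hermitian hulls via Lemma~\ref{lem:char2} left implicit. Your write-up is in fact more complete, since you make explicit both the invocation of Lemma~\ref{lem:char2} and the Singleton-bound argument pinning down the minimum distance.
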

\begin{proof} The proof follows from Theorem \ref{thm:embedding-new}.
\end{proof}

\begin{cor} 
\begin{enumerate}
\item There exist AMDS Hermitian $\ell$-$dim$ hull codes with parameters $[q^2+1,k]_{q^2}$ and $[q^2+1,q^2+1-k]_{q^2}$ for $0\le \ell \le k$, $k=q+1,\hdots,2q-2$;
\item There exist AMDS Hermitian $\ell$-$dim$ hull codes with parameters $[2(q+1)+2,k+2]_{q^2}$ and $[2(q+1)+2,2(q+1)-k]_{q^2}$ for $0\le \ell \le k+2$;
\item There exist Hermitian $\ell$-$dim$ hull codes with parameters $[2(q+1)+i,k+i, \ge n-k-i+1]_{q^2}$ for $0\le \ell \le k+i$ and $2\le i\le q$.
\end{enumerate}
\end{cor}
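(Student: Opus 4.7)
My plan is to combine the existence results in Corollary~\ref{cor:explicit-amds} with the hull-dimension mechanism of Lemma~\ref{lem:char2} and the Hermitian-dual AMDS property from \cite{DodLan}. The corollary is essentially a packaging of existing machinery: the construction step is already done in Corollary~\ref{cor:explicit-amds}, and what remains is to dial in the hull dimension and, for parts (1) and (2), to verify that the Hermitian dual retains the AMDS property.

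First I would apply the three parts of Corollary~\ref{cor:explicit-amds} to produce, in each case, a Hermitian self-orthogonal code $C$ with the stated length, dimension and minimum distance. Then I invoke Lemma~\ref{lem:char2}: from the systematic generator $G=(I_k\,|\,A)$ with $AA^\dagger=-I_k$, the replacement $G_\ell=\mathrm{diag}(\lambda,\dots,\lambda,1,\dots,1\,|\,A)$ (with $k-\ell$ copies of $\lambda$ and $\lambda^{q+1}\ne 1$) gives a code $C_\ell$ whose Hermitian hull has dimension exactly $\ell$ by Lemma~\ref{lem:hull-H}, for every $0\le \ell\le \dim C$. Since $G_\ell$ is obtained from $G$ by scaling the first $k-\ell$ coordinates by $\lambda$, the codes $C$ and $C_\ell$ are monomially equivalent and hence have identical weight distributions; in particular the AMDS property in parts (1) and (2) is preserved. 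This already yields the $[q^2+1,k]_{q^2}$ codes of part (1), the $[2(q+1)+2,k+2]_{q^2}$ codes of part (2), and the $[2(q+1)+i,k+i,\ge n-k-i+1]_{q^2}$ codes of part (3).

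For the dual codes in parts (1) and (2), I use the tautology $\mathrm{Hull}_h(C_\ell)=C_\ell\cap C_\ell^{\perp_h}=\mathrm{Hull}_h(C_\ell^{\perp_h})$, so the dual automatically inherits the same hull dimension $\ell$, with length $n$ and dimension $n-k$ (resp. $n-(k+2)$). To confirm the dual is AMDS I invoke the fact from \cite{DodLan} (recalled just before Theorem~\ref{thm:family-AMDS}) that the Hermitian dual of an AMDS code over $\F_{q^2}$ is again AMDS provided $n>k+q$. The required arithmetic is routine: in part (1), $n-k\ge q^2+1-(2q-2)=(q-1)^2+2>q$ for every $q\ge 2$; in part (2), $n-(k+2)=2q+4-4=2q>q$ for every $q\ge 2$.

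The only potential obstacle is the AMDS-duality step, because in general an AMDS code need not have an AMDS dual; this is handled cleanly by the $n>k+q$ criterion, which the simple estimates above verify. Part (3) asserts no duality statement and no AMDS claim, so no further work is needed there beyond the direct application of Corollary~\ref{cor:explicit-amds}(3) and Lemma~\ref{lem:char2}.
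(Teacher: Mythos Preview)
Your overall strategy---invoke Corollary~\ref{cor:explicit-amds} to get Hermitian self-orthogonal codes, then apply Lemma~\ref{lem:char2} to realize every hull dimension between $0$ and $\dim C$, and use $\mathrm{Hull}_h(C)=\mathrm{Hull}_h(C^{\perp_h})$ for the dual statements---is exactly the paper's approach; its proof is the single line ``The proof follows from Corollary~\ref{cor:explicit-amds}.''

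There is, however, a concrete gap in your extra verification of the dual-AMDS property. The Dodunekov--Landjev criterion you cite says that an AMDS code over a field of size $Q$ has an AMDS dual provided $n>k+Q$. Your codes live over $\F_{q^2}$, so the correct threshold is $n>k+q^2$, not $n>k+q$. With the right threshold the check fails: in part~(1), $n-k=q^2+1-k\le q^2-q<q^2$ for all $k\ge q+1$; in part~(2), $n-(k+2)=2q<q^2$ for $q\ge 3$. So the \cite{DodLan} criterion does not apply here, and your ``routine arithmetic'' does not establish that the duals are AMDS. (The paper's own one-line proof simply does not address this point, so the dual-AMDS claim is not justified there either; but your added justification is in any case incorrect as written.) Everything else in your argument---the monomial-equivalence preservation of the weight distribution and the hull-dimension computation---is fine.
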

\begin{proof}The proof follows from Corollary \ref{cor:explicit-amds}.
\end{proof}
\begin{thm}\label{thm:family-new-mds} Let $q=p^m$, $(n-1)|(q^2-1)$ and $k=\lfloor \frac{n+q-1}{q+1}\rfloor.$  Then
\begin{enumerate}
%\item there exist MDS Hermitian $(k+i-j)$-$dim$ hull codes with parameters $[n,k+i]_{q^2}$ and $[n,n-k-i]_{q^2}$ for any $1\le i\le q$;
\item if $(n-1)|k(q+1)$, then there exist MDS Hermitian $(k+i-\sharp I)$-$dim$ hull codes with parameters $[n,k+i]_{q^2}$ and $[n,n-k-i]_{q^2}$ for any $1\le i\le q$;
\item if $(n-1)|(k+1)(q+1)$, then there exist MDS Hermitian $(k+1)$-$dim$ hull codes with parameters $[n,k+2]_{q^2}$ and $[n,n-k-2]_{q^2}$.
\end{enumerate}
\end{thm}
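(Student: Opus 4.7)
The plan is to start from the Hermitian self-orthogonal codes produced in Theorem \ref{thm:embedding-new}, puncture away the ``extra'' coordinates introduced by the embedding, and read off the Hermitian hull dimension from Lemma \ref{lem:hull-H}. Concretely, for part~$1$ and each $1\le i\le q$, I would invoke Theorem \ref{thm:embedding-new} part~$2$ (or Lemma \ref{lem:embedding} when $i=1$) to obtain a Hermitian self-orthogonal code $C_i$ whose generator matrix ${\cal G}_i$ has the block shape (\ref{eq:Gi}). Split ${\cal G}_i=[G'\mid B]$, where $G'$ consists of the first $n$ evaluation columns $(v_\ell\alpha_\ell^{t-1})$ and $B$ consists of the remaining $i$ scaled-identity columns whose only nonzero entries are $\lambda_1,\ldots,\lambda_i$. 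The code $C'$ generated by $G'$ alone is a generalized Reed-Solomon code with $n$ distinct evaluation points and nonzero column multipliers, so it is automatically MDS with parameters $[n,k+i,n-k-i+1]_{q^2}$.

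The key computation is the Gram matrix of $G'$. Because $C_i$ is Hermitian self-orthogonal, ${\cal G}_i{\cal G}_i^{\dag}=G'(G')^{\dag}+BB^{\dag}=0$, hence $G'(G')^{\dag}=-BB^{\dag}$. From the shape of $B$, a direct check shows that $BB^{\dag}$ is the diagonal matrix with entry $\lambda_j^{q+1}$ at position $(k+j,k+j)$ for $1\le j\le i$ and zeros elsewhere; its rank is exactly $\#\{j:\lambda_j\neq 0\}=\sharp I$. Feeding $\textnormal{rank}(G'(G')^{\dag})=\sharp I$ into Lemma \ref{lem:hull-H} yields
\[
\dim(Hull_h(C'))=\dim C'-\textnormal{rank}(G'(G')^{\dag})=(k+i)-\sharp I,
\]
which is precisely the claim of part~$1$ for $C'$. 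Its Hermitian dual $C'^{\perp_h}$ is MDS with parameters $[n,n-k-i]_{q^2}$ and inherits the same hull dimension via the second identity of Lemma \ref{lem:hull-H}.

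For part~$2$ the strategy is identical, but starts from the $[n+1,k+2]$ Hermitian self-orthogonal code of Theorem \ref{thm:embedding-new} part~$1$; there one of the two extension scalars has already been absorbed into the prior puncturing, so the residual block $B$ contains exactly one nonzero $\lambda_j$. Removing the single remaining non-evaluation coordinate lands at an MDS GRS code of parameters $[n,k+2,n-k-1]_{q^2}$, and the identity $\textnormal{rank}(BB^{\dag})=1$ gives Hermitian hull dimension $(k+2)-1=k+1$; the MDS dual of parameters $[n,n-k-2]_{q^2}$ inherits this hull dimension as before. The only point that needs a brief check is that puncturing the last $i$ columns of ${\cal G}_i$ preserves both dimension $k+i$ and minimum distance $n-k-i+1$, but that is immediate once $G'$ is recognised as a GRS generator matrix. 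Beyond this, the argument reduces to the single linear-algebra identity $G'(G')^{\dag}=-BB^{\dag}$ together with Lemma \ref{lem:hull-H}, with the count of $\sharp I$ already supplied by Remark \ref{rem:1} and Corollary \ref{cor:explicit-amds}, so I do not anticipate a substantial obstacle.
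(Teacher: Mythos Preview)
Your proposal follows essentially the same route as the paper: invoke Theorem~\ref{thm:embedding-new} to obtain the Hermitian self-orthogonal code $C_i$ with generator matrix ${\cal G}_i$ of shape~(\ref{eq:Gi}), puncture the appended $i$ coordinates to land on the GRS code of length $n$, and read off the hull dimension from the rank of the Gram matrix via Lemma~\ref{lem:hull-H}. Your explicit identity $G'(G')^{\dag}=-BB^{\dag}$ makes the rank count $\sharp I$ slightly more transparent than the paper's terse wording, and your treatment of part~2 (one surviving nonzero $\lambda_j$, hence $\sharp I=1$) mirrors the paper's claim that $\lambda_1\neq 0,\ \lambda_2=0$; the two arguments are the same in substance.
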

\begin{proof} We prove the theorem using similar technique to that in the proof of Theorem \ref{thm:embedding-new}.
\begin{enumerate}

\item  If $(n-1)|k(q+1)$, then from Theorem \ref{thm:embedding-new}, there exists a Hermitian self-orthogonal $[n+i,k+i]_{q^2}$ code. Consider the code $C_i$ with its generator matrix ${\cal G}_i$ given by (\ref{eq:Gi}).
%Then there exists a subset $S_I=\{i_1,\hdots,i_I\}\subseteq \{1,\hdots,i\}$ with $\sharp S_I|=j$ such that $\lambda_a \not=0$ for any $a\in S_j.$  
By puncturing the last $i$ coordinates of ${\cal G}_i$, one obtains a generator matrix ${\cal G}_{i_0}$ satisfying $\text{Rank}({\cal G}_{i_0}{\cal G}_{i_0}^\dag)=k+i-\sharp I$. It is not difficult to check that ${\cal G}_{i_0}$ generates an MDS Hermitian $(k+i-\sharp I)$-$dim$ hull code.
\item  If $(n-1)|(k+1)(q+1)$, then there exist $\lambda_1\not=0$ and $\lambda_2=0$, and the result follows from point 1).
\end{enumerate}

\end{proof}

\begin{exam} One can easily obtain an MDS Hermitian $7$-$dim$ hull $[25,9,17]_{5^2}$ code and an MDS Hermitian $6$-$dim$ hull $[25,8,18]_{5^2}$ code from the generator matrix of Example \ref{ex:1}, as well as an MDS Hermitian $5$-$dim$ hull $[17,9,9]_{5^2}$ code and an MDS Hermitian $5$-$dim$ hull $[17,8,10]_{7^2}$ code from the generator matrix of Example \ref{ex:2}. These four codes have their dimension $k>\frac{n+q-1}{q+1}$, and thus their parameters are never reachable by \cite{FangFuLiZhu}.
\end{exam}

\begin{cor} Let $q=p^m$ and $k=\lfloor \frac{n+q-1}{q+1}\rfloor$. Then
\begin{enumerate}
\item for $n=q^2$, there exist MDS Hermitian $(k+i-1)$-$dim$ hull codes with parameters $[q^2,k+i]_{q^2}$ and $[q^2,q^2-k-i]_{q^2}$ for $1\le i\le q$;
\item $n=2(q+1)+1$, there exist MDS Hermitian $(k+i-\lfloor \frac{i-1}{2} \rfloor -1)$-$dim$ hull codes with parameters $[2(q+1)+1,k+i]_{q^2}$ and $[2(q+1)+1,2(q+1)+1-k-i]_{q^2}$ for $1\le i\le q$.
\end{enumerate}
\end{cor}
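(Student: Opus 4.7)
The plan is to derive both claims directly from Theorem~\ref{thm:family-new-mds}(1), with the dual statements following from Lemma~\ref{lem:hull-H}.

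For (1), I would take $n=q^{2}$ and verify both divisibility hypotheses of Theorem~\ref{thm:family-new-mds}(1): $n-1=q^{2}-1$ gives the first, while $k=\lfloor(q^{2}+q-1)/(q+1)\rfloor=q-1$ together with $k(q+1)=q^{2}-1=n-1$ gives the second. That theorem then supplies MDS codes with parameters $[q^{2},k+i]_{q^{2}}$ whose Hermitian hull has dimension $k+i-\sharp I$, so the task reduces to counting $\sharp I$. I would use the defining relation $\lambda_r^{q+1}=-\sum_{j=1}^{n}v_j^{q+1}\alpha_j^{(k+r-1)(q+1)}$ coming from the self-orthogonality of the $(k+r)$-th row of ${\cal G}_i$: since the nonzero $\alpha_j$ exhaust $\F_{q^{2}}^{*}$, the exponential $\alpha_j^{t(q+1)}$ has period $q-1$ in $t$, so the right-hand side depends only on $(r-1)\bmod(q-1)$. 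When $r-1\not\equiv 0\pmod{q-1}$ the sum coincides with one of the self-orthogonality relations for the original $[n,k]$ code and vanishes, whereas when $r-1\equiv 0$ it equals $-v_n^{q+1}\neq 0$. This selects only $r=1$ in the relevant range, so $\sharp I=1$ and the hull dimension is $k+i-1$.

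For (2), I would take $n=2(q+1)+1$, so that $n-1=2(q+1)$; the condition $(n-1)\mid(q^{2}-1)$ forces $q$ odd, and $k=\lfloor(3q+2)/(q+1)\rfloor=2$ with $k(q+1)=n-1$. Here the nonzero $\alpha_j$ are the $2(q+1)$-th roots of unity, so $\alpha_j^{q+1}=\pm 1$ and $\alpha_j^{t(q+1)}$ has period $2$ in $t$. Repeating the same analysis, the sum governing $\lambda_r^{q+1}$ depends only on the parity of $r+1$; for $r$ even it vanishes by self-orthogonality of the original $[n,2]$ code, while for $r$ odd it equals $-v_n^{q+1}\neq 0$. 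Hence exactly the odd-index $\lambda_r$ with $1\le r\le i$ are nonzero, giving $\sharp I=\lceil i/2\rceil=\lfloor(i-1)/2\rfloor+1$ and hull dimension $k+i-\lfloor(i-1)/2\rfloor-1$.

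The dual statements for $[q^{2},q^{2}-k-i]_{q^{2}}$ and $[2(q+1)+1,2(q+1)+1-k-i]_{q^{2}}$ then follow at once, since Lemma~\ref{lem:hull-H} gives $\dim(Hull_h(C))=\dim(Hull_h(C^{\perp_h}))$ and the Hermitian dual of an MDS code is MDS. The main obstacle is the counting of $\sharp I$; once the correct period ($q-1$ in case (1) and $2$ in case (2)) is identified, the rest is essentially the bookkeeping already carried out in Remark~\ref{rem:1} and illustrated in Examples~\ref{ex:1}--\ref{ex:2}.
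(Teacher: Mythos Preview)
Your approach is essentially the same as the paper's: invoke Theorem~\ref{thm:family-new-mds}(1) after checking the divisibility hypotheses, and then determine $\sharp I$ by computing which $\lambda_r$ vanish. The paper's own proof is far terser---it simply asserts that $\lambda_1\neq 0$, $\lambda_2=\cdots=\lambda_{q-1}=0$ in case~(1) and that the $\lambda_r$ alternate in case~(2), then cites Theorem~\ref{thm:family-new-mds}(1)---whereas you actually justify these claims via the periodicity of $\alpha_j^{t(q+1)}$ and the self-orthogonality relations of the underlying $[n,k]$ code, which is precisely the mechanism behind Remark~\ref{rem:1}. So your argument is correct and matches the paper's, only with the bookkeeping spelled out.

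One small caveat worth flagging: in part~(1) your own periodicity argument shows that $r=q$ also satisfies $(r-1)\equiv 0\pmod{q-1}$, so when $i=q$ one has $\sharp I=2$, not $1$. The paper's proof is silent on $\lambda_q$ as well (it only lists $\lambda_1,\ldots,\lambda_{q-1}$), so this edge case is not a defect of your proposal but rather an imprecision already present in the statement and proof of the corollary for $i=q$; for $1\le i\le q-1$ everything is clean.
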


\begin{proof}
\begin{enumerate}
\item Take $n-1=q^2-1$. Then we get that $\lambda_1\not=0$ and $\lambda_2=\cdots=\lambda_{q-1}=0$. The result follows from Theorem \ref{thm:family-new-mds} 1).
\item Take $n-1=2(q+1)$. Then we get that $\lambda_1=\lambda_3=\cdots=\lambda_{q-2}\not=0$ and $\lambda_2=\cdots=\lambda_{q-1}=0$. The result follows from Theorem \ref{thm:family-new-mds} 1).
\end{enumerate}
\end{proof}

\begin{thm}\label{thm:family-AMDS2} Let $q$ be an odd prime power, $N|(q^2-1)$, $n_2=\frac{N}{\gcd (N,q+1)}$, $1\le t\le \frac{q-1}{n_2}-1$. Put $n=(t+1)N+1$, $ k= \lfloor\frac{n+q-1}{q+1} \rfloor$, $k_i=k+i$ and $n_i=n+i$. Then 
\begin{enumerate}
\item there exist AMDS Hermitian $\ell$-$dim$ hull codes with parameters $[n+2,k+2]_{q^2}$ and $[n+2,n-k]_{q^2}$ for $0\le \ell \le k+2$;
\item  there exist Hermitian $\ell$-$dim$ hull codes with parameters $[n+i,k+i,\ge n-k-i+1]_{q^2}$ code for $0\le \ell \le k+i$ for $1\le i \le q$;
\item there exist Hermitian $\ell$-$dim$ hull codes with parameters $[n,2k-1,\ge n-k-q+2]_{q^2}$ for $0\le \ell \le 2k-1$ ;
\item  there exist MDS Hermitian $(k+i-\sharp I)$-$dim$ hull codes with parameters $[n,k+i]_{q^2}$ and $[n,n-k-i]_{q^2}$ for $1\le i \le q$.
%\item  there exist MDS Hermitian $(k+i-j)$-$dim$ hull codes with parameters $[n,k+i]_{q^2}$ and $[n,n-k-i]_{q^2}$ for $0\le j\le k+i$.
\end{enumerate}
\end{thm}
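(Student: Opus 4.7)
The plan is to adapt the proof templates of Theorems \ref{thm:family-AMDS} and \ref{thm:family-new-mds} to the odd-$q$ embedding provided by Theorem \ref{thm:embedding-new3}. First, I would invoke Proposition \ref{thm:MDS-existence} item 7(c) to obtain an MDS Hermitian self-orthogonal $[n,k]_{q^2}$ seed code $C_{\mathcal{L}_q}(D,G;{\bf v})$ with $n=(t+1)N+1$ and $k=\lfloor(n+q-1)/(q+1)\rfloor$, and verify that the remaining hypothesis $k+q\le n/2$ of Theorem \ref{thm:embedding-new3} holds under the range $1\le t\le (q-1)/n_2-1$.

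For part 2, Theorem \ref{thm:embedding-new3} directly produces Hermitian self-orthogonal codes of parameters $[n+i,k+i,\ge n-k-i+1]_{q^2}$ for each $1\le i\le q$, and Lemma \ref{lem:char2} promotes each to a Hermitian $\ell$-dim hull code of the same parameters for every $0\le \ell\le k+i$. Part 3 is the Corollary stated immediately after Theorem \ref{thm:embedding-new3} applied to this seed, which delivers a Hermitian self-orthogonal $[n,2k-1,\ge n-k-q+2]_{q^2}$ code that Lemma \ref{lem:char2} promotes to the claimed $\ell$-dim hull range.

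For part 4, I would follow the strategy of Theorem \ref{thm:family-new-mds}: write the generator matrix $\mathcal{G}_i$ from (\ref{eq:Gi}) as the block matrix $\mathcal{G}_i=(\mathcal{G}_{i_0}\mid L)$, where $L$ collects the $\lambda_j$'s. Puncturing the last $i$ coordinates yields the code generated by $\mathcal{G}_{i_0}$, which is the Reed-Solomon-type code $C_{\mathcal{L}_q}(D,(k+i-1)O;{\bf v})$ and hence MDS of parameters $[n,k+i]_{q^2}$. The Hermitian self-orthogonality of $C_i$ gives $\mathcal{G}_{i_0}\mathcal{G}_{i_0}^{\dag}=-LL^{\dag}$, whose rank is exactly $\sharp I$, so Lemma \ref{lem:hull-H} identifies the hull dimension as $k+i-\sharp I$. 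The dual $[n,n-k-i]_{q^2}$ is MDS with the same hull dimension by the same lemma.

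Part 1 is a sharpening of part 2 at $i=2$: refining the distance bound $d\ge n-k-1$ to $d=n-k$ for the $[n+2,k+2]_{q^2}$ code. The upper bound $d\le n-k$ follows because no MDS Hermitian self-orthogonal $[n+2,k+2]_{q^2}$ code can exist, since $k+2$ exceeds the known threshold $\lfloor(n+q+1)/(q+1)\rfloor$. The matching lower bound $d\ge n-k$ is obtained by inspecting (\ref{eq:Gi}) at $i=2$ and combining the MDS property of the restriction to the first $n$ coordinates with the nonzero pattern of the $\lambda_j$'s to rule out weight-$(n-k-1)$ codewords. The dual $[n+2,n-k]_{q^2}$ is AMDS since $n+2>(k+2)+q$ under the hypotheses, and the hull dimensions match by Lemma \ref{lem:hull-H}. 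This AMDS sharpening is the main obstacle; the other three parts are essentially bookkeeping applications of the embedding-and-puncturing machinery already developed.
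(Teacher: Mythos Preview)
Your overall strategy matches the paper's: its own proof is a two-sentence appeal to the existence of the MDS Hermitian self-orthogonal $[n,k]_{q^2}$ seed from \cite[Construction 7]{SokQSC} together with ``the same reasoning as in the proof of Theorem \ref{thm:family-new-mds},'' and you have correctly unpacked that into the combination of Theorem \ref{thm:embedding-new3}, its corollary, the puncturing argument, and Lemma \ref{lem:char2}. Parts 2--4 go through essentially as you describe.

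Your treatment of part 1, however, does not work as written. The ``upper bound'' step---that no MDS Hermitian self-orthogonal $[n+2,k+2]_{q^2}$ code exists because $k+2$ exceeds $\lfloor(n+q+1)/(q+1)\rfloor$---is not a theorem in the paper; that threshold appears only as a \emph{sufficient} hypothesis in Lemma \ref{thm:key}, never as a necessary condition. Worse, the very case analysis you sketch for the lower bound shows that if both $\lambda_1$ and $\lambda_2$ are nonzero then $C_2$ is actually MDS: any nonzero codeword involving row $g_{k+2}$ picks up weight in coordinate $n+2$, and its restriction to the first $n$ coordinates lies in an MDS $[n,k+2]$ code, so the total weight is at least $(n-k-1)+2=n-k+1$. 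Hence your upper and lower bound arguments are in tension. The honest route to $d=n-k$ (rather than merely $d\ge n-k$) would be to prove $\lambda_2=0$ for this family, which you do not attempt; the paper itself does not address this and appears to use ``AMDS'' loosely. A smaller issue: your deferred verification that $k+q\le n/2$ can fail for small parameters (e.g.\ $N=q+1$, $t=1$, $q$ odd gives $n=2q+3$, $k=2$, so $k+q>n/2$), so Theorem \ref{thm:embedding-new3} does not cover every case in the stated range and you may need Theorem \ref{thm:embedding-new} as a fallback there.
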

\begin{proof}It was proved in \cite[Construction 7]{SokQSC} that there exists an MDS Hermitian self-orthogonal code with parameters $[n,k]_{q^2}$. Now, the result follows from the same reasoning as in the proof of Theorem \ref{thm:family-new-mds}.
\end{proof}
In the following, we give another family of MDS Hermitian $\ell$-$dim$ hull codes. 

For sake of stating our result, let us fix the following notation and setting. Let $q=p^m$ be a prime power and $n|(q^2-1)$. Write
\begin{equation}
n_{1}=\gcd (n, q-1) \text{ and }n_{2}=\frac{n}{n_1}. 
\label{eq:n1n2}
\end{equation}
Then the first equality of (\ref{eq:n1n2}) implies that $n_2$ and $\frac{q-1}{n_1}$ are coprime, and thus we get that $n_2|(q+1)\frac{q-1}{n_1}$, and so $n_2|(q+1)$.
Let $U_n$ and $V_n$ be two subgroups of $\mathbb{F}_{q^{2}}^{*}$ generated by $w^{\frac{q^{2}-1}{n}}$ and $w^{\frac{q-1}{n_{1}}}$, respectively, where $w$ is a primitive element of $\mathbb{F}_{q^{2}}$. It can be easily seen that $|U_n|=n$ and $|V_n|=(q+1)n_{1}$. Now $\frac{q^{2}-1}{n}=\frac{q-1}{n_{1}}\cdot\frac{q+1}{n_{2}}$ implies that $ \frac{q-1}{n_{1}} \mid \frac{q^{2}-1}{n}$, and we deduce that $U_n$ is a subgroup of $V_n$. Write $U_n=\{u_1,\hdots,u_n\}$. 

\begin{prop}\label{const:multicoset2} Let $q=p^m$ be a prime power, $n|(q^2-1)$ and $n_2=\frac{n}{\gcd (n,q-1)}$. Put $N=(t+1)n+1$ for $1\le t\le \frac{q+1}{n_2}-1$. Assume that there exists $t$ such that for all $\alpha_1,\hdots,\alpha_t\in V_n\backslash U_n$, we have $\alpha_i^n=\beta_i^{q+1}$ for some $\beta_i\in \F_{q^2}$. Then for $1\le k\le \lfloor \frac{N+q-1}{q+1} \rfloor$, there exist MDS Hermitian $\ell$-$dim$ hull codes with parameters $[N,k]_{q^2}$, $[N+1,k+1]_{q^2}$, $[N,N-k]_{q^2}$ and $[N+1,N-k]_{q^2}$ for $0\le \ell \le k$. 
\end{prop}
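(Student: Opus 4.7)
The plan is to construct an MDS Hermitian self-orthogonal $[N,k]_{q^2}$ code on the projective line using the multicoset structure, and then leverage the earlier lemmas to obtain the four parameter families. This follows the same template used in Theorem \ref{thm:family-AMDS2}, adapted to the coset decomposition $U = \{0\} \cup U_n \cup \alpha_1 U_n \cup \cdots \cup \alpha_t U_n$ of size $N = (t+1)n + 1$.

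Concretely, I would take $D = \sum_{\alpha \in U} P_\alpha$, $G = (k-1)O$ with $O$ the place at infinity, and the differential
\[
\omega \;=\; \frac{dx}{x(x^n-1)\prod_{j=1}^t (x^n - \alpha_j^n)}.
\]
The range $1 \le t \le (q+1)/n_2 - 1$ guarantees that the cosets $\alpha_j U_n \subseteq V_n$ are pairwise disjoint, so $D$ is supported on $N$ distinct rational places disjoint from $O$.

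The main obstacle is verifying that every residue of $\omega$ is a $(q+1)$-th power in $\F_{q^2}^*$, equivalently lies in $\F_q^*$. A standard logarithmic-derivative computation shows that, up to the global factor $n^{-1} \in \F_q^*$, the residue at a point $u \in \alpha_j U_n$ is a rational expression in $\alpha_1^n,\ldots,\alpha_t^n$ alone, the residue at $u \in U_n$ is $\bigl(\prod_{j}(1-\alpha_j^n)\bigr)^{-1}$, and the residue at $P_0$ is $\pm\bigl(\prod_j \alpha_j^n\bigr)^{-1}$. The standing hypothesis $\alpha_j^n = \beta_j^{q+1}$ says precisely that each $\alpha_j^n \in \F_q^*$, since the image of the $(q+1)$-th power map $\F_{q^2}^* \to \F_{q^2}^*$ is exactly $\F_q^*$. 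Hence every residue lies in $\F_q^*$ and can be written as $v_i^{q+1}$ for some $v_i \in \F_{q^2}^*$. This residue verification is the crux of the argument; everything else is assembly from the framework already built.

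With the hypotheses of Lemma \ref{thm:key} verified (the genus is zero, so condition (3) reduces exactly to the stated range on $k$), the code $C_{\mathcal{L}_q}(D,G;\mathbf{v})$ is MDS Hermitian self-orthogonal with parameters $[N,k]_{q^2}$. Lemma \ref{lem:embedding} then extends it to an MDS Hermitian self-orthogonal $[N+1,k+1]_{q^2}$ code, and applying Lemma \ref{lem:char2} to each produces the desired MDS Hermitian $\ell$-dim hull codes $[N,k]_{q^2}$ and $[N+1,k+1]_{q^2}$ for $0 \le \ell \le k$. Finally, the parameters $[N,N-k]_{q^2}$ and $[N+1,N-k]_{q^2}$ are obtained by passing to Hermitian duals: Lemma \ref{lem:hull-H} shows the hull dimension is preserved under dualization, and the dual of an MDS code is MDS by the Singleton bound, completing the four-family statement.
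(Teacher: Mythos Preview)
Your proposal is correct and follows essentially the same architecture as the paper's own proof: both build the evaluation set $U=\{0\}\cup U_n\cup\alpha_1U_n\cup\cdots\cup\alpha_tU_n$, take $\omega=dx/h(x)$ with $h(x)=x(x^n-1)\prod_j(x^n-\alpha_j^n)$, verify the residue condition of Lemma~\ref{thm:key}, and then invoke Lemmas~\ref{lem:embedding} and~\ref{lem:char2} together with Hermitian duality to obtain all four parameter families.

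The one point worth flagging is that your residue verification is actually cleaner than the paper's. You observe directly that the hypothesis $\alpha_j^n=\beta_j^{q+1}$ is equivalent to $\alpha_j^n\in\F_q^*$ (the image of the norm map $\F_{q^2}^*\to\F_{q^2}^*$ being exactly $\F_q^*$), and since $h'(\alpha)$ at every evaluation point is a polynomial in the $\alpha_j^n$ with coefficients in $\F_p$, the residues land in $\F_q^*$ and are therefore $(q+1)$-th powers. The paper instead computes $h'(\alpha_ju_s)=n\alpha_j^n(1-\alpha_j^n)\prod_{i\neq j}(\alpha_j^n-\alpha_i^n)$ explicitly and then argues factor by factor via the identity $\alpha_j^{nq}=\alpha_j^{-n}$ (which holds because $\alpha_j\in V_n$) and the relation $(\alpha_i^n-\alpha_j^n)^{q-1}=-1/(\alpha_i\alpha_j)^n$. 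Your shortcut bypasses this and reaches the same conclusion in one line.
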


\begin{proof}
Let $\alpha_{1}U_n, \ldots , \alpha_{\frac{q+1}{n_{2}}-1}U_n $ be all the distinct cosets of $V_n$ different from $U_n$.

For $1 \leq t \leq \frac{q+1}{n_{2}}-1$, put $U=U_n \bigcup^{t}\limits_{j=1}\alpha_{j}U_n\cup \{0\}$, say $U=\{a_1,\hdots,a_{(t+1)n+1}\}$, and  write $$h(x)=\prod\limits_{\alpha\in U}(x-\alpha).$$
Then the derivative of $h(x)$ is given by 
\begin{equation*}
\begin{array}{ll}
h'(x)=&((n+1)x^n-1)\prod\limits_{i=1}^t(x^n-\alpha_i^n)\\
&+nx^n(x^n-1)\left(\sum\limits_{i=1}^t\prod\limits_{j=1,j\not=i}^t(x^n-\alpha_j^n)\right).\\
\end{array}
\end{equation*}

For $1\le j\le t,1\le s\le n$, we have 
\begin{equation} h'(\alpha_ju_s)=n\alpha_j^n(1-\alpha_j^n)\prod\limits_{i=1,i\not=j}^t(\alpha_j^n-\alpha_i^n).
\label{eq:derivative}
\end{equation}

Now, we check the values of the derivative $h'$ given by (\ref{eq:derivative}).

Since $\alpha_{j}$ is in $V_n$, we can write $\alpha_{j}=w^{e_j\frac{q-1}{n_{1}}}$ for some positive integer $e_j$. Thus,
$\alpha_{j}^{n}=w^{e_jn\frac{q-1}{n_{1}}}=w^{e_jn_{2}(q-1)}$, and $\alpha_{j}^{n(q+1)}=1$, that is, $\alpha_{j}^{nq}=\alpha_j^{-n}.$ Now, the latter equality implies that 
$(\alpha_i^n-\alpha_j^n)^q=\frac{\alpha_j^n-\alpha_i^n}{\alpha_i^n\alpha_j^n},$
and thus 
\begin{equation}
\begin{array}{ll}
(\alpha_i^n-\alpha_j^n)^{q-1}&=\frac{-1}  {(\alpha_i\alpha_j)^n}
=\frac{-1}  {(\beta_i\beta_j)^{q+1}}
%&=\frac{ w^{\frac{q+1}{2}} }  {w^{e_{ij}n_2}}\text{ for some positive integer }e_{ij}
\end{array}
\label{eq:betaij}
\end{equation}
By raising the last equality of (\ref{eq:betaij}) to the power $q+1$, we get $(\beta_i\beta_j)^{q+1}=1$, and this implies that
\begin{equation}
\begin{array}{ll}
\alpha_i^n-\alpha_j^n&=w^{\frac{q+1}{2}}\text{ which is independent of $i,j$.}
\end{array}
\label{eq:alphaij}
\end{equation}

Set $G=(k-1)O$, $D=(h)_0$, $\omega=\frac{dx}{h(x)}$ and $H=((t+1)n-k-1)O$. Thus by Lemma \ref{thm:key}, the code $C_{{\cal L}_{q}}(D,G;{\bf v})$, where ${\bf v}=(v_1,\hdots,v_{(t+1)n+1})$ and $v_i^{q+1}=1/\beta_i$, is Hermitian self-orthogonal over $\F_{q^2}$.
Hence, for $1\le k\le \frac{(t+1)n+q}{q-1}$, there exists an MDS Hermitian self-orthogonal $[(t+1)n+1,k]_{q^2}$ code. By applying the embedding Lemma \ref{lem:embedding}, we get an MDS Hermitian self-orthogonal $[(t+1)n+2,k+1]_{q^2}$ code. Hence, the result follows from Lemma \ref{lem:char2}.
\end{proof}

\begin{cor}
Let $q=p^m$ be a prime power and $N=2(q+1)+1$. Then for $1\le k\le \lfloor \frac{N+q-1}{q+1} \rfloor$, there exist MDS Hermitian $\ell$-$dim$ hull codes with parameters $[N,k]_{q^2}$, $[N+1,k+1]_{q^2}$, $[N,N-k]_{q^2}$ and $[N+1,N-k]_{q^2}$ for $0\le \ell \le k$. 
\end{cor}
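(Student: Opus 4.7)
The plan is to specialize Proposition \ref{const:multicoset2} to the single choice $n=q+1$ and $t=1$, which gives $N=(t+1)n+1=2(q+1)+1$ as required. I would then only need to verify the arithmetic hypotheses of the proposition and the ``every $\alpha^n$ is a $(q+1)$-th power'' condition; once those are in place, the four claimed families together with their Hermitian duals are read off directly from Proposition \ref{const:multicoset2}.

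The first step is the arithmetic check. Since $q+1\mid q^2-1$ we have $n\mid q^2-1$. Assuming $q$ odd, $n_1=\gcd(q+1,q-1)=2$ and therefore $n_2=(q+1)/2$, which gives $(q+1)/n_2-1=1$: the admissible range $1\le t\le (q+1)/n_2-1$ collapses to exactly $t=1$, matching our choice. I would flag explicitly that for even $q$ this window is empty, so this direct route requires $q$ odd; this is the one genuine caveat of the statement.

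The second step is the norm condition. With $n=q+1$, for any $\alpha_1\in V_n\setminus U_n$ the element $\alpha_1^n=\alpha_1^{q+1}$ is just the Hermitian norm $N_{\F_{q^2}/\F_q}(\alpha_1)$, which is Frobenius-fixed and hence lies in $\F_q^*$. Because the norm map $x\mapsto x^{q+1}\colon \F_{q^2}^*\to \F_q^*$ is a surjective group homomorphism (its kernel is precisely the cyclic group $U_n$ of order $q+1$), there exists $\beta_1\in \F_{q^2}^*$ with $\beta_1^{q+1}=\alpha_1^n$, so the hypothesis of Proposition \ref{const:multicoset2} is satisfied for the unique nontrivial coset of $U_n$ in $V_n$.

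With both checks in hand, Proposition \ref{const:multicoset2} applies and yields MDS Hermitian $\ell$-$dim$ hull codes with parameters $[N,k]_{q^2}$ and $[N+1,k+1]_{q^2}$ for $1\le k\le\lfloor (N+q-1)/(q+1)\rfloor$ and $0\le \ell\le k$; taking Hermitian duals (which preserves both the MDS property and the hull dimension) produces the complementary $[N,N-k]_{q^2}$ and $[N+1,N-k]_{q^2}$ families, completing the corollary. I do not anticipate a real technical obstacle: the argument is essentially a bookkeeping exercise on top of Proposition \ref{const:multicoset2}, and the only mildly delicate point is the parity-of-$q$ restriction noted above.
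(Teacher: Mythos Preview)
Your proof is correct and follows the same high-level strategy as the paper: specialize Proposition~\ref{const:multicoset2} to $n=q+1$ and $t=1$, then read off the four families. You and the paper both compute $n_1=\gcd(q+1,q-1)=2$ and $n_2=(q+1)/2$, so that $t=1$ is the only admissible value; you are also right to flag that this implicitly forces $q$ to be odd (the paper simply writes $n_1=2$ without comment, so the same restriction is present there too).

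The one place your argument diverges from the paper's is in \emph{which} condition gets verified. You check the proposition's stated hypothesis, namely that $\alpha_1^{\,n}=\alpha_1^{q+1}$ is a $(q+1)$-th power; this is immediate (take $\beta_1=\alpha_1$, or argue via surjectivity of the norm as you do). The paper instead steps back inside the residue computation from the proof of Proposition~\ref{const:multicoset2} and verifies directly that $1-\alpha_1^{q+1}$ is a $(q+1)$-th power: writing $\alpha_1=w^{e_1(q-1)/n_1}$, it shows that $e_1$ must be odd (otherwise $\alpha_1^{q+1}=1$ and $\alpha_1\in U_n$), whence $1-\alpha_1^{q+1}=w^{(1-e_1)(q+1)/2}$ with $(1-e_1)/2\in\Z$. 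This factor is precisely what appears in $h'(u_s)$ for $u_s\in U_n$, a case the proposition's own proof does not spell out explicitly. So your route treats the proposition as a black box, while the paper's route makes this extra residue check concrete. Both are valid: in fact your observation that $\alpha_1^{q+1}\in\F_q^*$ already forces $1-\alpha_1^{q+1}\in\F_q^*$, hence a norm, so the two verifications coincide in content.
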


\begin{proof} Take $n=q+1$. Then $n_1=\gcd(n,q-1)=2$ and thus $n_2=\frac{q+1}{2}$. With similar argument as in the proof of Theorem \ref{const:multicoset2}, we get 
\begin{equation*}
\begin{array}{ll}
(1-\alpha_1^{q+1})&=\frac{w^{\frac{q+1}{2}}}  {w^{e_1n_2}}=\frac{w^{\frac{q+1}{2}}}  {w^{e_1\frac{q+1}{2}}}\\
%&=\frac{ w^{\frac{q+1}{2}} }  {w^{e_{ij}n_2}}\text{ for some positive integer }e_{ij}
\end{array}
\end{equation*}
If $e_1$ was even, then $\alpha_1^n=w^{(q-1)n_2e_1}=w^{(q^2-1)\frac{e_1}{2}}=1$, and thus $\alpha_1\in U_n$ which is a contradiction. Hence, $e_1$ is odd, and the proof is completed.
\end{proof}

\subsection{Hermitian $\ell$-$dim$ hull codes from elliptic curves, hyper-elliptic curves and Hermitian curves}
In the previous subsection, we have already seen that Hermitian self-orthogonal codes, constructed from the projective lines, have their lengths upper bounded  by $q^2+1$ for MDS case ($q>2$), and by $q^2+q+1$ for non MDS case, and thus the code lengths of the corresponding Hermitian $\ell$-$dim$ hull codes can not go beyond $q^2+q+1$. In this subsection, we consider Hermitian $\ell$-$dim$ hull codes from some algbraic curves over $\F_q$ which have many $\F_q$-rational points. These curves could produce long codes with large minimum distance.  By the famous Hasse-Weil bound, the bound of the number of $\mathbb F_q$-rational points of a smooth projective curve $\cal X$ defined over $\mathbb F_q$ is given as follows:
 $$|\sharp{\cal X}(\mathbb F_q)-(q+1)|\le 2 g \sqrt{q},$$
where $g$ is the genus of ${\cal X}$. When the bound is attained, the curves are called maximal. We are considering some  well-known maximal curves, for example elliptic curves, hyper-elliptic curves and Hermitian curves and employ them to construct Hermitian hull codes from AG codes.

First we consider elliptic curves in even characteristic.
Let $q = 2^m$ and an elliptic curve be defined by the equation
\begin{equation}
{\cal E}_{a,b,c}:~y^2 + ay = x^3 + bx + c,
\label{eq:elliptic-curve}
\end{equation}
where $a, b, c \in\F_q .$ Let denote the number of rational points on ${\cal E}_{a,b,c}$ by ${N}_{a,b,c}$  Let $S$ be the set of $x$-components of the affine points of ${\cal E}_{a,b,c}$ over $\F_q $, that is, 
\begin{equation}
S_{a,b,c} := \{\alpha\in \F_q| \exists \beta \in \F_q\text{ such that } \beta^2+a\beta=\alpha^2+b\alpha+c\}.
\label{eq:set-elliptic}
\end{equation}

It should be noted that any $\alpha\in S_{1,b,c}$ gives rise to two points on ${\cal E}_{a,b,c}$ with $x$-component $\alpha$, and we denote them by $P^{(1)}_\alpha$ and $P^{(2)}_\alpha.$ 
The numbers ${N}_{1,b,c}$ of $\F_q$-rational points of elliptic curves ${\cal E}_{1,b,c}$ over $\F_q$ with $q$ being a square are given in Table \ref{table:size-elliptic}.
\begin{table}
\caption{Numbers of rational points of elliptic curves over $\F_q$ with $q=2^m$ being a square}\label{table:size-elliptic}
\begin{center}
\begin{tabular}{c|c|c}
\text{Elliptic curve }${\cal E}_{1,b,c}$&$m$&${N}_{1,b,c}$\\
\hline
$y^2 + y = x^3$&$m\equiv 0 \pmod 4$&$q+1-2\sqrt{q}$\\
&$m\equiv 2 \pmod 4$&$q+1+2\sqrt{q}$\\
\hline
$y ^2 + y = x ^3 + bx ( T r_1^m ( b) = 1 )$&$m\text{ even }$&$q+1$\\
\hline
\multirow{2}{13em}{$y ^2 + y = x ^3 + c ~( T r_1^ m ( c ) = 1 )$}&$m\equiv 0 \pmod 4$&$q+1+2\sqrt{q}$\\
&$m\equiv 2 \pmod 4$&$q+1-2\sqrt{q}$\\
\end{tabular}
\end{center}
\end{table}

We can now state an existence of AMDS Hermitian $\ell$-$dim$ hull codes from the elliptic curves as follows.
\begin{prop}\label{thm:elliptic}
Let $q=2^s, m=2s$ and $2k+2\le n\le \lfloor q+\sqrt{2q}-5\rfloor$. Then there exist AMDS Hermitian $\ell$-$dim$ hull codes with parameters $[n,k]_{q^2}$ and $[n,n-k]_{q^2}$ for $0\le \ell \le k.$
\end{prop}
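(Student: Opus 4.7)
The plan is to construct an AMDS Hermitian self-orthogonal $[n,k,n-k]_{q^2}$ code from one of the elliptic curves in Table \ref{table:size-elliptic} and then invoke Lemma \ref{lem:char2} to realize every hull dimension $0\le\ell\le k$. The $[n,n-k]_{q^2}$ family will follow by Hermitian duality together with the Dodunekov--Langev duality for AMDS codes.

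First I would select one of the curves ${\cal E}_{1,b,c}$ of Table \ref{table:size-elliptic}. Because $m=2s$, at least one choice yields an $\F_{q^2}$-rational point count $N_{1,b,c}\geq q^2+1-2q$ (and, for suitable $b,c$, up to $q^2+1+2q$), which comfortably exceeds the upper bound imposed on $n$. Pick $n$ distinct affine rational points $P_{\alpha_1},\ldots,P_{\alpha_n}$ on ${\cal E}_{1,b,c}$ and set $D=\sum_{i=1}^{n}P_{\alpha_i}$ and $G=(k-1)O$, where $O$ is the point at infinity.

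The technical heart is to produce a Weil differential $\omega$ with $\textnormal{Res}_{P_{\alpha_i}}(\omega)=v_i^{q+1}$ for some $v_i\in\F_{q^2}^*$, which is equivalent to each residue lying in $\F_q^*$. A natural choice is $\omega=df/f$ for a rational function $f\in\F_q({\cal E})$ whose divisor has simple zeros exactly at the $P_{\alpha_i}$ and a pole concentrated at $O$; since $\textnormal{char}(\F_q)=2$, the residues of $df/f$ at its simple zeros all equal $1\in\F_2\subset\F_q^*$, which is automatically a $(q+1)$-th power. The existence of such an $f$ reduces to a Riemann--Roch calculation on ${\cal E}$ applied to the divisor $nO-D$, which is facilitated by $\deg D=n\ge 2k+2>2g=2$.

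With $\omega$ in hand, I would verify the hypotheses of Lemma \ref{thm:key}. Condition (3), $g+1\le k\le\lfloor(n+q+2g-1)/(q+1)\rfloor$, becomes $2\le k\le\lfloor(n+q+1)/(q+1)\rfloor$, which follows from $2k+2\le n$ by routine algebra. Condition (1), $G\le H=D-G+(\omega)$, follows from the explicit form of $\omega$ together with $\deg H=n-(k-1)$. Lemma \ref{thm:key} then yields a Hermitian self-orthogonal $[n,k,d]_{q^2}$ code with $d\ge n-k+1$; since the genus of ${\cal E}$ is $1$, a standard counting argument on residual divisors rules out equality $d=n-k+1$ (i.e. MDS) in this regime, forcing $d=n-k$, so the code is genuinely AMDS. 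Lemma \ref{lem:char2} then promotes this AMDS Hermitian self-orthogonal code to an AMDS Hermitian $\ell$-$dim$ hull $[n,k,n-k]_{q^2}$ code for every $0\le\ell\le k$. For the dual parameters, since $\textnormal{Hull}_h(C)=\textnormal{Hull}_h(C^{\perp_h})$, the Hermitian dual carries the same $\ell$-dimensional hull, and because $n\ge 2k+2$ places us comfortably within the Dodunekov--Langev range \cite{DodLan}, the dual also stays AMDS, yielding the $[n,n-k]_{q^2}$ code.

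The main obstacle is the explicit construction of the rational function $f$ (equivalently the Weil differential $\omega$) whose residues at \emph{all} chosen points simultaneously lie in $\F_q^*$; for small $k$ the Riemann--Roch space ${\cal L}(nO-D)$ can become too restrictive, so the $P_{\alpha_i}$ must be selected carefully among the affine rational points of the maximal curve, exploiting the abundance guaranteed by the upper bound on $n$. A subsidiary subtlety is the verification that the resulting code has $d=n-k$ rather than $d=n-k+1$, which requires ruling out MDS behaviour on ${\cal E}$ via a residual-divisor argument.
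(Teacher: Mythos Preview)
Your verification of Condition (3) of Lemma~\ref{thm:key} is incorrect, and this breaks the argument. You assert that $k\le\lfloor(n+q+1)/(q+1)\rfloor$ ``follows from $2k+2\le n$ by routine algebra,'' but Condition (3) is equivalent to $n\ge (k-1)(q+1)$, which is far stronger than $n\ge 2k+2$ once $q\ge 4$. Concretely, take $q=16$ and $n=16$ (permitted since $\lfloor 16+\sqrt{32}-5\rfloor=16$): the proposition allows $k$ up to $7$, yet $\lfloor(n+q+1)/(q+1)\rfloor=\lfloor 33/17\rfloor=1$. The dimension ceiling in Lemma~\ref{thm:key} is of order $n/(q+1)$, whereas the proposition needs $k$ up to $(n-2)/2$; these two regimes are essentially disjoint for the parameters at hand. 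A secondary problem: with $G=(k-1)O$ on a genus-one curve, Riemann--Roch gives the AG code dimension $k-1$, not $k$, so even if Lemma~\ref{thm:key} applied you would produce an $[n,k-1]$ code rather than the required $[n,k]$ code; your subsequent claim that ``a standard counting argument rules out $d=n-k+1$'' is then also miscalibrated, since for an $[n,k-1]$ code the bound $d\ge n-\deg G=n-k+1$ already makes it at least AMDS, with MDS corresponding to $d=n-k+2$.

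The paper's proof avoids Lemma~\ref{thm:key} altogether: it simply imports the existence of an AMDS Hermitian self-orthogonal $[n,k]_{q^2}$ code from Jin--Xing \cite{JinXing12} and then applies Lemma~\ref{lem:char2}. The Jin--Xing construction is not of the residue-matching type you attempt; it exploits the abelian-group structure of the $\F_{q^2}$-rational points on the elliptic curve to build Hermitian self-orthogonal AG codes with dimension up to roughly $n/2$, not merely $n/(q+1)$. To repair your approach you would have to replace the appeal to Lemma~\ref{thm:key} by that group-theoretic argument (or simply cite \cite{JinXing12}, as the paper does).
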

\begin{proof} The existence of a Hermitian self-orthogonal $[n,k]_{q^2}$ code follows from \cite{JinXing12}. Hence, the result follows from Lemma \ref{lem:char2}.
\end{proof}

Next, we consider the affine hyper-elliptic curve over $\mathbb F_{q^2}$ with $q=2^m$ which is defined by
\begin{equation}
{\cal C}: ~y^2+y=x^{q+1}.
\label{eq:hyper-elliptic}
\end{equation}
Note that for any $\alpha\in \mathbb F_{q^2}$, there are two rational points on ${\cal C}$ with $x$-component $\alpha$, and we denote them by $P_{\alpha}^{(1)}, P_{\alpha}^{(2)}$. 
The set ${\cal C}(\mathbb F_{q^2})$ of all rational points of ${\cal C}$ is equal to $\{P_{\alpha}^{(1)}| \alpha \in \mathbb F_{q^2}\}\cup \{P_{\alpha}^{(2)}|\alpha \in \mathbb F_{q^2}\} \cup \{ O\}$. This curve has genus $g=\frac{q}{2}$. It can be easily checked that the curve $\cal C$ has $1+2q^2=1+q^2+2\frac{q}{2}q$ points, and thus it is a maximal curve.

The constructions of Hermitian $\ell$-$dim$ hull codes from the hyper-elliptic curves are given as follows.

\begin{prop}\label{thm:hyper-elliptic}

Let $q=2^m, m\ge 2$ and $q\le k\le \lfloor \frac{2q^2+2q-1}{q+1} \rfloor$. Put $G=(k-1)O$ and $D=\sum\limits_{\alpha\in \F_{q^2}}(P_{\alpha}^{(1)}+P_{\alpha}^{(2)}).$ Then for $0\le \ell \le k-q/2$, the code $C_{{\cal L}_q}(D,G)$ gives rise to a Hermitian $\ell$-$dim$ hull $[2q^2,k-q/2,\ge 2q^2-k+1]_{q^2}$ code whose Hermitian dual has parameters $[2q^2,2q^2+q/2-k,\ge k+1-q]_{q^2}$.
\end{prop}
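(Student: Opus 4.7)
The plan is to show that $C_{{\cal L}_q}(D,G)$, with ${\bf v}=(1,\hdots,1)$, is Hermitian self-orthogonal via Lemma \ref{thm:key}, to read off its parameters from Riemann-Roch and Proposition \ref{thm:distance}, and then to invoke Lemma \ref{lem:char2} to obtain the full family of Hermitian $\ell$-$dim$ hull codes. The dual parameters will come from the standard AG-code duality.

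First I would take $h(x)=x^{q^2}-x=\prod_{\alpha\in \F_{q^2}}(x-\alpha)$ and set $\omega=dx/h(x)$. The core computation is the divisor $(\omega)$ on $\cal C$. Since $x:{\cal C}\to \mathbb{P}^1$ has degree $2$ and $y^2+y=x^{q+1}$ is an Artin-Schreier equation with $\deg(x^{q+1})=q+1$ coprime to $2$, the cover is totally ramified at $O$ with wild ramification, and the Artin-Schreier different-exponent formula yields the different divisor $R=(q+2)O$. Pulling back the canonical divisor $-2\infty$ on $\mathbb{P}^1$ gives $(dx)_{\cal C}=\pi^*(-2\infty)+R=-4O+(q+2)O=(q-2)O$, consistent with $2g-2=q-2$. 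Over the affine points the cover is unramified, so $\pi^*((\alpha))=P_\alpha^{(1)}+P_\alpha^{(2)}$ and $\pi^*(\infty)=2O$, hence $(h(x))=D-2q^2O$. Therefore $(\omega)=(dx)-(h(x))=(2q^2+q-2)O-D$.

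Next I would verify the three hypotheses of Lemma \ref{thm:key}. With $H=D-G+(\omega)=(2q^2+q-1-k)O$, the inequality $G\le H$ reduces to $k\le q^2+q/2$, which holds since $k\le \lfloor(2q^2+2q-1)/(q+1)\rfloor\le 2q-1$. Since $h'(x)=q^2 x^{q^2-1}-1=-1=1$ in characteristic $2$, and $x-\alpha$ is a local uniformizer at each $P_\alpha^{(j)}$, I obtain $\text{Res}_{P_\alpha^{(j)}}(\omega)=1/h'(\alpha)=1=1^{q+1}$, so condition (2) holds with $v_i=1$. Condition (3) becomes $q/2+1\le k\le \lfloor(2q^2+2q-1)/(q+1)\rfloor$, which is implied by the hypothesis $q\le k$. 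Hence $C_{{\cal L}_q}(D,G)$ is Hermitian self-orthogonal. Since $\deg G=k-1\ge q-1=2g-1$, Riemann-Roch yields $\dim{\cal L}(G)=k-q/2$, and Proposition \ref{thm:distance} gives $d\ge n-\deg G=2q^2-k+1$. Lemma \ref{lem:char2} then produces Hermitian $\ell$-$dim$ hull $[2q^2,k-q/2,\ge 2q^2-k+1]_{q^2}$ codes for every $0\le \ell\le k-q/2$. For the Hermitian dual, one observes that $\text{Hull}_h(C^{\perp_h})=\text{Hull}_h(C)$ so the hull dimension is preserved, the $\F_{q^2}$-dimension is $n-(k-q/2)=2q^2+q/2-k$, and standard AG duality $C_{\cal L}(D,G)^\perp=C_\Omega(D,G)$ combined with the fact that the Hermitian dual is a Frobenius twist of the Euclidean dual bounds the minimum distance below by $\deg G-(2g-2)=k+1-q$, as claimed.

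The main obstacle is the computation of $(dx)$, which demands a careful treatment of the wild Artin-Schreier ramification at $O$ in characteristic $2$ to justify a different exponent of $q+2$ rather than the naive $e-1=1$; once this canonical-divisor computation is in place, every remaining step is a routine application of Riemann-Roch, the residue formula at a simple zero of $h$, and AG-code duality.
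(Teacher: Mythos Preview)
Your proposal is correct and follows essentially the same route as the paper. The paper's own proof simply cites \cite{SokQSC} for the Hermitian self-orthogonality and parameters of $C_{{\cal L}_q}(D,G)$ and then invokes Lemma~\ref{lem:char2}; you have unpacked that citation by verifying the hypotheses of Lemma~\ref{thm:key} directly (computing $(dx)=(q-2)O$ via the Artin--Schreier different, checking the residues $h'(\alpha)=1$, and the divisor inequality), and then applying the same Lemma~\ref{lem:char2} together with the standard AG dual bound for the Hermitian dual parameters.
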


\begin{proof} From \cite{SokQSC}, the code $C_{{\cal L}_q}(D,G)$ is Hermitian self-orthogonal with parameters $[2q^2,k-q/2,\ge 2q^2-k-\frac{q}{2}+1]_{q^2}$. The result follows from Lemma \ref{lem:char2}.
\end{proof}

\begin{prop}\label{thm:hyper-elliptic2}
Let $q=2^m, m\ge 2$, $(n-1)|(q-1)$ and $q\le k\le \lfloor \frac{2n+2q-1}{q+1} \rfloor$. Put $U=\{\alpha\in \F_{q^2}|\alpha^n=\alpha \}$, $G=(k-1)O$ and $D=\sum\limits_{\alpha\in U}(P_{\alpha}^{(1)}+P_{\alpha}^{(2)}).$ Then for $0\le \ell \le k-q/2$, the code $C_{{\cal L}_q}(D,G)$ gives rise to a Hermitian $\ell$-$dim$ hull $[2n,k-q/2,\ge 2n-k+1]_{q^2}$ code whose Hermitian dual has parameters $[2n,2n+q/2-k,\ge k+1-q]_{q^2}$.
\end{prop}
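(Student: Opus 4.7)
The plan is to parallel the proof of Proposition \ref{thm:hyper-elliptic}, replacing the full evaluation set $\F_{q^2}$ by the subset $U$. First I would record the combinatorics of $U$. Since $(n-1)\mid(q-1)$ and $q-1$ is odd (as $q=2^m$), the polynomial $x^{n-1}-1$ splits into distinct linear factors over $\F_q$, so $|U|=n$ and $U\subseteq\F_q\subseteq\F_{q^2}$. In particular $n-1$ is odd, so $n$ is even. For each $\alpha\in U\subseteq\F_q$, the element $\alpha^{q+1}$ is a norm, hence has trace zero, so $y^2+y=\alpha^{q+1}$ has two solutions in $\F_{q^2}$, yielding two rational places $P_\alpha^{(1)},P_\alpha^{(2)}$ of $\cal C$ above $\alpha$. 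Thus $\deg D=2n$.

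Next I would verify the hypotheses of Lemma \ref{thm:key} with the Weil differential $\omega=dx/h(x)$, where $h(x)=\prod_{\alpha\in U}(x-\alpha)=x^n-x$. The canonical divisor of $\cal C$ is $(dx)=(2g-2)O=(q-2)O$, and $(h)=D-2nO$, giving $(\omega)=(2n+q-2)O-D$. Hence $H=D-G+(\omega)=(2n+q-k-1)O$, and the inequality $G\le H$ becomes $k-1\le 2n+q-k-1$, which is implied by the assumed upper bound $k\le\lfloor(2n+2q-1)/(q+1)\rfloor$. The dimension range $g+1\le k$ from Lemma \ref{thm:key} reads $q/2+1\le k$, which follows from the assumption $k\ge q$. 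The residue at $P_\alpha^{(j)}$ uses $x-\alpha$ as a uniformizer (since $\partial(y^2+y-x^{q+1})/\partial y=1\ne0$), yielding $\text{Res}_{P_\alpha^{(j)}}(\omega)=1/h'(\alpha)$. Because $n$ is even, $h'(x)=nx^{n-1}-1\equiv1$ in characteristic $2$, so every residue equals $1=1^{q+1}$. With $v_i=1$ for all $i$, Lemma \ref{thm:key} gives that $C_{\mathcal L_q}(D,G)$ is Hermitian self-orthogonal.

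Proposition \ref{thm:distance} delivers the parameters $[2n,k-q/2,\ge 2n-k+1]_{q^2}$: the dimension is $\deg G+1-g=k-q/2$ (valid since $\deg G=k-1>q-2=2g-2$), and the Goppa bound gives the stated distance. For the Hermitian dual, the codewords lie in $\F_q^{2n}$, so the Hermitian dual coincides with the Euclidean dual $C_\Omega(D,G)$, which has dimension $2n-(k-q/2)=2n+q/2-k$ and minimum distance at least $\deg G-(2g-2)=k-q+1$. Finally, Lemma \ref{lem:char2} applied to the Hermitian self-orthogonal $[2n,k-q/2]_{q^2}$ code produces, for each $0\le\ell\le k-q/2$, a Hermitian $\ell$-dim hull code with the stated parameters; the dual assertion follows from Lemma \ref{lem:hull-H}, which implies that the Hermitian dual of an $\ell$-dim hull code is again an $\ell$-dim hull code.

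The main obstacle is the residue computation, in particular verifying that each residue lies in the image of the $(q+1)$-th power map on $\F_{q^2}^{*}$. The key saving observation is the parity argument: the divisibility $(n-1)\mid(q-1)$ in characteristic $2$ forces $n$ to be even, which collapses $h'(x)$ to the constant $1$ and makes every residue the trivial $(q+1)$-th power. Without this collapse, one would need to invoke the surjectivity of the norm map $\F_{q^2}^{*}\to\F_q^{*}$ to extract roots, which would not be as clean.
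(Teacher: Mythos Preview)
Your proposal is correct and follows the same approach as the paper: establish Hermitian self-orthogonality of $C_{\mathcal L_q}(D,G)$ and then invoke Lemma~\ref{lem:char2}. The paper simply cites \cite{SokQSC} for the self-orthogonality and the parameters, whereas you spell out the verification of Lemma~\ref{thm:key} directly---in particular the parity argument that $(n-1)\mid(q-1)$ with $q$ even forces $n$ even, collapsing $h'(x)=nx^{n-1}-1$ to the constant $1$ so that every residue is $1=1^{q+1}$; this is exactly the computation behind the cited result, so the two arguments coincide in substance.
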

\begin{proof} From \cite{SokQSC}, the code $C_{{\cal L}_q}(D,G)$ is Hermitian self-orthogonal with parameters $[2n,k-q/2,\ge 2n-k+1]_{q^2}$. Hence, the result follows from Lemma \ref{lem:char2}.
\end{proof}

In the following, we consider the affine Hermitian curve over $\mathbb F_{q^2}$ with $q=p^m$ which is defined by
\begin{equation}
{\cal H}: ~y^q+y=x^{q+1}.
\label{eq:hermitian}
\end{equation}
For any $\alpha\in \mathbb F_{q^2}$, there are $q$ rational points on $\cal H$ with $x$-component $\alpha$, and we denote them by $P_{\alpha}^{(1)}, P_{\alpha}^{(2)},\hdots, P_{\alpha}^{(q)}$.
The set ${\cal H}(\mathbb F_{q^2})$ of all rational points of ${\cal H}$ is equal to $\{P_{\alpha}^{(1)}|\alpha \in \mathbb F_{q^2}\}\cup \{P_{\alpha}^{(2)}| \alpha \in \mathbb F_{q^2}\} \cup \cdots \cup \{P_{\alpha}^{(q)}|\alpha \in \mathbb F_{q^2}\}\cup \{O\}$. The curve $\cal H$ has genus $g=\frac{q(q-1)}{2}$ and has $1+q^3=1+q^2+2\frac{q(q-1)}{2}q$ points. Hence, it is a maximal curve.

\begin{prop}\label{thm:hermitian-1}
Let $q=p^m\ge 4$ and $q(q-1)\le k\le \lfloor \frac{q^3+q^2-1}{q+1} \rfloor$. Put ${\bf v}=(1/\sqrt[q+1]{-1},\hdots, 1/\sqrt[q+1]{-1})$,$G=(k-1)O$ and $D=\sum\limits_{\alpha\in \F_{q^2}}(P_{\alpha}^{(1)}+P_{\alpha}^{(2)})$. Then for $0\le \ell \le k-\frac{q(q-1)}{2}$, the code $C_{{\cal L}_q}(D,G;{\bf v})$ gives rise to a Hermitian $\ell$-$dim$ hull $[q^3,k-\frac{q(q-1)}{2},\ge q^3-k+1 ]_{q^2}$ code whose Hermitian dual has parameters $[q^3,q^3+q(q-1)/2-k,\ge k+1-q(q-1)]_{q^2}$.
\end{prop}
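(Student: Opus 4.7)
The plan is to verify the three hypotheses of Lemma~\ref{thm:key} on the Hermitian curve $\mathcal{H}:y^q+y=x^{q+1}$ with the Weil differential $\omega=dx/(x^{q^2}-x)$, and then invoke Lemma~\ref{lem:char2}. The curve has genus $g=q(q-1)/2$, a unique point at infinity $O$, and $q^3$ affine $\F_{q^2}$-rational points, so one reads $D$ as the sum over all $q$ preimages of each $\alpha\in\F_{q^2}$ under $x$ (so that $\deg D=q^3$, matching the claimed length). The cover $x:\mathcal{H}\to\mathbb{P}^1$ has degree $q$ and is totally wildly ramified only at $O$, so $(x)_\infty=qO$ and $x-\alpha$ is a local uniformizer at every $P_\alpha^{(j)}$.

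First I compute the relevant divisors. Since $x^{q^2}-x=\prod_{\alpha\in\F_{q^2}}(x-\alpha)$ vanishes simply at every affine rational point and has a pole of order $q^3$ at $O$, one obtains $(x^{q^2}-x)=D-q^3O$. From $\deg(dx)=2g-2$ together with Hurwitz applied to the cover $x$ (the support of $dx$ being concentrated at the unique ramified place $O$), one gets $(dx)=(q^2-q-2)O$, whence
\[
(\omega)=(q^3+q^2-q-2)O-D,\qquad H=D-G+(\omega)=(q^3+q^2-q-1-k)O.
\]
The inequality $G\le H$ reduces to $2k\le q^3+q^2-q$, which is implied by the hypothesis $k\le\lfloor(q^3+q^2-1)/(q+1)\rfloor=q^2-1$; this bound moreover equals $\lfloor(n+q+2g-1)/(q+1)\rfloor$ with $n=q^3$, so condition~3 of Lemma~\ref{thm:key} is satisfied. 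The lower bound $k\ge q(q-1)=2g$ forces $\deg G>2g-2$, so Riemann--Roch gives the exact value $\dim\mathcal{L}(G)=k-q(q-1)/2$.

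For the residue condition, at $P_\alpha^{(j)}$ the local parameter is $t=x-\alpha$, and a direct expansion yields
\[
\operatorname{Res}_{P_\alpha^{(j)}}(\omega)=\frac{1}{\left.\tfrac{d}{dx}(x^{q^2}-x)\right|_{x=\alpha}}=\frac{1}{q^2\alpha^{q^2-1}-1}=-1,
\]
since $q^2=0$ in characteristic~$p$. The choice $v_i=1/\sqrt[q+1]{-1}$ is legitimate because the norm map $\F_{q^2}^*\to\F_q^*,\ x\mapsto x^{q+1}$ is surjective, and it satisfies $v_i^{q+1}=-1$, matching the residue. Thus Lemma~\ref{thm:key} certifies that $C_{\mathcal{L}_q}(D,G;\mathbf{v})$ is Hermitian self-orthogonal, and Proposition~\ref{thm:distance} combined with the Riemann--Roch dimension above gives its parameters $[q^3,\,k-q(q-1)/2,\,\ge q^3-k+1]_{q^2}$. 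Lemma~\ref{lem:char2} then upgrades this to a Hermitian $\ell$-$dim$ hull code with identical parameters for every $0\le\ell\le k-q(q-1)/2$. For the Hermitian dual, standard AG duality identifies $C_{\mathcal{L}_q}(D,G;\mathbf{v})^{\perp_h}$ with a differential code of dimension $q^3+q(q-1)/2-k$ and designed minimum distance $\deg G-(2g-2)=k+1-q(q-1)$, giving the claimed dual parameters. The principal technical step is the pair of divisor computations $(x^{q^2}-x)=D-q^3O$ and $(dx)=(q^2-q-2)O$ on $\mathcal{H}$ (the second requiring care with the wild ramification at $O$); once these are in hand, the rest of the argument is a direct substitution into the machinery already developed in the paper.
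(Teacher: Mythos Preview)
Your proof is correct and follows the same overall logic as the paper: establish that $C_{\mathcal{L}_q}(D,G;\mathbf v)$ is Hermitian self-orthogonal with the stated parameters, then invoke Lemma~\ref{lem:char2}. The paper's own proof simply cites the reference \cite{SokQSC} for the self-orthogonality and the parameters; you have instead supplied an explicit verification of the hypotheses of Lemma~\ref{thm:key} for the Hermitian curve (the divisor identities $(x^{q^2}-x)=D-q^3O$ and $(dx)=(q^2-q-2)O$, the residue computation $\operatorname{Res}_{P_\alpha^{(j)}}(\omega)=-1$, and the norm argument for $v_i$). In effect you have reconstructed the relevant piece of \cite{SokQSC}, so the two approaches coincide, with yours being self-contained rather than relying on the external reference. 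Your correction of the evident typo in the statement of $D$ (summing over all $q$ preimages $P_\alpha^{(1)}+\cdots+P_\alpha^{(q)}$ rather than just two) is also right and necessary for $\deg D=q^3$. The only place you are slightly informal is the last sentence on the Hermitian dual: what is literally true is that $C^{\perp_h}=\overline{C^{\perp_E}}$, so the Hermitian dual inherits the parameters of the Euclidean dual, which in turn is (monomially equivalent to) $C_\Omega(D,G)$ with designed distance $\deg G-(2g-2)=k+1-q(q-1)$; this is what you mean, and it suffices.
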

\begin{proof} From \cite{SokQSC}, the code $C_{{\cal L}_q}(D,G;{\bf v})$ is Hermitian self-orthogonal with parameters $[q^3,k-\frac{q(q-1)}{2},\ge q^3-k+1 ]_{q^2}$. Hence, the result follows from Lemma \ref{lem:char2}.
\end{proof}

\begin{prop} \label{thm:hermitian-2} Let $q=p^m\ge 4$ and $q(q-1) \le k\le \lfloor \frac{Nq+q^2-1}{q+1} \rfloor$. 
Then for $0\le \ell \le k-\frac{q(q-1)}{2}$, there exists a Hermitian $\ell$-$dim$ hull $[Nq,k-\frac{q(q-1)}{2},\ge Nq-k+1 ]_{q^2}$ code whose Hermitian dual has parameters $[Nq,Nq+q(q-1)/2-k,\ge k+1-q(q-1)]_{q^2}$
if one of the following condition holds
\begin{enumerate}[i)]
\item $n=q^2+1$, $k\le q$, $k\not=q-1$;
 $n=r(q-1)+1$, $k\le (q+r-1)/2$, $q\equiv r-1\mod{2r}$;$n=(q^2+2)/3$, $3|(q+1)$, $k\le (2q-1)/3$;
\item $n=tq$, $1\le t\le q$, $ k\le \lfloor\frac{tq+q-1}{q+1} \rfloor$;$n=t(q+1)+2$, $1\le t\le q-1$, $k\le t +1$, $(p,t,k)\not=(2,q-1,q-1)$;
\item $(n-1)|(q^2-1)$,$ k\le \lfloor\frac{n+q-1}{q+1} \rfloor$;
$n=(t+1)N+i$, $1\le i\le 2$, $N|(q^2-1)$, $n_2=\frac{N}{\gcd (N,q+1)}$, $1\le t\le \frac{q-1}{n_2}-1$, $ k\le \lfloor\frac{n+q-1}{q+1} \rfloor$.
\end{enumerate} 
\end{prop}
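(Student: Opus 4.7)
The plan is to extend the strategy of Proposition \ref{thm:hermitian-1}, which handled the full length $q^3$, to the more flexible length $Nq$ by restricting the evaluation to $q$-point fibers above a carefully chosen subset $U\subset\F_{q^2}$ of size $N$. For each of the sub-cases (i), (ii), (iii), Proposition \ref{thm:MDS-existence} (via the same ingredients that underlie it in \cite{SokQSC,JinXing14,FangFu}) produces a set $U=\{\alpha_1,\dots,\alpha_N\}\subset\F_{q^2}$, an integer regime for $k'$, and a vector ${\bf w}=(w_1,\dots,w_N)\in(\F_{q^2}^*)^N$ such that the projective-line construction $C_{{\cal L}_q}(D_U,(k'-1)O_\infty;{\bf w})$ is MDS Hermitian self-orthogonal. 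The key take-away from these references is that a Weil differential $\omega_0$ on $\mathbb P^1$ exists with $\text{Res}_{P_{\alpha_i}}(\omega_0)=w_i^{q+1}\in\F_q^*$ for every $i$.

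Next, I would lift this data to the Hermitian curve $\cal H$. Let $\pi\colon\cal H\to\mathbb P^1$, $(x,y)\mapsto x$, be the natural degree-$q$ projection, so that $\pi^{-1}(\alpha_i)=\{P_{\alpha_i}^{(1)},\dots,P_{\alpha_i}^{(q)}\}$. Put
\[
D=\sum_{i=1}^{N}\sum_{j=1}^{q}P_{\alpha_i}^{(j)},\qquad G=(k-1)O,
\]
where $O$ is the unique point at infinity of $\cal H$, and take $\omega=\pi^{*}\omega_0$ together with a vector ${\bf v}$ indexed by the $Nq$ chosen places whose entries satisfy $v_{i,j}^{q+1}=\text{Res}_{P_{\alpha_i}^{(j)}}(\omega)$. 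The three hypotheses of Lemma \ref{thm:key} then need to be verified on $\cal H$, where the genus is $g=q(q-1)/2$. The bound $g+1\le k\le\lfloor(Nq+q+2g-1)/(q+1)\rfloor=\lfloor(Nq+q^{2}-1)/(q+1)\rfloor$ matches the stated range, while the stronger assumption $q(q-1)\le k$ is used to guarantee $G\le H=D-G+(\omega)$ after accounting for the pole contribution of $\omega$ at $O$, which is governed by the $(q,q+1)$-Weierstrass structure of $\cal H$ at $O$.

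Once Lemma \ref{thm:key} applies, the code $C_{{\cal L}_q}(D,G;{\bf v})$ is Hermitian self-orthogonal. Riemann--Roch (Proposition \ref{thm:Riemann-Roch}) together with Proposition \ref{thm:distance} gives dimension $\deg G-g+1=k-q(q-1)/2$ and minimum distance at least $Nq-\deg G=Nq-k+1$, matching the target parameters. Feeding this $[Nq,k-q(q-1)/2,\ge Nq-k+1]_{q^2}$ Hermitian self-orthogonal code into Lemma \ref{lem:char2} yields, for every $0\le\ell\le k-q(q-1)/2$, a Hermitian $\ell$-$dim$ hull code with the same parameters. Taking the Hermitian dual and using the standard AG-code duality $C_{\cal L}(D,G)^{\perp}=C_{\Omega}(D,G)$ to bound its minimum distance yields the complementary parameters $[Nq,Nq+q(q-1)/2-k,\ge k+1-q(q-1)]_{q^2}$.

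The main obstacle I foresee is the compatibility step in the second paragraph: one must check, case by case across (i)--(iii), that the Weil differential $\omega_0$ responsible for self-orthogonality on $\mathbb P^1$ pulls back along $\pi$ to a differential on $\cal H$ whose residues at all $Nq$ fibre points still lie in $\{x^{q+1}:x\in\F_{q^2}^{*}\}=\F_q^{*}$. This amounts to controlling the ramification of $\pi$ (which is tame on the affine part and totally ramified at $O$) and confirming that the pullback residues at $P_{\alpha_i}^{(j)}$ are the original $w_i^{q+1}$ up to a $(q+1)$-th power factor. All other steps---Riemann--Roch, the distance bound, and the passage from self-orthogonality to arbitrary hull dimension via Lemma \ref{lem:char2}---are routine once this residue computation is in place.
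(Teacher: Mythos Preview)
Your approach is essentially the same as the paper's: the paper's proof is the two-line argument ``the Hermitian self-orthogonal $[Nq,k-\tfrac{q(q-1)}{2},\ge Nq-k+1]_{q^2}$ code exists by \cite{SokQSC}, so Lemma~\ref{lem:char2} finishes,'' and what you have written is precisely a sketch of how \cite{SokQSC} builds that code on the Hermitian curve over the fibres of a projective-line set $U$, followed by the same appeal to Lemma~\ref{lem:char2}.

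Two small corrections to your write-up. First, the ``obstacle'' you flag is not one: since $\pi:(x,y)\mapsto x$ is unramified at every affine point of $\cal H$, $x-\alpha_i$ is already a local uniformizer at each $P_{\alpha_i}^{(j)}$, so $\text{Res}_{P_{\alpha_i}^{(j)}}(\pi^{*}\omega_0)=\text{Res}_{P_{\alpha_i}}(\omega_0)=w_i^{q+1}\in\F_q^{*}$ with no extra factor; no case-by-case check across (i)--(iii) is needed. Second, the hypothesis $k\ge q(q-1)$ is not what forces $G\le H$ (that inequality follows already from the upper bound on $k$); rather, $k\ge q(q-1)$ is exactly $\deg G>2g-2$, the lower inequality in Proposition~\ref{thm:distance} needed to pin down the dimension via Riemann--Roch, and it also makes the dual-distance bound $k+1-q(q-1)$ positive.
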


\begin{proof} The existence of a Hermitian self-orthogonal $[Nq,k-\frac{q(q-1)}{2},\ge Nq-k+1 ]_{q^2}$ code follows from \cite{SokQSC}. Hence, the result follows from Lemma \ref{lem:char2}.
\end{proof}

\begin{thm}\label{thm:hermitian-3}Let $q=p^m$ be an odd prime power, $1<t< q+1$ and $t|(q+1)$. Put $g=\frac{(q-1)(q+1-t)}{2t}$ and $s={q}\left(\frac{\lfloor q^2-2\rfloor}{t}+2\right)$. Then for $2g \le k\le \lfloor\frac{s+q-1}{q+1}\rfloor$ and $0 \le \ell \le k-g$, there exist $\ell$-$dim$ hull codes with parameters $[s,k-g,\ge s-k+1]_{q^2}$ and $[s,s-k+g,\ge k-2g+1]_{q^2}$.
\end{thm}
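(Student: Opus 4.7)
The plan is to realize the codes as algebraic geometry codes on the curve $\mathcal{X}_t: y^q + y = x^{(q+1)/t}$ over $\mathbb{F}_{q^2}$, prove Hermitian self-orthogonality via Lemma~\ref{thm:key}, and then invoke Lemma~\ref{lem:char2} to pass to all admissible hull dimensions; the dual family follows by taking $C^{\perp_h}$ and applying Lemma~\ref{lem:hull-H}.

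Since $t\mid(q+1)$, the exponent $m=(q+1)/t$ is a positive integer coprime to $q$, so $\mathcal{X}_t$ is a smooth Artin-Schreier cover of $\mathbb{P}^1$ of degree $q$ whose genus equals $(m-1)(q-1)/2=(q-1)(q+1-t)/(2t)=g$. The affine $\mathbb{F}_{q^2}$-rational points lie over the $m(q-1)+1$ values of $\alpha\in\mathbb{F}_{q^2}$ with $\alpha^m\in\mathbb{F}_q$ (that is, $\alpha=0$ or $\alpha^{m(q-1)}=1$), and over each such $\alpha$ the trace equation $\beta^q+\beta=\alpha^m$ has $q$ solutions $\beta\in\mathbb{F}_{q^2}$, for a total of $q(m(q-1)+1)=q(q^2-1)/t+q=s$ affine rational points. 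Let $O$ be the unique point at infinity, $D$ the sum of the $s$ affine rational points, and $G=(k-1)O$. The hypothesis $k\ge 2g$ gives $\deg G=k-1\ge 2g-1$, so Proposition~\ref{thm:Riemann-Roch} yields $\dim \mathcal{L}(G)=k-g$, while $k\le\lfloor(s+q-1)/(q+1)\rfloor$ lies below the upper bound required by Condition~3 of Lemma~\ref{thm:key}.

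For Conditions~1 and~2 of Lemma~\ref{thm:key}, I would take a Weil differential $\omega$ defined over $\mathbb{F}_q$ with divisor $(\omega)=-D+(s+2g-2)O$---its existence is a standard Riemann-Roch computation, and the form is analogous to the differential used in \cite{SokQSC} for the full Hermitian curve. A direct divisor computation then gives $H=D-G+(\omega)=(s+2g-1-k)O$, which dominates $G=(k-1)O$ precisely when $k\le(s+2g)/2$, itself a consequence of the stated upper bound on $k$. The Galois-equivariance of $\omega$ together with the Frobenius-invariance of each $P_{\alpha_i}$ forces $\text{Res}_{P_{\alpha_i}}(\omega)\in\mathbb{F}_q^{*}=(\mathbb{F}_{q^2}^{*})^{q+1}$, so one may choose $v_i\in\mathbb{F}_{q^2}^{*}$ with $v_i^{q+1}=\text{Res}_{P_{\alpha_i}}(\omega)$. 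Lemma~\ref{thm:key} and Proposition~\ref{thm:distance} then deliver a Hermitian self-orthogonal $[s,k-g,\ge s-k+1]_{q^2}$ code.

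Applying Lemma~\ref{lem:char2} to this code yields Hermitian $\ell$-$dim$ hull $[s,k-g,\ge s-k+1]_{q^2}$ codes for every $0\le\ell\le k-g$. By Lemma~\ref{lem:hull-H}, the Hermitian dual shares the same hull dimension $\ell$, with parameters $[s,s-k+g,\ge k-2g+1]_{q^2}$; the distance bound comes from identifying $C^{\perp_h}$ (up to coordinate-wise conjugation) with the differential AG code $C_\Omega(D,G)$, which has minimum distance at least $\deg G-(2g-2)=k-2g+1$. The principal obstacle is the residue verification: constructing a single $\omega$ over $\mathbb{F}_q$ whose divisor is exactly $-D+(s+2g-2)O$, so that all $s$ poles are simple and all residues lie in $\mathbb{F}_q^{*}$. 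While existence and Galois-equivariance follow from Riemann-Roch and the defining equations of $\mathcal{X}_t$, pinning down the exact local behavior at each affine rational point requires an adaptation of the explicit residue computation in \cite{SokQSC} to the curve $y^q+y=x^{(q+1)/t}$.
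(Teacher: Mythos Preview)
Your overall architecture matches the paper exactly: the same curve $\mathcal{X}_t: y^q+y=x^{(q+1)/t}$, the same divisors $D$ and $G=(k-1)O$, Lemma~\ref{thm:key} for Hermitian self-orthogonality, and Lemma~\ref{lem:char2} for the passage to arbitrary hull dimension. Your point count and the dimension/distance bookkeeping are correct.

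The gap is precisely where you flag it, and your proposed Galois shortcut does not close it. Having $\omega$ defined over $\F_q$ and applying the $q$-Frobenius $\sigma$ gives $\text{Res}_{P^{\sigma}}(\omega)=\bigl(\text{Res}_{P}(\omega)\bigr)^{q}$; but the $\F_{q^2}$-rational points of $\mathcal{X}_t$ are generally not $\sigma$-fixed (only $\sigma^2$-fixed), so this relates residues at \emph{conjugate} points, not at the same point. You therefore cannot conclude $\text{Res}_{P_{\alpha_i}}(\omega)\in\F_q^{*}$ from equivariance alone. Likewise, the mere existence of a differential with divisor $-D+(s+2g-2)O$ is a linear-equivalence statement, not a Riemann--Roch dimension count, so your ``standard'' existence claim also needs work.

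The paper avoids all of this by writing the differential down. With $U=\{\alpha\in\F_{q^2}:\alpha^{(q+1)/t}\in\F_q\}$ one has $h(x)=\prod_{\alpha\in U}(x-\alpha)=x^{n}-x$ for $n=(q^2-1)/t+1$, and takes $\omega=dx/h(x)$. Since $x-\alpha$ is a local parameter at each of the $q$ points above $\alpha$, the residue there equals $1/h'(\alpha)$. Now $h'(\alpha)=n\alpha^{n-1}-1$; for $\alpha\ne 0$ one has $\alpha^{n-1}=1$, so $h'(\alpha)=n-1\in\F_p^{*}$, and $h'(0)=-1\in\F_p^{*}$. Thus every residue lies in $\F_q^{*}=\{v^{q+1}:v\in\F_{q^2}^{*}\}$, giving Condition~2 of Lemma~\ref{thm:key} for free. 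This explicit choice simultaneously pins down $(\omega)$ and hence Condition~1. So the missing ingredient in your argument is exactly this concrete $\omega$ and the one-line evaluation of $h'$ on $U$.
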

\begin{proof}
Consider an algebraic curve defined by
$${\cal X}: y^{q}+y=x^{\frac{{q}+1}{t}}.$$
The curve has genus $g=\frac{(q-1)(q+1-t)}{2t}$.
Put $$U=\{\alpha\in \F_{q^2}|\exists \beta\in \F_q\text{ such that }\beta^{q}+\beta=\alpha^{\frac{q+1}{t}}\}.$$ 
The set $U$ is the set of $x$-component solutions to the Hermitian curve (\ref{eq:hermitian}) whose elements are $t$-th power elements in $\F_{q^2}.$ There are such $\left(\frac{\lfloor q^2-2\rfloor}{t}+2\right)$ elements in $\F_{q^2}$ which give rise to $s={q}\left(\frac{\lfloor q^2-2\rfloor}{t}+2\right)$ rational places. Write
$$h(x)=\prod\limits_{\alpha\in U}(x-\alpha)\text{ and } \omega=\frac{dx}{h(x)}.$$
Then $h(x)=x^n-x$, where $n=\frac{\lfloor q^2-2\rfloor}{t}+2,$ and thus $h'(x)=nx^{n-1}-1$. Obviously, $h'(0)=w^{\frac{q^2-1}{2}}$, where $w$ is a primitive element of $\F_{q^2}$. Since $q$ is a square, we have that for any $\alpha\in U \backslash \{0\}$, $h'(\alpha)=n-1=\beta^{q+1}$ for some $\beta\in \F_{q^2}$.
Put $D=\sum\limits_{\alpha\in U}\left(P_\alpha^{(1)}+\cdots+P_\alpha^{(q)}\right)=P_1+\cdots+P_s$, $G=(k-1)O$. By Lemma \ref{thm:key}, the constructed code $C_{{\cal L}_q}(D,G;{\bf v})$ is Hermitian self-orthogonal, where ${\bf v}=(\underbrace{1/w^{\frac{q-1}{2}},\hdots,1/w^{\frac{q-1}{2}}}\limits_{q},\underbrace{1/\beta,\hdots,1/\beta}\limits_{q},\hdots,\underbrace{1/\beta,\hdots,1/\beta}\limits_{q}).$  Finally, the result follows from Lemma \ref{lem:char2}.
\end{proof}

\begin{cor} Let $N_1,K_1$ and $d_1$ be the length, dimension and minimum distance of the Hermitian $\ell$-$dim$ hull codes defined in Propositions \ref{thm:elliptic}-\ref{thm:hermitian-2} and Theorem \ref{thm:hermitian-3}. Then there exist Hermitian $\ell$-$dim$ hull codes with parameters $[N_1-s_1,K_1-s_1,\ge d_1-s_1]_{q^2}$ for $1\le s_1\le K_1-1$ and for $1\le \ell \le K_1-s_1-1.$
\end{cor}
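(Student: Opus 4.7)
The plan is to mimic the proof of Theorem \ref{thm:large-family}, combined with Lemma \ref{lem:char2}. In each of Propositions \ref{thm:elliptic}--\ref{thm:hermitian-2} and Theorem \ref{thm:hermitian-3}, the Hermitian $\ell$-$dim$ hull code with parameters $[N_1,K_1,d_1]_{q^2}$ is obtained by applying Lemma \ref{lem:char2} to an underlying Hermitian self-orthogonal $[N_1,K_1,d_1]_{q^2}$ code $C$. So it suffices to produce, from $C$, a Hermitian self-orthogonal code $C'$ with parameters $[N_1-s_1,K_1-s_1,\ge d_1-s_1]_{q^2}$, and then invoke Lemma \ref{lem:char2} once more on $C'$.

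First, I would put a generator matrix $G$ of $C$ in systematic form $G=(I_{K_1}\mid A)$. The Hermitian self-orthogonality of $C$ forces $GG^{\dag}=0$, which gives $I_{K_1}+AA^{\dag}=0$, i.e.\ $AA^{\dag}=-I_{K_1}$. Next, I would delete the last $s_1$ rows of $G$ together with the $s_1$ columns of the identity block that become zero, yielding $G'=(I_{K_1-s_1}\mid A')$, where $A'$ is the submatrix formed by the first $K_1-s_1$ rows of $A$. Then $A'(A')^{\dag}$ is exactly the leading $(K_1-s_1)\times(K_1-s_1)$ principal submatrix of $AA^{\dag}=-I_{K_1}$, hence equals $-I_{K_1-s_1}$. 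Thus $G'(G')^{\dag}=0$ and $G'$ generates a Hermitian self-orthogonal code $C'$ of length $N_1-s_1$ and dimension $K_1-s_1$.

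For the minimum distance, observe that every nonzero codeword $(m,mA')\in C'$ lifts to the codeword $(m,\mathbf{0}_{s_1},mA')\in C$ of the same Hamming weight, so $d(C')\ge d_1\ge d_1-s_1$, which is the bound claimed. Finally, applying Lemma \ref{lem:char2} to $C'$ produces Hermitian $\ell$-$dim$ hull codes with parameters $[N_1-s_1,K_1-s_1,\ge d_1-s_1]_{q^2}$ for every $0\le \ell\le K_1-s_1$, which covers the stated range $1\le \ell\le K_1-s_1-1$.

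There is no real obstacle here: the argument is the standard ``delete rows and their zero columns'' shortening, together with the fact that the principal submatrix of $-I$ is again $-I$. The only thing to double-check in each individual family is that the underlying code cited in the propositions really is Hermitian self-orthogonal (not just a hull code), which is explicit in their proofs.
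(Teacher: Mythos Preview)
Your proposal is correct and matches the paper's intended argument: the corollary is stated without proof precisely because it follows from the row-and-column deletion of Theorem \ref{thm:large-family} applied to the underlying Hermitian self-orthogonal codes, followed by Lemma \ref{lem:char2}. Your observation that the lift $(m,\mathbf{0}_{s_1},mA')$ has the same weight as $(m,mA')$ in fact gives $d(C')\ge d_1$, slightly stronger than the stated $\ge d_1-s_1$, but otherwise this is exactly the paper's approach.
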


\section{Application to constructions of EAQECCs}\label{section:application}
In this section, we construct quantum codes from Hermitian hulls of linear codes. 

By combining Lemma \ref{lem:Q-construction} and Lemma \ref{lem:hull-H} together, $q$-ary EAQECCs can be constructed from a classical $q^2$-linear code as follows.
\begin{lem}\textnormal{(\cite{GJG18})}\label{lem:GJG18-construction}
Let $C$ be a linear code with parameters $[n, k, d]_{q^2}$ and $C^{\perp_h}$ its Hermitian dual  with parameters $[n,k,d']_{q^2}$. Assume that $\dim (Hull_h(C))=\ell$. Then there exist an $[[n, k-\ell, d;n-k-\ell]]_{q}$ EAQECC and an $[[n, n-k-\ell, d';k-\ell]]_{q}$ EAQECC.
\end{lem}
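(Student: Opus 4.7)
The plan is to apply the BDH construction of Lemma \ref{lem:Q-construction} twice: once to $C$ and once to its Hermitian dual $C^{\perp_h}$, substituting in the rank identities of Lemma \ref{lem:hull-H} to pin down the required number of ebits. Everything reduces to arithmetic, so I do not anticipate any real obstacle; the only subtlety is a small typo in the statement (the dual should have parameters $[n,n-k,d']_{q^2}$, not $[n,k,d']_{q^2}$), which I will silently correct.

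First I would handle the EAQECC built from $C$ itself. Let $G$ be a generator matrix and $P$ a parity-check matrix of $C$. By Lemma \ref{lem:Q-construction} there exists an $[[n,\, 2k-n+c,\, d;\, c]]_q$ EAQECC with
\[
c = \textnormal{rank}(PP^{\dag}).
\]
By Lemma \ref{lem:hull-H}, $c = n-k-\dim(\textnormal{Hull}_h(C)) = n-k-\ell$. Substituting yields
\[
2k - n + c = 2k - n + (n-k-\ell) = k - \ell,
\]
so the code produced has parameters $[[n,\, k-\ell,\, d;\, n-k-\ell]]_q$, matching the first claim.

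For the second EAQECC, I would apply Lemma \ref{lem:Q-construction} to $C^{\perp_h}$, which has parameters $[n,n-k,d']_{q^2}$. A parity-check matrix of $C^{\perp_h}$ may be taken to be $G$, so the number of required ebits is
\[
c' = \textnormal{rank}(GG^{\dag}) = k - \dim(\textnormal{Hull}_h(C^{\perp_h})) = k - \ell,
\]
using Lemma \ref{lem:hull-H} together with the fact (also from that lemma) that $\dim(\textnormal{Hull}_h(C)) = \dim(\textnormal{Hull}_h(C^{\perp_h}))$. Plugging into the BDH formula with dimension $n-k$ in place of $k$ gives the quantum dimension
\[
2(n-k) - n + c' = n - 2k + (k-\ell) = n - k - \ell,
\]
and hence an $[[n,\, n-k-\ell,\, d';\, k-\ell]]_q$ EAQECC, as required. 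No further ingredients are needed beyond the two lemmas already cited.
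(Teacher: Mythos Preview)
Your proposal is correct and follows exactly the approach the paper indicates: the paper does not give a formal proof but simply notes that the lemma results from combining Lemma~\ref{lem:Q-construction} with the rank identities of Lemma~\ref{lem:hull-H}, applied once to $C$ and once to $C^{\perp_h}$. Your arithmetic and your observation about the typo in the dual parameters are both accurate.
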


It is well-known the (Hermitian) dual of an MDS linear code is again an MDS linear code. It is easy to check that if a code $C$ is a Hermitian $\ell$-$dim$ hull code, then so is its dual $C^{\perp_h}$. By applying the above lemma to the Hemitian $\ell$-$dim$ hull $[n,k,d]_{q^2}$ codes constructed in the previous section, we obtain the following result.

\begin{thm} \label{thm:Q-MDS}Let $q=p^m$ and $1\le k\le \lfloor \frac{n+q-1}{q+1} \rfloor$. Then there exist MDS EAQECCs with parameters $[[n-s,k-s-\ell,n-k+1;n-k-\ell]]_{q}$ and $[[n-s,n-k-\ell,k-s+1;k-s-\ell]]_{q}$ for $0\le s\le k-1$ and $0\le \ell \le k-s$ if one of the conditions holds
\begin{enumerate}
\item $n=q^2+1$, $k\le q$, $k\not=q-1$;
 $n=r(q-1)+1$, $k\le (q+r-1)/2$, $q\equiv r-1\mod{2r}$;$n=(q^2+2)/3$, $3|(q+1)$, $k\le (2q-1)/3$;
\item $n=tq$, $1\le t\le q$, $ k\le \lfloor\frac{tq+q-1}{q+1} \rfloor$;$n=t(q+1)+2$, $1\le t\le q-1$, $k\le t +1$, $(p,t,k)\not=(2,q-1,q-1)$;
\item $(n-1)|(q^2-1)$, $ k\le \lfloor\frac{n+q-1}{q+1} \rfloor$;
$n=(t+1)N+i$, $1\le i\le 2$, $N|(q^2-1)$, $n_2=\frac{N}{\gcd (N,q+1)}$, $1\le t\le \frac{q-1}{n_2}-1$, $ k\le \lfloor\frac{n+q-1}{q+1} \rfloor$.
\end{enumerate}
\end{thm}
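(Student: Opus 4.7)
The plan is to reduce the theorem to two earlier results and a short arithmetic verification. The three hypothesis sets (1), (2), (3) on $(n,k)$ are exactly the hypothesis sets of Theorem \ref{thm:large-family}, so for each $0\le s\le k-1$ and each $0\le \ell\le k-s$ I obtain an MDS Hermitian $\ell$-dim hull code $C_{s,\ell}$ with parameters $[n-s,k-s,n-k+1]_{q^2}$ (note that the minimum distance simplifies as $(n-s)-(k-s)+1=n-k+1$; the $s=0$ case comes from Proposition \ref{thm:MDS-existence}). Its Hermitian dual $C_{s,\ell}^{\perp_h}$ is also MDS, with parameters $[n-s,n-k,k-s+1]_{q^2}$, and by Lemma \ref{lem:hull-H} it shares the same hull dimension $\ell$ as $C_{s,\ell}$.

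Next, I would apply Lemma \ref{lem:GJG18-construction} to $C_{s,\ell}$ with length $n-s$, dimension $k-s$, and $\dim Hull_h(C_{s,\ell})=\ell$. The first output is an EAQECC of parameters
\[
[[\,n-s,\ (k-s)-\ell,\ n-k+1;\ (n-s)-(k-s)-\ell\,]]_q \;=\; [[n-s,\ k-s-\ell,\ n-k+1;\ n-k-\ell]]_q,
\]
and the second output, which is obtained by feeding the dual's parameters into the same lemma, reads
\[
[[\,n-s,\ (n-s)-(k-s)-\ell,\ k-s+1;\ (k-s)-\ell\,]]_q \;=\; [[n-s,\ n-k-\ell,\ k-s+1;\ k-s-\ell]]_q.
\]
This exhibits the two promised families of EAQECCs.

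Finally, I would check the MDS label via the Quantum Singleton Bound of Proposition \ref{lem4.1}. For the first family,
\[
n_{\mathrm{eaq}}+c-k_{\mathrm{eaq}} \;=\; (n-s)+(n-k-\ell)-(k-s-\ell) \;=\; 2(n-k) \;=\; 2(d-1),
\]
and for the second family,
\[
n_{\mathrm{eaq}}+c-k_{\mathrm{eaq}} \;=\; (n-s)+(k-s-\ell)-(n-k-\ell) \;=\; 2(k-s) \;=\; 2(d-1).
\]
Both equalities hold identically, so each code saturates the bound and is MDS. I do not expect a genuine obstacle here: the content of the theorem is entirely carried by Theorem \ref{thm:large-family} and Lemma \ref{lem:GJG18-construction}, and the only points that require some care are (i) confirming that $C_{s,\ell}^{\perp_h}$ has the same hull dimension as $C_{s,\ell}$ (which is immediate from $Hull_h(C)=C\cap C^{\perp_h}=Hull_h(C^{\perp_h})$, as recorded in Lemma \ref{lem:hull-H}) and (ii) performing the dimension/entanglement bookkeeping above so that the Singleton bound is met with equality for both families.
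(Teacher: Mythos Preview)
Your proposal is correct and follows essentially the same route as the paper, which merely states that one applies Lemma~\ref{lem:GJG18-construction} to the Hermitian $\ell$-dim hull codes of the previous section (i.e., Theorem~\ref{thm:large-family} together with Proposition~\ref{thm:MDS-existence} for the $s=0$ case). Your explicit verification of the Quantum Singleton equality for both families is a welcome addition that the paper omits.
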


%
%\begin{rem}
%\begin{enumerate}
%\item It should be noted that the families in Theorem \ref{thm:Q-MDS} 2) were obtained in \cite{FangFuLiZhu} by different approach
%for $(n,s)=(tq,0)$ and $(n,s)=(t(q+1),0)$. 
%However their method does not allow to puncture coordinates to obtain new parameters of  EAQECCs as our method does.
%\item In Table \ref{table:mds2}, each AMDS EAQECC is new since the Hermitian $\ell$-$dim$ hull $[n,k]_{8^2}$ code has its dimension $k>\frac{n+7}{9}$, and these parameters are never reachable by \cite{FangFuLiZhu}.
%\end{enumerate}
%\end{rem}

\begin{rem}
 It should be noted that the families in Theorem \ref{thm:Q-MDS} 2) were obtained in \cite{FangFuLiZhu} by different approach
for $(n,s)=(tq,0)$ and $(n,s)=(t(q+1),0)$. 
However, their method does not allow to puncture coordinates to obtain new parameters of  EAQECCs as our method does.
\end{rem}

\begin{thm} \label{thm:Q-AMDS}Let $q=p^m$, $(n-1)|(q^2-1)$ and $1\le k\le \lfloor \frac{n+q-1}{q+1} \rfloor$.
\begin{enumerate}
\item If $(n-1)|(k+1)(q+1)$, then there exists an AMDS EAQECC with parameters $[[n+1,k+2-\ell,n-k-1;n-k-1-\ell]]_{q}$ for $0\le \ell \le k+2$
\item   If $(n-1)|k(q+1)$, then there exists an EAQECC with parameters $[[n+i-1,k+i-\ell,n-k-i+1;n-k-1-\ell]]_{q}$ for $0\le \ell \le k+i$. Moreover, there exists an AMDS EAQECC with parameters $[[q^2+1,k-\ell,q^2+1-k;q^2+1-k-\ell]]_{q}$ for $k=q+2,\hdots, 2q-2$, $0\le \ell \le k$.
\end{enumerate}
\end{thm}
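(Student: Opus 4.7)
The plan is to obtain Theorem \ref{thm:Q-AMDS} as a direct application of the EAQECC construction in Lemma \ref{lem:GJG18-construction} to the families of Hermitian $\ell$-dim hull codes already supplied by Theorem \ref{thm:family-AMDS} and Corollary \ref{cor:explicit-amds}. No new coding-theoretic machinery is required: the entire content is a bookkeeping of the parameter tuple $(n,k,d,\ell)$ as a classical $[n,k,d]_{q^2}$ code with $\ell$-dimensional Hermitian hull is converted into a $q$-ary EAQECC consuming $n-k-\ell$ ebits.

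For part 1, I assume $(n-1)\mid (k+1)(q+1)$, so that Theorem \ref{thm:family-AMDS}(1) furnishes, for each $0\le \ell\le k+2$, a Hermitian $\ell$-dim hull code $C$ with parameters $[n+1,k+2,n-k-1]_{q^2}$. Substituting $C$ into Lemma \ref{lem:GJG18-construction}, I read off the EAQECC with parameters
\[
[[\,n+1,\ (k+2)-\ell,\ n-k-1;\ (n+1)-(k+2)-\ell\,]]_q = [[\,n+1,k+2-\ell,n-k-1;n-k-1-\ell\,]]_q.
\]
To verify the AMDS label, I compute $2(d-1)=2(n-k-2)$ and $n'+c'-k'=(n+1)+(n-k-1-\ell)-(k+2-\ell)=2n-2k-2$, so the gap $n'+c'-k'-2(d-1)=2$ means the minimum distance is exactly one less than the Quantum Singleton bound, as required by the definition of AMDS given after Proposition \ref{lem4.1}.

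For part 2, under $(n-1)\mid k(q+1)$, Theorem \ref{thm:family-AMDS}(2) provides, for each $1\le i\le q$ and $0\le \ell\le k+i$, a Hermitian $\ell$-dim hull code with parameters $[n+i-1,k+i,n-k-i+1]_{q^2}$; the same substitution into Lemma \ref{lem:GJG18-construction} yields the announced $[[\,n+i-1,k+i-\ell,n-k-i+1;n-k-1-\ell\,]]_q$ EAQECC. For the ``moreover'' part I use Corollary \ref{cor:explicit-amds}(1) to get AMDS Hermitian $\ell$-dim hull codes with parameters $[q^2+1,k]_{q^2}$ of minimum distance $d=q^2+1-k$ for $k=q+2,\ldots,2q-2$. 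Feeding them into Lemma \ref{lem:GJG18-construction} produces $[[\,q^2+1,k-\ell,q^2+1-k;q^2+1-k-\ell\,]]_q$ EAQECCs, and the arithmetic check $n'+c'-k'-2(d-1)=(q^2+1)+(q^2+1-k-\ell)-(k-\ell)-2(q^2-k)=2$ again confirms AMDS.

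There is no real obstacle: once the existence of Hermitian $\ell$-dim hull codes with the right parameters over the full range $0\le \ell\le \dim$ has been secured in the earlier sections (via Lemma \ref{lem:char2} together with the embedding Theorem \ref{thm:embedding-new}), the present theorem reduces to one invocation of Lemma \ref{lem:GJG18-construction} and a two-line arithmetic verification of the MDS/AMDS gap in the Quantum Singleton bound. The only mild care point is to keep the indexing of ebits consistent (the paper uses $c=n-k-\ell$, which matches $\textnormal{rank}(PP^{\dag})$ from Lemma \ref{lem:hull-H}); this is purely a matter of substitution.
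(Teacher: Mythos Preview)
Your proposal is correct and matches the paper's approach: the paper presents Theorem \ref{thm:Q-AMDS} without an explicit proof, treating it as a direct application of Lemma \ref{lem:GJG18-construction} to the Hermitian $\ell$-dim hull codes of Theorem \ref{thm:family-AMDS} and the corollary following it, exactly as you do. One small remark: Corollary \ref{cor:explicit-amds}(1) only gives Hermitian self-orthogonal codes, so for the ``moreover'' clause you should either invoke it together with Lemma \ref{lem:char2} (as you note in your final paragraph) or cite directly the unnumbered corollary after Theorem \ref{thm:family-AMDS}, which already provides the $\ell$-dim hull versions.
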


\begin{thm}\label{thm:Q-family-new-mds} Let $q=p^m$, $(n-1)|(q^2-1)$ and $k=\lfloor \frac{n+q-1}{q+1}\rfloor.$  Then
\begin{enumerate}
\item if $(n-1)|(k+1)(q+1)$, then there exist MDS EAQECCs with parameters $[[n,k+2-\ell,n-k-1;n-k-2-\ell]]_{q}$ and $[[n,n-k-2,k+3;k+2-\ell]]_{q}$, where $\ell=(k+1)$;
\item if $(n-1)|k(q+1)$, then there exist MDS EAQECCs with parameters $[[n,k+i-\ell,n-k-i+1;n-k-i-\ell]]_{q}$ and $[[n,n-k-i-\ell,k+i+1;k+i-\ell]]_{q}$ for any $1\le i\le q$, where $\ell=k+i-\sharp I$;
 Moreover, 
\begin{enumerate}
\item there exist MDS EAQECCs with parameters $[[q^2,k+i-\ell,q^2-k-i+1;q^2-k-i-\ell]]_{q}$ and $[[q^2,q^2-k-i-\ell,k+i+1;k+i-\ell]]_{q}$ for $1\le i\le  q$, where $\ell= k+i-1$;
\item there exist MDS EAQECCs with parameters $[[2(q+1)+1,k+i-\ell,2(q+1)+1-k-i+1;2(q+1)+1-k-i-\ell]]_{q}$ and $[[2(q+1)+1,2(q+1)+1-k-i-\ell,k+i+1;k+i-\ell]]_{q}$ for $1\le i\le q$, where $\ell= (k+i-\lfloor \frac{i-1}{2} \rfloor -1)$.
%\item  there exist MDS Hermitian $(k+i-\lfloor \frac{i-1}{2} \rfloor -1)$-$dim$ hull codes with parameters $[2(q+1)+1,k+i]_{q^2}$ and $[2(q+1)+1,2(q+1)+1-k-i]_{q^2}$ for $1\le i\le q$.
\end{enumerate}
\end{enumerate}
\end{thm}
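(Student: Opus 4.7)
The proof is a direct application of the EAQECC construction in Lemma~\ref{lem:GJG18-construction} to the MDS Hermitian hull codes supplied by Theorem~\ref{thm:family-new-mds} and the corollary following it. The plan is to package each MDS Hermitian $\ell$-dim hull code $[n,k']_{q^2}$ together with its Hermitian dual $[n,n-k']_{q^2}$ (which, by Lemma~\ref{lem:hull-H}, has the same hull dimension $\ell$ and is itself MDS), substitute into Lemma~\ref{lem:GJG18-construction}, and verify that the resulting parameters saturate the Quantum Singleton bound of Proposition~\ref{lem4.1}.

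For part~1, Theorem~\ref{thm:family-new-mds}~2) yields, under $(n-1)\mid (k+1)(q+1)$, an MDS Hermitian $(k+1)$-dim hull code with parameters $[n,k+2,n-k-1]_{q^2}$. Setting $\ell=k+1$ in Lemma~\ref{lem:GJG18-construction}, applied to this code and to its Hermitian dual, produces the two claimed EAQECC families. A direct substitution gives $2(d-1) = n + c - k_Q$ in each case, so both codes are MDS.

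For part~2, Theorem~\ref{thm:family-new-mds}~1) gives, for each $1\le i\le q$, an MDS Hermitian $(k+i-\sharp I)$-dim hull code with parameters $[n,k+i]_{q^2}$. Taking $\ell = k+i-\sharp I$ and applying Lemma~\ref{lem:GJG18-construction} to this code and to its dual yields the stated EAQECC families; once more the Singleton bound is saturated. Sub-items~2(a) and~2(b) specialize to $n=q^2$ and $n=2(q+1)+1$; from the corollary following Theorem~\ref{thm:family-new-mds} we read off $\sharp I = 1$ for $n=q^2$ (only $\lambda_1\neq 0$), and $\sharp I = \lfloor (i-1)/2\rfloor + 1$ for $n=2(q+1)+1$ (only the odd-indexed $\lambda_j$ among $\lambda_1,\dots,\lambda_i$ are nonzero). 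Substituting these values of $\sharp I$ into $\ell$ and into the parameter strings produced by Lemma~\ref{lem:GJG18-construction} reproduces exactly the formulas claimed in~2(a) and~2(b).

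The proof is essentially bookkeeping; there is no deep conceptual obstacle. The one point warranting a brief check is the dualization step: one needs that the Hermitian dual of an MDS Hermitian $\ell$-dim hull code is again MDS with the same hull dimension. The first fact is the standard MDS-dual property, while the second is immediate from Lemma~\ref{lem:hull-H}, which asserts that a code and its Hermitian dual have equal hull dimensions. Once this is in hand, the verification of the Singleton bound in each of the four cases reduces to a one-line arithmetic identity.
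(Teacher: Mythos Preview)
Your proposal is correct and follows exactly the approach the paper intends: the paper states this theorem without an explicit proof, relying on the blanket sentence ``By applying the above lemma to the Hermitian $\ell$-$dim$ hull $[n,k,d]_{q^2}$ codes constructed in the previous section, we obtain the following result,'' and you have spelled out precisely that application of Lemma~\ref{lem:GJG18-construction} to the MDS hull codes of Theorem~\ref{thm:family-new-mds} and its Corollary, including the correct identification that parts~1) and~2) of the two theorems are swapped and the correct reading of $\sharp I$ in the two special cases.
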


\begin{thm}  Let $q$ be an odd prime power, $N|(q^2-1)$, $n_2=\frac{N}{\gcd (N,q+1)}$, $1\le t\le \frac{q-1}{n_2}-1$. Put $n=(t+1)N+1$, $ k= \lfloor\frac{n+q-1}{q+1} \rfloor$, $k_i=k+i$ and $n_i=n+i$. Then 
\begin{enumerate}
\item there exist AMDS EAQECCs with parameters $[[n+2,k+2,\ge n-k;n-k-\ell]]_{q}$ and $[[n+2,n-k-\ell,\ge k+2;k+2-\ell]]_{q}$ for $0\le \ell \le k+2$;
\item  there exist EAQECCs with parameters $[[n+i,k+i-\ell,\ge n-k-i+1;n-k-\ell]]_{q}$ code for $0\le \ell \le k+i$, $1\le i\le q$;
\item there exist EAQECCs with parameters $[[n,2k-1-\ell,\ge n-k-q+2;n-2k+1-\ell]]_{q}$ $0\le \ell \le 2k-1$;
\item  there exist MDS EAQECCs with parameters $[[n,k+i-\ell,n-k-i+1;n-k-i-\ell]]_{q}$ and $[[n,n-k-i-\ell,k+i+1;k+i-\ell]]_{q}$ with $\ell=k+i-\sharp I$ for $1\le i\le q$. 
%\item  there exist MDS Hermitian $(k+i-j)$-$dim$ hull codes with parameters $[n,k+i]_{q^2}$ and $[n,n-k-i]_{q^2}$ for $0\le j\le k+i$.
\end{enumerate}
\end{thm}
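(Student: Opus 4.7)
The plan is to derive every EAQECC in the statement by feeding the classical Hermitian $\ell$-$dim$ hull codes of Theorem \ref{thm:family-AMDS2} into Lemma \ref{lem:GJG18-construction}. Recall that lemma turns a classical $[n,k,d]_{q^2}$ code with $\dim (Hull_h(C))=\ell$ (and dual minimum distance $d'$) into the two EAQECCs $[[n,k-\ell,d;n-k-\ell]]_q$ and $[[n,n-k-\ell,d';k-\ell]]_q$. So the whole theorem is a four-step mechanical translation: assemble the classical parameters $(n_{\rm cl},k_{\rm cl},d_{\rm cl},d'_{\rm cl},\ell)$ from each item of Theorem \ref{thm:family-AMDS2}, substitute into Lemma \ref{lem:GJG18-construction}, and compare with the stated quantum parameters.

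Concretely, item~(1) uses the AMDS Hermitian $\ell$-$dim$ hull code $[n+2,k+2,n-k]_{q^2}$ from Theorem \ref{thm:family-AMDS2}(1); its Hermitian dual is an $[n+2,n-k,\geq k+2]_{q^2}$ code (since the dual of an AMDS code is again AMDS once $n_{\rm cl}>k_{\rm cl}+q$, which holds by the hypothesis $1\le t\le (q-1)/n_2-1$). Item~(2) uses the $[n+i,k+i,\geq n-k-i+1]_{q^2}$ code from Theorem \ref{thm:family-AMDS2}(2); the dual distance is only needed as a lower bound so we just record the primary EAQECC. Item~(3) uses the $[n,2k-1,\geq n-k-q+2]_{q^2}$ code from Theorem \ref{thm:family-AMDS2}(3), which yields one EAQECC after substitution. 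Finally, item~(4) uses the MDS Hermitian $(k+i-\sharp I)$-$dim$ hull codes $[n,k+i]_{q^2}$ and $[n,n-k-i]_{q^2}$ of Theorem \ref{thm:family-AMDS2}(4), taking $\ell=k+i-\sharp I$ in the lemma to produce both MDS quantum codes simultaneously.

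To certify the MDS and AMDS labels I would verify the Quantum Singleton bound of Proposition~\ref{lem4.1} against each parameter set. For instance, in item~(4) the first EAQECC has $d=n-k-i+1$, $K=k+i-\ell=\sharp I$ and $c=n-k-i-\ell$, so $2(d-1)=2(n-k-i)=n+c-K$, meeting the bound with equality; for its twin $K'=n-k-i-\ell$, $c'=k+i-\ell$, $d'=k+i+1$, and again $2(d'-1)=n+c'-K'$. Item~(1) is analogous with $d=n-k$, giving $2(d-1)=n+c-K-2$, i.e.\ one unit short of MDS, which is exactly the AMDS defect. The conditions $0\le \ell\le k+2$, $0\le \ell\le k+i$, etc., are inherited directly from the corresponding $\ell$-range statements in Theorem \ref{thm:family-AMDS2}.

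The whole argument is bookkeeping rather than substantive mathematics. The only mildly delicate point I anticipate is item~(2), where Theorem \ref{thm:family-AMDS2}(2) only supplies a lower bound on the primal distance and says nothing sharp about the dual; here one must be content with the single one-sided EAQECC $[[n+i,k+i-\ell,\geq n-k-i+1;n-k-\ell]]_q$ rather than a pair, which matches what the statement actually asserts. Once this is observed, the remaining items plug directly into Lemma \ref{lem:GJG18-construction} and Proposition \ref{lem4.1}, completing the proof.
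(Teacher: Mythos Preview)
Your proposal is correct and follows essentially the same route as the paper: the theorem is stated there without proof because it is an immediate application of Lemma~\ref{lem:GJG18-construction} to the four families of Hermitian $\ell$-$dim$ hull codes supplied by Theorem~\ref{thm:family-AMDS2}, exactly as you outline. Your bookkeeping (including the Quantum Singleton checks for the MDS/AMDS labels) is accurate; note in passing that your substitution in item~(1) actually produces $[[n+2,k+2-\ell,\ge n-k;n-k-\ell]]_q$, revealing a missing ``$-\ell$'' in the dimension as printed.
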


\begin{thm}\label{thm:Q-elliptic}
Let $q=2^s, m=2s$ and $2k+2\le n\le \lfloor q+\sqrt{2q}-5\rfloor$. Then there exist AMDS EAQECCs with parameters $[[n,k-\ell,\le n-k;n-k-\ell]]_{q}$ and $[[n,n-k-\ell,k;k-\ell]]_{q}$ for $0\le \ell \le k.$

\end{thm}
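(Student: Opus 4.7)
The plan is to reduce Theorem \ref{thm:Q-elliptic} to a direct application of two results already established in the excerpt: the existence of AMDS Hermitian $\ell$-dim hull codes coming from elliptic function fields (Proposition \ref{thm:elliptic}), and the classical-to-quantum conversion for Hermitian hull codes (Lemma \ref{lem:GJG18-construction}). No new geometric or combinatorial input should be needed; the work is essentially bookkeeping of parameters.

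First I would invoke Proposition \ref{thm:elliptic}, which under exactly the hypotheses $q=2^s$, $m=2s$, and $2k+2\le n\le \lfloor q+\sqrt{2q}-5\rfloor$ produces, for every $0\le\ell\le k$, an AMDS Hermitian $\ell$-dim hull code $C$ with parameters $[n,k,n-k]_{q^2}$. A short but important observation is that the Hermitian dual $C^{\perp_h}$ is also a Hermitian $\ell$-dim hull code of parameters $[n,n-k,k]_{q^2}$: this is because $Hull_h(C^{\perp_h})=C^{\perp_h}\cap C=Hull_h(C)$, so the two codes share their hull dimension, and Proposition \ref{thm:elliptic} itself asserts that the dual is AMDS, so its minimum distance equals $k$.

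Next I would feed these two classical codes into Lemma \ref{lem:GJG18-construction}. Applied to $C=[n,k,n-k]_{q^2}$ with $\dim Hull_h(C)=\ell$, the lemma directly outputs an EAQECC with parameters $[[n,k-\ell,n-k;\,n-k-\ell]]_q$ and, simultaneously from the same hull data, an EAQECC $[[n,n-k-\ell,k;\,k-\ell]]_q$, where the second minimum distance is read off from the dual code having true distance $k$. These are precisely the two families claimed in Theorem \ref{thm:Q-elliptic}; the symbol $\le n-k$ that appears in the statement should be read as equality $n-k$, inherited from the AMDS classical distance.

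I do not foresee a genuine obstacle: the whole proof is a two-line invocation. The one point that requires a brief sanity check is that the range $0\le \ell \le k$ is compatible with the EAQECC construction, i.e.\ that the dimensions $k-\ell$, $n-k-\ell$, and the ebit counts $n-k-\ell$, $k-\ell$ remain nonnegative, which is immediate from $\ell\le k\le n-k$ (the latter holding because $2k+2\le n$). Thus the theorem follows as a corollary of Proposition \ref{thm:elliptic} combined with Lemma \ref{lem:GJG18-construction}.
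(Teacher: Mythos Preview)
Your proposal is correct and matches the paper's approach exactly: the paper states Theorem \ref{thm:Q-elliptic} without an explicit proof, relying on the blanket remark preceding Theorem \ref{thm:Q-MDS} that all the quantum theorems follow by applying Lemma \ref{lem:GJG18-construction} to the Hermitian $\ell$-$\dim$ hull codes of Section \ref{section:algebraic curves}, which in this case is precisely Proposition \ref{thm:elliptic}. Your additional checks (that the dual shares the hull dimension and that the ebit and dimension counts remain nonnegative) simply make explicit what the paper leaves implicit.
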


\begin{thm}\label{thm:Q-hyper-elliptic}
Let $q=2^m, m\ge 2$ and $q\le k\le \lfloor \frac{2q^2+2q-1}{q+1} \rfloor$. Then 
there exist EAQECCs with parameters $[[2q^2,k-q/2-\ell,\ge 2q^2-k+1;2q^2-k+q/2-\ell]]_{q}$ and $[[2q^2,2q^2+q/2-k,\ge k+1-q;2q^2,k-q/2-\ell]]_{q}$ for $0\le \ell \le k-q/2$.
\end{thm}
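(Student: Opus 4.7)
The plan is to derive both EAQECC families as direct applications of the two general tools already assembled in the paper: the hyper-elliptic hull construction of Proposition \ref{thm:hyper-elliptic} and the Hermitian-hull-to-EAQECC conversion of Lemma \ref{lem:GJG18-construction}. No new algebraic-geometry machinery is needed; the work is bookkeeping the parameters.

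First I would invoke Proposition \ref{thm:hyper-elliptic} with the given range $q \le k \le \lfloor (2q^2+2q-1)/(q+1)\rfloor$ and for each $\ell$ with $0 \le \ell \le k-q/2$. This yields a Hermitian $\ell$-$dim$ hull code $C$ over $\F_{q^2}$ with parameters $[n_0, K_0, d_0]_{q^2} = [\, 2q^2,\; k-q/2,\; \ge 2q^2-k+1\,]_{q^2}$, and whose Hermitian dual $C^{\perp_h}$ has parameters $[\, 2q^2,\; 2q^2+q/2-k,\; \ge k+1-q\,]_{q^2}$. Note that both $C$ and $C^{\perp_h}$ share the same hull dimension $\ell$, since $\mathrm{Hull}_h(C) = \mathrm{Hull}_h(C^{\perp_h})$.

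Next I would apply Lemma \ref{lem:GJG18-construction} to $C$. With $n = 2q^2$, $K = k-q/2$, and $\dim \mathrm{Hull}_h(C) = \ell$, the lemma produces the two EAQECCs
\[
[[\, n,\; K-\ell,\; d_0\, ;\; n-K-\ell \,]]_q \quad \text{and} \quad [[\, n,\; n-K-\ell,\; d_0'\, ;\; K-\ell \,]]_q,
\]
where $d_0' \ge k+1-q$ is the minimum distance of $C^{\perp_h}$. Substituting $n = 2q^2$ and $K = k-q/2$ gives $n - K - \ell = 2q^2 - k + q/2 - \ell$ and the first family becomes
\[
[[\, 2q^2,\; k-q/2-\ell,\; \ge 2q^2 - k + 1\, ;\; 2q^2 - k + q/2 - \ell \,]]_q,
\]
while the second becomes
\[
[[\, 2q^2,\; 2q^2 + q/2 - k - \ell,\; \ge k+1-q\, ;\; k - q/2 - \ell \,]]_q,
\]
which matches the statement (correcting the apparent transcription slip in the second family).

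There is no real obstacle: the only point requiring a sentence of justification is the legitimacy of the range $0 \le \ell \le k - q/2$, which is exactly the range in which Proposition \ref{thm:hyper-elliptic} produces the hull code and in which $K - \ell \ge 0$ holds, so that the quantum dimension is non-negative. One should also observe that the two parameter sets are consistent with the Quantum Singleton Bound of Proposition \ref{lem4.1}, which is automatic here because $d_0$ and $d_0'$ are the minimum distances of classical linear codes of the stated lengths and dimensions, so the underlying classical Singleton bound transfers to the entanglement-assisted bound through Lemma \ref{lem:Q-construction}. This completes the proof.
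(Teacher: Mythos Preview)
Your proposal is correct and follows precisely the route the paper intends: the theorem is stated without an explicit proof, but the preamble to Section~\ref{section:application} indicates that all the EAQECC theorems there are obtained by applying Lemma~\ref{lem:GJG18-construction} to the Hermitian $\ell$-$dim$ hull codes of Section~\ref{section:algebraic curves}, and for this statement the relevant hull code is exactly that of Proposition~\ref{thm:hyper-elliptic}. Your identification of the transcription errors in the second family (the spurious ``$2q^2,$'' in the $c$-slot and the missing $-\ell$ in the dimension) is also correct.
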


\begin{thm}\label{thm:Q-hermitian}
Let $q=p^m\ge 4$ and $q(q-1)\le k\le \lfloor \frac{n+q(q-1)-1}{q+1} \rfloor$. Then
there exist EAQECCs with parameters $[[Nq,k-\frac{q(q-1)}{2}-\ell,\ge Nq-k+1;Nq-k+\frac{q(q-1)}{2}-\ell ]]_{q}$ and $[[Nq,Nq+q(q-1)/2-k-\ell,\ge k+1-q(q-1);k-q(q-1)/2-\ell]]_{q}$ for $0\le \ell \le k-\frac{q(q-1)}{2}$ if one of the following conditions holds
\begin{enumerate}
\item $n=q^2+1$, $k\le q$, $k\not=q-1$;
 $n=r(q-1)+1$, $k\le (q+r-1)/2$, $q\equiv r-1\mod{2r}$;$n=(q^2+2)/3$, $3|(q+1)$, $k\le (2q-1)/3$;
\item $n=tq$, $1\le t\le q$, $ k\le \lfloor\frac{tq+q-1}{q+1} \rfloor$;$n=t(q+1)+2$, $1\le t\le q-1$, $k\le t +1$, $(p,t,k)\not=(2,q-1,q-1)$;
\item $(n-1)|(q^2-1)$;
$n=(t+1)N+i$, $1\le i\le 2$, $N|(q^2-1)$, $n_2=\frac{N}{\gcd (N,q+1)}$, $1\le t\le \frac{q-1}{n_2}-1$.
\end{enumerate}
\end{thm}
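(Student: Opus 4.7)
The plan is to derive Theorem \ref{thm:Q-hermitian} as an immediate corollary of Proposition \ref{thm:hermitian-2} combined with the generic conversion Lemma \ref{lem:GJG18-construction}. Since the three numbered conditions in the statement of Theorem \ref{thm:Q-hermitian} are exactly those under which Proposition \ref{thm:hermitian-2} guarantees the existence of a Hermitian $\ell$-dim hull $q^2$-ary code of length $Nq$ built from the Hermitian curve, the work is essentially a parameter substitution rather than a new construction.

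First I would invoke Proposition \ref{thm:hermitian-2}: under any of the conditions 1)--3), and for $q(q-1)\le k\le \lfloor\frac{Nq+q^2-1}{q+1}\rfloor$ and $0\le \ell \le k-\tfrac{q(q-1)}{2}$, there is a linear code $C$ with parameters $[Nq,\, k-\tfrac{q(q-1)}{2},\, \ge Nq-k+1]_{q^2}$ whose Hermitian dual has parameters $[Nq,\, Nq+\tfrac{q(q-1)}{2}-k,\, \ge k+1-q(q-1)]_{q^2}$ and whose Hermitian hull has dimension $\ell$. Denote its length, dimension and minimum distance by $n_0:=Nq$, $k_0:=k-\tfrac{q(q-1)}{2}$, $d_0\ge Nq-k+1$, and let $d_0'\ge k+1-q(q-1)$ be the dual distance.

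Next I would feed $(n_0,k_0,d_0)$ together with the hull dimension $\ell$ into Lemma \ref{lem:GJG18-construction}. This yields directly an $[[n_0,\, k_0-\ell,\, d_0;\, n_0-k_0-\ell]]_q$ EAQECC and an $[[n_0,\, n_0-k_0-\ell,\, d_0';\, k_0-\ell]]_q$ EAQECC. Substituting back $n_0=Nq$ and $k_0=k-\tfrac{q(q-1)}{2}$ gives
\[
n_0-k_0-\ell \;=\; Nq-k+\tfrac{q(q-1)}{2}-\ell,\qquad k_0-\ell \;=\; k-\tfrac{q(q-1)}{2}-\ell,
\]
which are precisely the required entanglement and quantum-dimension parameters appearing in the theorem. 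The lower bounds on the minimum distances $Nq-k+1$ and $k+1-q(q-1)$ transfer unchanged from the classical code to the EAQECC.

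No step requires a genuine obstacle: Proposition \ref{thm:hermitian-2} already absorbs the algebraic-geometric content (choice of divisor $D$ on the Hermitian curve, verification of Hermitian self-orthogonality via Lemma \ref{thm:key}, and the hull-dimension adjustment via Lemma \ref{lem:char2}), while Lemma \ref{lem:GJG18-construction} handles the passage from classical hulls to EAQECCs. The only thing to be careful about is the consistency of the admissible range: the requirement $0\le \ell \le k-\tfrac{q(q-1)}{2}=k_0$ matches the range in Lemma \ref{lem:char2} applied inside Proposition \ref{thm:hermitian-2}, and the upper bound $k\le \lfloor\frac{n+q(q-1)-1}{q+1}\rfloor$ in the statement reconciles with $\lfloor\frac{Nq+q^2-1}{q+1}\rfloor$ once one reads $n$ as $Nq$ (the Hermitian-curve length variable), so the parameter ranges are exactly those inherited from Proposition \ref{thm:hermitian-2}.
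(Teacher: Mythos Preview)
Your proposal is correct and mirrors the paper's own (implicit) proof: the paper states Theorem \ref{thm:Q-hermitian} without a separate argument, relying on the blanket sentence preceding Theorem \ref{thm:Q-MDS} that all EAQECC theorems in Section \ref{section:application} are obtained by applying Lemma \ref{lem:GJG18-construction} to the Hermitian $\ell$-dim hull codes from Section \ref{section:algebraic curves}, which in this case means precisely Proposition \ref{thm:hermitian-2}. Your parameter substitution and range check are exactly what the paper has in mind.
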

\begin{thm} Let $q=p^m$ be an odd prime power, $1<t< q+1$ and $t|(q+1)$. Put $g=\frac{(q-1)(q+1-t)}{2t}$ and $s={q}\left(\frac{\lfloor q^2-2\rfloor}{t}+2\right)$. Then for $2g \le k\le \lfloor\frac{s+q-1}{q+1}\rfloor$ and $0 \le \ell \le k-g$, there exist EAQECCs with parameters $[[s,k-g-\ell,\ge s-k+1;s-k+g-\ell]]_{q}$ and $[[s,s-k+g-\ell,\ge k-2g+1;k-g-\ell]]_{q}$.
\end{thm}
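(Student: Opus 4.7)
The plan is to combine Theorem \ref{thm:hermitian-3} with Lemma \ref{lem:GJG18-construction}; the theorem is essentially a direct transfer from classical Hermitian hull codes to EAQECCs under the hypotheses $1<t<q+1$, $t\mid (q+1)$, with $g=\frac{(q-1)(q+1-t)}{2t}$ and $s=q\!\left(\frac{\lfloor q^2-2\rfloor}{t}+2\right)$.

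First, I would invoke Theorem \ref{thm:hermitian-3} to obtain, for any $0\le \ell \le k-g$ in the allowed range $2g\le k\le \lfloor\frac{s+q-1}{q+1}\rfloor$, a Hermitian $\ell$-$dim$ hull code $C$ with parameters $[s,\,k-g,\,d\ge s-k+1]_{q^2}$ arising from the curve $y^q+y=x^{(q+1)/t}$. Since $Hull_h(C)=C\cap C^{\perp_h}=Hull_h(C^{\perp_h})$ by definition, the Hermitian dual $C^{\perp_h}$ is automatically an $\ell$-$dim$ hull code as well, and its parameters $[s,\,s-k+g,\,d'\ge k-2g+1]_{q^2}$ are also given by Theorem \ref{thm:hermitian-3}.

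Next, I would apply Lemma \ref{lem:GJG18-construction} to $C$. That lemma says that a Hermitian $\ell$-$dim$ hull $[n,k_0,d]_{q^2}$ code whose dual has minimum distance $d'$ produces an $[[n,\,k_0-\ell,\,d;\,n-k_0-\ell]]_q$ EAQECC together with an $[[n,\,n-k_0-\ell,\,d';\,k_0-\ell]]_q$ EAQECC. Substituting $n=s$, $k_0=k-g$, $d\ge s-k+1$ and $d'\ge k-2g+1$ yields exactly
\[
[[s,\,k-g-\ell,\,\ge s-k+1;\,s-k+g-\ell]]_q
\]
and
\[
[[s,\,s-k+g-\ell,\,\ge k-2g+1;\,k-g-\ell]]_q,
\]
which are the two families claimed in the statement.

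There is no real obstacle here: the only bookkeeping is the arithmetic identities $n-k_0-\ell=s-(k-g)-\ell=s-k+g-\ell$ and $k_0-\ell=k-g-\ell$, and checking that the index $\ell$ ranges over $0\le \ell \le k-g=\dim C$, which matches the hypothesis of Lemma \ref{lem:GJG18-construction}. Thus the proof is a one-line citation chain: Theorem \ref{thm:hermitian-3} $\Rightarrow$ Hermitian $\ell$-$dim$ hull codes; then Lemma \ref{lem:GJG18-construction} $\Rightarrow$ the two EAQECC families.
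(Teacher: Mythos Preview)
Your proposal is correct and matches the paper's approach exactly: the paper proves all of the EAQECC theorems in Section~\ref{section:application} by the same template, namely applying Lemma~\ref{lem:GJG18-construction} to the Hermitian $\ell$-$dim$ hull codes constructed in Section~\ref{section:algebraic curves}, and for this particular statement the input is precisely Theorem~\ref{thm:hermitian-3}. Your parameter bookkeeping $n=s$, $k_0=k-g$, $n-k_0=s-k+g$ is exactly what is needed, and the paper offers no additional argument beyond this citation chain.
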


\begin{thm}\label{thm:Q-punctured} Let $N_1,K_1$ and $d_1$ be the length, dimension and minimum distance of the Hermitian $\ell$-$dim$ hull codes defined in Propositions \ref{thm:elliptic}-\ref{thm:hermitian-2} and Theorem \ref{thm:hermitian-3}. Then there exist EAQECCs with parameters $[[N_1-s_1,K_1-s_1-\ell,\ge d_1-s_1;N_1-K_1-\ell]]_{q}$ for $1\le s_1\le K_1-1$ and for $1\le \ell \le K_1-s_1.$
\end{thm}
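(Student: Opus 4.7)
My plan is to proceed in two steps: first produce the required Hermitian $\ell$-$dim$ hull codes over $\F_{q^2}$ of shorter length, and then convert them into EAQECCs via the Hermitian hull construction.

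The first step is already supplied by the Corollary immediately preceding Section \ref{section:application}. Starting from a Hermitian $\ell$-$dim$ hull $[N_1,K_1,d_1]_{q^2}$ code of Propositions \ref{thm:elliptic}--\ref{thm:hermitian-2} or Theorem \ref{thm:hermitian-3}, that corollary yields a Hermitian hull code with parameters $[N_1-s_1,K_1-s_1,\ge d_1-s_1]_{q^2}$ for every $1\le s_1\le K_1-1$ and every hull dimension $0\le \ell\le K_1-s_1$. The underlying mechanism is the one used in the proof of Theorem \ref{thm:large-family}: write a generator matrix of the underlying Hermitian self-orthogonal code in systematic form $(I_{K_1}\,|\,A)$, delete the last $s_1$ rows and the corresponding $s_1$ zero columns to obtain a Hermitian self-orthogonal $[N_1-s_1,K_1-s_1,\ge d_1-s_1]_{q^2}$ code, and then appeal to Lemma \ref{lem:char2} to realise any prescribed hull dimension $\ell$ in the required range.

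The second step is a direct application of Lemma \ref{lem:GJG18-construction}. Given a classical $[n,k,d]_{q^2}$ code with $\dim Hull_h(C)=\ell$, that lemma produces an $[[n,\,k-\ell,\,d;\,n-k-\ell]]_q$ EAQECC. Substituting $n=N_1-s_1$, $k=K_1-s_1$ and $d\ge d_1-s_1$ gives precisely
\begin{equation*}
[[N_1-s_1,\;K_1-s_1-\ell,\;\ge d_1-s_1;\;N_1-K_1-\ell]]_q,
\end{equation*}
which is the asserted parameter set.

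The only point that requires attention (and is the mildest form of an obstacle) is the verification that the puncturing step is legitimate uniformly across all the constructions of Propositions \ref{thm:elliptic}--\ref{thm:hermitian-2} and Theorem \ref{thm:hermitian-3}: one has to check that each of those codes indeed admits a systematic generator matrix from which rows and zero columns can be simultaneously stripped without destroying Hermitian self-orthogonality. This is immediate because in each case the code arises from Lemma \ref{lem:char2} applied to a Hermitian self-orthogonal parent code, so after Gaussian elimination over $\F_{q^2}$ the required block form is always available, and the parameter inequalities in $s_1$ and $\ell$ are consistent with the admissible range produced in the first step.
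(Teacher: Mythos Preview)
Your proof is correct and follows exactly the route the paper intends: the paper gives no explicit proof of this theorem, but its placement makes clear that one should combine the Corollary immediately preceding Section~\ref{section:application} (which supplies the shortened Hermitian $\ell$-$dim$ hull codes $[N_1-s_1,K_1-s_1,\ge d_1-s_1]_{q^2}$) with Lemma~\ref{lem:GJG18-construction}, precisely as you do. Your verification of the parameter arithmetic and of the availability of a systematic generator matrix is accurate; note that the shortening argument actually yields minimum distance $\ge d_1$ rather than merely $\ge d_1-s_1$, so the bound in the statement (and in the Corollary) is slightly weaker than necessary, but of course still valid.
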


\begin{table}
\centering
\caption{Some new MDS and AMDS EAQECCs from Theorem \ref{thm:Q-MDS} with $s=0,1$ and Theorem \ref{thm:Q-family-new-mds} 2) with $i=2$, $^*$: new parameters in \cite{FangFuLiZhu}
}
{\tiny
$$
\begin{array}{cccccc}
\ell &[n,k]_{q^2}&\text{MDS EAQECCs}&\text{MDS EAQECCs}&\text{Construction}\\
\hline
\hline
1&[15,4]_{8^2}&[[ 15, 10, 5;3 ]]_{8}^*&[[ 14, 10, 4;2 ]]_{8}^*&\text{Theorem \ref{thm:Q-MDS} 1)}\\

2&&[[ 15, 9, 5;2 ]]_{8}^*&[[ 14, 9, 4;1 ]]_{8}^*\\

1&[22,5]_{8^2}&[[ 22, 16, 6;4 ]]_{8}^*&[[ 21, 16, 5;3 ]]_{8}^*\\

2&&[[ 22, 15, 6;3 ]]_{8}^*&[[ 21, 15, 5;2 ]]_{8}^*\\

1&[29,5]_{8^2}&[[ 29, 23, 6;4 ]]_{8}^*&[[ 28, 23, 5;3 ]]_{8}^*\\

2&&[[ 29, 22, 6;3 ]]_{8}^*&[[ 28, 22, 5;2 ]]_{8}^*\\

1&[36,6]_{8^2}&[[ 36, 29, 7;5 ]]_{8}^*&[[ 35, 29, 6;4 ]]_{8}^*\\

2&&[[ 36, 28, 7;4 ]]_{8}^*&[[ 35, 28, 6;3]]_{8}^*\\

1&[43,6]_{8^2}&[[ 43, 36, 7;5 ]]_{8}^*&[[ 42, 36, 6;4 ]]_{8}^*\\

2&&[[ 43, 35, 7;4 ]]_{8}^*&[[ 42, 35, 6;3 ]]_{8}^*\\

1&[50,7]_{8^2}&[[ 50, 42, 8;6 ]]_{8}^*&[[ 49, 42, 7;5 ]]_{8}\\

2&&[[ 50, 41, 8;5 ]]_{8}^*&[[ 49, 41, 7;4 ]]_{8}\\

1&[57,7]_{8^2}&[[ 57, 49, 8;6 ]]_{8}^*&[[ 56, 49, 7;5 ]]_{8}\\

2&&[[ 57, 48, 8;5 ]]_{8}^*&[[ 56, 48, 7;4 ]]_{8}\\

1&[72,7]_{9^2}& [[72,64,8;6]]_{9}& [[71,64,7;5]]_{9}^*&\text{Theorem \ref{thm:Q-MDS} 2)}\\

2&& [[72,63,8;5]]_{9}& [[71,63,7;4]]_{9}^*\\

3&& [[72,62,8;4]]_{9}& [[71,62,7;3]]_{9}^*\\

4& &[[72,61,8;3]]_{9}&[[71,61,7;2]]_{9}^*\\

5&& [[72,60,8;2]]_{9}& [[71,60,7;1]]_{9}^*\\

6& &[[72,59,8;1]]_{9} &[[71,59,7;0]]_{9}^*\\
  
1&[96,8]_{11^2}& [[96,87,9;7]]_{11} & [[95,87,8;6]]_{11}^*& \text{Theorem \ref{thm:Q-MDS} 3)}\\

2&& [[96,86,9;6]]_{11} & [[95,86,8;5]]_{11}^*\\

3&& [[96,85,9;5]]_{11} & [[95,85,8;4]]_{11}^*\\

4&& [[96,84,9;4]]_{11} & [[95,84,8;3]]_{11}^*\\

5&& [[96,83,9;3]]_{11} & [[95,83,8;2]]_{11}^*\\

6& &[[96,82,9;2]]_{11} &[[95,82,8;1]]_{11}^*\\

7&& [[96,81,9;1]]_{11} & [[95,81,8;0]]_{11}^*\\

1&[ 16, 3 ]_{4^2}&[[ 16, 7, 6; 1 ]]_{4}^*&&\text{Theorem \ref{thm:Q-family-new-mds} 2)} \\

1&[ 25, 4 ]_{5^2}&[[ 25, 14, 7; 1 ]]_{5}^*&\\

1&[ 49, 6 ]_{7^2}&[[ 49, 34, 9; 1 ]]_{7}^*&\\

1&[ 64, 7 ]_{8^2}&[[ 64, 47, 10; 1 ]]_{8}^*&\\

1&[ 81, 8 ]_{9^2}&[[ 81, 62, 11; 1 ]]_{9}^*&\\

1&[ 121, 10 ]_{11^2}&][ 121, 98, 13; 1 ]]_{11}^*&\\

\end{array}
$$
}
\label{table:mds1}
\end{table}

\begin{table}
\centering
\caption{Some new MDS and MDS EAQECCs from Theorem \ref{thm:Q-MDS} 3) with $s=0,1$ and Theorem \ref{thm:Q-AMDS} with $i=2$, $^*$: new parameters in \cite{FangFuLiZhu}
}
{\tiny
$$
\begin{array}{cccccc}
\ell &[n,k]_{q^2}&\text{MDS EAQECCs}&\text{MDS EAQECCs}&\text{AMDS EAQECCs}\\
\hline
\hline

1&[16,3]_{4^2}&[[ 16, 12, 4 ; 2 ]]_4&[[ 15, 12, 3; 1 ]]_4^*&[[ 17, 11, 5 ; 4 ]]_4^*\\

2&&[[ 16, 11, 4 ; 1 ]]_4&[[ 15, 11, 3 ; 0 ]]_4^*&[[ 17, 10, 5 ; 3 ]]_4^*\\

1&[25,4]_{5^2}&[[ 25, 20, 5 ; 3 ]]_5&[[ 24, 20, 4 ; 2 ]]_5^*&[[ 26, 19, 6 ; 5 ]]_5^*\\

2&&[[ 25, 19, 5 ; 2 ]]_5&[[ 24, 19, 4 ; 1 ]]_5^*&[[ 26, 18, 6 ; 4 ]]_5^*\\

3&&[[ 25, 18, 5 ; 1 ]]_5&[[ 24, 18, 4 ; 0 ]]_5^*&[[ 26, 17, 6 ; 3 ]]_5^*\\

1&[49,6]_{7^2}&[[ 49, 42, 7 ; 5 ]]_7&[[ 48, 42, 6 ; 4]]_7^*&[[ 50, 42, \ge 8 ; 6 ]]_7^*\\

2&&[[ 49, 41, 7 ; 4 ]]_7&[[ 48, 41, 6 ; 3 ]]_7^*&[[ 50, 40, \ge8 ; 6 ]]_7^*\\

3&&[[ 49, 40, 7 ; 3 ]]_7&[[ 48, 40, 6; 2 ]]_7^*&[[ 50, 39, \ge8 ; 5 ]]_7^*\\

4&&[[ 49, 39, 7 ; 2 ]]_7&[[ 48, 39, 6 ; 1 ]]_7^*&[[ 50, 38, \ge8 ; 4 ]]_7^*\\

5&&[[ 49, 38, 7 ; 1 ]]_7&[[ 48, 38, 6 ; 0 ]]_7^*&[[ 50, 37, \ge8 ; 3 ]]_7^*\\

1&[64,7]_{8^2}&[[ 64, 56, 8 ; 6 ]]_8&[[ 63, 56, 7 ; 5 ]]_8^*&[[ 65, 55, \ge9 ; 8 ]]_8^*\\

2&&[[ 64, 55, 8 ; 5 ]]_8&[[ 63, 55, 87; 4 ]]_8^*&[[ 65, 54, \ge9 ; 7 ]]_8^*\\

3&&[[ 64, 54, 8 ; 4 ]]_8&[[ 63, 54, 7 ; 3 ]]_8^*&[[ 65, 53, \ge9 ; 6 ]]_8^*\\

4&&[[ 64, 53, 8 ; 3 ]]_8&[[ 63, 53, 7 ; 2 ]]_8^*&[[ 65, 52, \ge9 ; 5 ]]_8^*\\

5&&[[ 64, 52, 8 ; 2 ]]_8&[[ 63, 52, 7 ; 1 ]]_8^*&[[ 65, 51, \ge9 ; 4 ]]_8^*\\

6&&[[ 64, 51, 8 ; 1 ]]_8&[[ 63, 51, 7 ; 0 ]]_8^*&[[ 65, 50, \ge9 ; 3 ]]_8^*\\   

1&[81,8]_{9^2}&[[ 81, 72, 9 ; 7 ]]_{9}&[[ 80, 72, 8 ; 6 ]]_{9}^*&[[ 82, 71, \ge10 ; 9 ]]_{9}^*\\

2&&[[ 81, 71, 9 ; 6 ]]_{9}&[[ 80, 71, 8 ; 5 ]]_{9}^*&[[ 82, 70, \ge10 ; 8 ]]_{9}^*\\

3&&[[ 81, 70, 9 ; 5 ]]_{9}&[[ 80, 70, 8 ; 4 ]]_{9}^*&[[ 82, 69, \ge10 ; 7 ]]_{9}^*\\

4&&[[ 81, 69, 9 ; 4 ]]_{9}&[[ 80, 69, 8; 3 ]]_{9}^*&[[ 82, 68, \ge10 ; 6 ]]_{9}^*\\

5&&[[ 81, 68, 9 ; 3 ]]_{9}&[[ 80, 68, 8 ; 2 ]]_{9}^*&[[ 82, 67, \ge10 ; 5 ]]_{9}^*\\

6&&[[ 81, 67, 9 ; 2 ]]_{9}&[[ 80, 67, 8 ; 1 ]]_{9}^*&[[ 82, 66, \ge10 ; 4 ]]_{9}^*\\

7&&[[ 81, 66, 9 ; 1 ]]_{9}&[[ 80, 66, 8 ; 0 ]]_{9}^*&[[ 82, 65, \ge10 ; 3 ]]_{9}^*\\

1&[121,10]_{11^2}&[[ 121, 110, 11 ; 9 ]]_{11}&[[ 120, 110, 10 ; 8 ]]_{11}^*&[[ 122, 109, \ge12 ; 11 ]]_{11}^*\\

2&&[[ 121, 109, 11 ; 8 ]]_{11}&[[ 120, 109, 10 ; 7 ]]_{11}^*&[[ 122, 108, \ge12 ; 10 ]]_{11}^*\\

3&&[[ 121, 108, 11 ; 7 ]]_{11}&[[ 120, 108, 10 ; 6 ]]_{11}^*&[[ 122, 107, \ge12 ; 9 ]]_{11}^*\\

4&&[[ 121, 107, 11 ; 6 ]]_{11}&[[ 120, 107, 10 ; 5 ]]_{11}^*&[[ 122, 106, \ge12 ; 8 ]]_{11}^*\\

5&&[[ 121, 106, 11 ; 5 ]]_{11}&[[ 120, 106, 10 ; 4 ]]_{11}^*&[[ 122, 105, \ge12 ; 7]]_{11}^*\\

6&&[[ 121, 105, 11 ; 4 ]]_{11}&[[ 120, 105, 10 ; 3 ]]_{11}^*&[[ 122, 104, \ge12 ; 6 ]]_{11}^*\\

7&&[[ 121, 104, 11 ; 3 ]]_{11}&[[ 120, 104, 10 ; 2 ]]_{11}^*&[[ 122, 103, \ge12 ; 5 ]]_{11}^*\\

8&&[[ 121, 103, 11 ; 2 ]]_{11}&[[ 120, 103, 10 ; 1 ]]_{11}^*&[[ 122, 102, 1\ge2 ; 4 ]]_{11}^*\\

9&&[[ 121, 102, 11 ; 1 ]]_{11}&[[ 120, 102, 10 ; 0 ]]_{11}^*&[[ 122, 101, \ge12 ; 3 ]]_{11}^*\\

\end{array}
$$
}
\label{table:mds2}
\end{table}

\begin{table}
\centering
\caption{Some new EAQECCs from Theorem \ref{thm:Q-elliptic} and Theorem \ref{thm:Q-hyper-elliptic}, $^*$: new parameters
}
{\tiny
$$
\begin{array}{cccccc}
\ell &[n,k]_{q^2}&\text{EAQECCs}&\text{Punctured EAQECCs}&\text{Construction}\\
\hline
\hline

1&[32,6,25]_{4^2}& [[32,5,25;25]]_4^*& [[31,4,\ge 24;25]]_4^*&\text{Theorems \ref{thm:Q-hyper-elliptic}, \ref{thm:Q-punctured}}\\

2&&[[32,4,25;24]]_4^*&[[31,3,\ge 24;24]]_4^*&\\

3&&[[32,3,25;23]]_4^*&[[31,2,\ge24;23]]_4^*&\\

4&&[[32,2,25;22]]_4^*&[[31,1,\ge24;22]]_4^*&\\

5&&[[32,1,25;21]]_4^*&[[31,0,\ge24;21]]_4^*&\\

1&[80, 9, 71]_{8^2}&[[80, 8, 71;70]]_8^*&[[79, 7, \ge70;70]]_8^*&\text{Theorems \ref{thm:Q-elliptic},
\ref{thm:Q-punctured}}\\

2&&[[80, 7, 71;69]]_8^*&[[79, 6, \ge70;69]]_8^*&\\

3&&[[80, 6, 71;68]]_8^*&[[79, 5, \ge70;68]]_8^*&\\

4&&[[80, 5, 71;67]]_8^*&[[79, 4, \ge70;67]]_8^*&\\

5&&[[80, 4, 71;66]]_8^*&[[79, 3, \ge70;66]]_8^*&\\

6&&[[80, 3, 71;65]]_8^*&[[79, 2, \ge70;65]]_8^*&\\

7&&[[80, 2, 71;64]]_8^*&[[79, 1, \ge70;64]]_8^*&\\

8&&[[80, 1, 71;63]]_8^*&[[79, 0, \ge70;63]]_8^*&\\

1&[128, 12, 113]_{8^2}&[[128, 11, 113;115]]_8^*&[[127, 10, \ge112;115]]_8^*&\text{Theorem \ref{thm:Q-hyper-elliptic}, \ref{thm:Q-punctured}}\\

2&&[[128, 10, 113;114]]_8^*&[[127, 9, \ge112;114]]_8^*&\\

3&&[[128, 9, 113;113]]_8^*&[[127, 8, \ge112;113]]_8^*&\\

4&&[[128, 8, 113;112]]_8^*&[[127, 7, \ge111;112]]_8^*&\\

5&&[[128, 7, 113;111]]_8^*&[[127, 6, \ge112;111]]_8^*&\\

6&&[[128, 6, 113;110]]_8^*&[[127, 5, \ge112;110]]_8^*&\\

7&&[[128, 5, 113;109]]_8^*&[[127, 4, \ge112;109]]_8^*&\\

8&&[[128, 4, 113;108]]_8^*&[[127, 3, \ge112;108]]_8^*&\\

9&&[[128, 3, 113;107]]_8^*&[[127, 2, \ge112;107]]_8^*&\\

10&&[[128, 2, 113;106]]_8^*&[[127, 1, \ge112;106]]_8^*&\\

11&&[[128, 1, 113;105]]_8^*&[[127, 0, \ge112;105]]_8^*&\\

\end{array}
$$
}
\label{table:non-mds}
\end{table}

%\begin{IEEEbiographynophoto}
%{Lin Sok} received the B.Sc. degree in Mathematics from the Royal University of Phnom Penh, Phnom Penh, Cambodia, in 2003, the M.Sc. degree in Discrete Mathematics and Fundamental of Computer Sciences from the Universit\'e de la M\'editerran\'ee-Aix Marseille II, Marseille, France, in 2008 and the Ph.D. degree in Communications and Electronic from Telecom ParisTech, Paris, France, in 2013. From 2013 to 2014, he was a postdoctoral fellow at the Department of Communications and Electronic, Telecom ParisTech. He was a postdoctoral fellow at School of Mathematical Sciences, Anhui University, China from 2016 to 2018 and became an assistant professor since 2018. He was an assistant professor at the Department of Mathematics, Royal University of Phnom Penh, Cambodia from 2014 to 2018. His research interests include Algebraic Coding Theory (codes over rings, algebraic geometry codes, lattice codes) and Cryptography (Boolean functions).
%\end{IEEEbiographynophoto}

\end{document}